\newtheorem{theorem}{Theorem}
\newtheorem{lemma}[theorem]{Lemma}
\theoremstyle{plain}
\newtheorem{Def}{Definition}
\def\qed{$\Box$ \smallskip}
\begin{document}
%\title{Disjoint-Path Facility Location:  Theory and Practice}
\title{Near-Optimal Disjoint-Path Facility Location Through\\ Set Cover by Pairs
\footnote{Copyright 2016 AT\&T Intellectual Property. All rights reserved.}}

\author{
	\sc David S. Johnson
	\thanks {Died March 8, 2016.
	This work was performed when the author was an employee of AT\&T Labs-Research.
	}
	\and
	\sc Lee Breslau
	\thanks {AT\&T Labs-Research, 1 AT\&T Way,
        Bedminster, NJ 07921.
	}
	\and
	\sc Ilias Diakonikolas
	\thanks{School of Informatics, University of Edinburgh,
	10 Crichton Street, Edinburgh EH8 9AB, Scotland.  This work was
	performed when the author was a student at Columbia University,
	and was partially supported by AT\&T.
	}
	\and
	\sc Nick Duffield
	\thanks {Department of Electrical and Computer Engineering, Texas A\&M
	University, College Station, TX 77843.
	This work was performed when the author was an employee of AT\&T Labs-Research.
	}
	\and
	\sc Yu Gu
	\thanks{Amazon Web Services, 1200 12th Avenue South, Seattle, WA 98144.
	This work was
	performed when the author was a student at the University of Massachusetts,
	and was partially supported by AT\&T.
	}
	\and
	\sc MohammadTaghi Hajiaghayi
	\thanks {Computer Science Department, University of Maryland,
        A.V. Williams Bldg., Room 3249,
        College Park, MD 20742.
	This work was performed when the author was an employee of AT\&T Labs-Research.
	}
	\and
	\sc Howard Karloff
	\thanks {
	This work was performed when the author was an employee of AT\&T Labs-Research.
	}
	\and
	\sc Mauricio G. C. Resende
	\thanks {Amazon.com, 300 Boren Avenue North, Seattle, WA 98109.
	This work was performed when the author was an employee of AT\&T Labs-Research.
	}
	\and
	\sc Subhabrata Sen$\ ^\ddag$
}

\date{\today}
\maketitle

\sloppy
\def\opt{\textsc{Opt}}
\def\greedy{\textsc{Greedy}}
\def\gr4{\textsc{Greedy(400)}}
\def\genetic{\textsc{Genetic}}
\def\dhs{\textsc{DHS}}
\def\hh{\textsc{DHS}}
\def\shs{\textsc{SHS}}
\def\mip{\textsc{MIP}}
\def\lb{\textsc{HSLB}}
\def\dsj{\textsc{DP}}

\vspace{-.4in}
\begin{abstract}

In this paper we consider two special cases of the ``cover-by-pairs'' optimization
problem that arise when we need to place facilities so that each customer is
served by two facilities that reach it by disjoint shortest paths.
These problems arise in a network traffic monitoring scheme proposed by
Breslau et al.\ and have potential applications to content distribution.
The ``set-disjoint'' variant
applies to networks that use the OSPF routing protocol, and the ``path-disjoint''
variant applies when MPLS routing is enabled, making better solutions possible
at the cost of greater operational expense.
Although we can prove that no polynomial-time
algorithm can guarantee good solutions for either version,
we are able to provide heuristics that do very well in practice on instances with
real-world network structure.
Fast implementations of the heuristics, made possible by exploiting
mathematical observations about the relationship between the network instances
and the corresponding instances of the cover-by-pairs problem,
allow us to perform an extensive experimental evaluation of the heuristics and what the
solutions they produce tell us about the effectiveness of the proposed monitoring scheme.
For the set-disjoint variant, we validate our claim of near-optimality
via a new lower-bounding integer programming formulation.
Although computing this lower bound requires solving the NP-hard Hitting Set problem
and can underestimate the optimal value by a linear factor in the worst case,
it can be computed quickly by CPLEX, and it equals the optimal solution value
for all the instances in our extensive testbed.

\bigskip
\bigskip

\end{abstract}

% \begin{keywords}
% network measurement;
% set cover;
% heuristics;
% path diversity;
% routing;
% network tomography
% \end{keywords}
% \setcounter{page}{0} % title page should be page 0

\vspace{-.25in}
\section{Introduction}\label{section:introduction}

This paper studies two new facility location problems relevant to questions of
Internet traffic monitoring and content distribution.
These problems differ from their more standard predecessors in that
each customer must be served by two facilities rather than one.
In addition, the service routes must be shortest paths and vertex-disjoint.

More specifically, suppose we are given a network modeled
as an arc-weighted, strongly connected directed graph $G=(V,A)$,
together with sets $C \subseteq F \subseteq V$,
where $C$ is the set of customer locations and $F$ is the set of
potential facility locations.
For each pair $(c,f)$, $c\in C$ and $f \in F$,
let $P(c,f)$ be the set of all shortest paths from $c$ to $f$ in $G$, which
is nonempty by our strong connectivity assumption (and would be so for
typical computer networks).

\begin{Def}
Suppose $c \in C$ is a customer location and $\{f_1,f_2\}$ is a pair of
potential facility locations in $F - \{c\}$.
Then $\{f_1,f_2\}$ {\em covers $c$ in a
pathwise-disjoint fashion} if there exist paths $p_1 \in P(c,f_1)$
and $p_2 \in P(c,f_2)$ that have no common vertex except $c$.
Such a pair {\em covers $c$ in a setwise-disjoint fashion} if {\em no} path
in $P(c,f_1)$ shares a vertex (other than $c$) with {\em any} path
in $P(c,f_2)$.
\end{Def}

\begin{Def}
A subset $F' \subseteq F$ is called a {\em pathwise-disjoint}
(respectively, {\em setwise-disjoint}) {\em cover} for $C$ if, for every $c\in C-F'$, there
is a pair $\{f_1,f_2\} \subseteq F'-\{c\}$ such that
$\{f_1,f_2\}$ covers $c$ in a pathwise-disjoint (respectively,
setwise-disjoint) fashion.
(Note that if $c \in F'$, we are assuming that $c$ covers itself, and hence the
set $C$ is a valid cover for itself.)
\end{Def}

The two problems we study are defined as follows:

\begin{Def}
In {\sc Pathwise-Disjoint Facility Location} {\sc (PDFL)}, we are given
$G$, $C$, $F$, and asked to find a pathwise-disjoint
cover of minimum size for $C$.
{\sc Setwise-Disjoint Facility Location} {\sc (SDFL)} is the same problem
except that the cover must be setwise-disjoint.
\end{Def}

The {\sc Pathwise-} and {\sc Setwise-Disjoint Facility Location}
problems arise in a variety of networking contexts.
Our primary motivation for studying them comes from a scheme proposed
in \citet{GBDS-loss08} for active monitoring of end-to-end network
performance, which we shall describe in Section \ref{section:applications}.
However, both variants have a simple alternative motivation
in terms of an idealized content distribution
problem, which we shall use to help motivate the definitions.
Suppose we wish to distribute data, such as video-on-demand,
over a network that connects our service hubs but does not provide a rapid method for
repairing link or vertex failures.
Suppose further that the service interruptions caused by such failures
would be costly to us, and that we want our distribution process to be
relatively robust against them.
A common standard of robustness is immunity to any single vertex
or link failure (as for instance might result from an accidental cable cut).
To guarantee such resilience, we would need to place multiple copies of our
data source in the network, but because of the costs of hosting such copies,
we would like to minimize the number of such hosting sites that we deploy,
rather than placing a copy of the data at each service hub.

{\sc Pathwise-Disjoint Facility Location} models this application as follows.
The network $G = (V,A)$ is the underlying fiber network linking various service hubs,
with the directions of the arcs all reversed, so that $P(c,f)$ is the set of all
shortest paths from $f$ to $c$ in the original network, rather than those from $c$ to $f$.
The set $C$ of customer locations is the set of service hubs that need access to the data.
The set $F$ of facility locations is the set of potential sites where the
data can be hosted, which we assume includes the service hubs and possibly other
network vertices.
If we assume the standard Internet default that
shortest paths should be used for routing,
the sets $P(c,f)$ now correspond to the paths in the fiber network
over which we can route content from facility $f$ to customer $c \neq f$. 
If we further assume that link capacity is not an issue, then
the pathwise-disjoint cover of minimum size for $C$ represents the
minimum-cost choice of hosting locations for our data, subject to
the constraint that no single vertex or link failure can disconnect a
(nonfailed) service hub from all the data sources.

{\sc Setwise-Disjoint Facility Location} models the variant of this application
in which we do not have control over the routing, but instead must rely on
the network to do our routing for us.
Many Internet Service Providers (ISP's)
route packets within their networks using a shortest-path
protocol such as OSPF or IS-IS.
In such protocols, packets must be routed along shortest paths, where
the weight (length) of an arc is set by the network managers so as to
balance traffic and optimize other performance metrics.
If there is more than one shortest path leaving a given router for a given
destination, then the
traffic is split evenly between the alternatives.
This can be of further help in balancing traffic, and so traffic
engineers may specifically set weights that yield multiple shortest paths
between key routers.
The actual splitting is performed based on computing hash functions
of entries in a packet's header (such as the source and destination IP addresses).
These functions are themselves randomly chosen, are subject to change
at short notice, and are typically not available to us.
Thus when there are multiple shortest paths, although contemporaneous
packets from a given router to the same destination are likely to follow
the same path, the actual route chosen may not be readily predictable.
All we know is that it must be a member of the set $P(c,f)$ of all shortest
paths from $c$ to $f$.
This means that the only way to guarantee vertex-disjoint paths
to a customer $c$ from two facility locations $f$ and $f'$
is to restrict attention to pairs $(f,f')$ such that the corresponding
shortest path sets intersect only in $c$, and consequently our problem becomes
a {\sc Setwise-Disjoint Facility Location} problem.

\medskip
In this paper, we analyze the complexity of the PDFL and SDFL problems
and propose and test algorithms for them.
A first observation is that both problems can be viewed as special
cases of {\sc Set Cover By Pairs} (SCP), first described in
\citet{HS05}.

\medskip
\noindent {\sc Set Cover By Pairs} (SCP):  
Given a ground set $U$ of elements, a set $S$ of {\em cover objects},
and a set $T$ of triples $(u,s,t)$, where $u \in U$ and $s,t \in S$,
find a minimum-cardinality covering subset $S' \subseteq S$ for $U$, where $S'$
{\em covers} $U$ if for each $u \in U$, there are $s,t \in S'$ such
that $(u,s,t) \in T$.
\medskip

PDFL and SDFL can be formulated as SCP
by taking $U = C$, $S = F$, and

$$
T  = \{(c,c,c): c \in C \} \ \cup\ \mbox{\huge \{}
\begin{array}{ll}
(c,f_1,f_2): & c \notin \{f_1,f_2\} \mbox{ and } \{f_1,f_2\} \mbox{ covers } c \mbox{ in a }\\
& \mbox{ pathwise-disjoint (setwise-disjoint) fashion}
\end{array}
\mbox{\huge \}.}
$$

We prove, subject to a complexity
assumption, that no polynomial-time algorithm can approximate
SCP to within a factor which is $2^{\log^{1-\epsilon} n}$
for any $\epsilon>0$.
The best previous hardness bound for SCP was just {\sc Set Cover}-hardness
\citep{HS05}, which implies that no $o(\log n)$ approximation algorithm can exist
unless P = NP \citep{RS97}.
We then show that SDFL is just as hard to approximate as SCP, and that PDFL is at least
as hard to approximate as {\sc Set Cover}.

These complexity results (assuming their widely-believed
hypotheses) rule out both polynomial-time heuristics that are guaranteed to find good
covers and the existence of good lower bounds on the
optimal cover size that can be computed in polynomial time.
Nevertheless, there still may exist algorithms and lower bounds that are useful
``in practice.''
In this paper we describe and experimentally evaluate our candidates for both.

We test four main heuristics.
Each uses as a subroutine a standard
randomized greedy heuristic (\greedy) that actually solves
the general SCP problem.
The first of our main algorithms, {\gr4}, is the variant of {\greedy}
that performs 400 randomized runs and returns the best solution found.
The second is a genetic
algorithm ({\genetic}) that uses {\greedy} as a subroutine.
The third and fourth, {\em Single Hitting Set} (\shs) and {\em Double
Hitting Set} (DHS), apply only in the Setwise-Disjoint case,
exploiting the graph structure in ways that are unavailable to us in 
the path-disjoint case.

The quality of the solutions that the heuristics produce can, for small
instances, be evaluated by applying a commercial optimization package
(CPLEX\texttrademark\, Version 11 in our case)
to an integer programming formulation of the derived instances of {\sc SCP}.
This is usually feasible when $|F|\le 150$, although running times grow
dramatically with graph size.
For the set-disjoint case we
introduce a new lower bound that exploits the graphical nature of our problem to
create an instance of the {\sc Hitting Set} problem (a dual to {\sc Set Cover})
whose optimal solution value can be no greater than the optimal solution value for our problem.
Although this lower bound can underestimate the optimal by a linear factor in the worst case,
and its computation requires solving an NP-hard problem, it turns out to be easily
computed using CPLEX and to give surprisingly good bounds on all our test instances.
In fact, it yielded the optimal solution value for all of them, since for each instance
at least one of our heuristics produced a solution whose value matched the bound.
Moreover, when restricted to our test instances with optimized OSPF weights, the
Hitting Set solution itself was a feasible (and hence optimal) solution to our
original problem.
For the Path-Disjoint case, our only lower bound is the comparatively weak one of
considering the variant of the problem where we drop the requirement that the paths
be shortest paths.  The optimum for the resulting problem can be computed in linear time,
but is substantially below the optimal PDFL solution, where the latter can be computed.
However, for these instances,
it at least shows us the penalty imposed by our restriction to shortest paths.
For larger instances, we can use
our {\genetic} algorithm to provide some idea of how much we might be able to 
improve on {\gr4} if time were not an issue.

A final algorithmic challenge was that of constructing the derived
{\sc SCP} instances needed by all our heuristics.
This involves exploiting shortest path graphs to determine the (often quite large)
sets of relevant triples.
Significant algorithmic ingenuity is needed to prevent this computation
from being a major bottleneck, and we will describe the mathematical observations
and algorithmic techniques that make this possible.

\subsection{Outline}
The remainder of the paper is organized as follows:
In Section \ref{section:applications} we describe the network monitoring application
that motivated our study.
In Section \ref{section:complexity} we present our complexity results.
In Section \ref{section:lbs} we present the lower bounds we use for evaluating
our cover-creating heuristics, and our algorithms for computing them.  This
includes the integer programming formulations for computing the true optima for
our problems from their Cover-by-Pairs formulations.
In Section \ref{section:heuristics} we describe the heuristics we have devised,
as well as the (nontrivial) algorithms we use to convert our problems to their
Cover-by-Pairs formulations.
Our test instances are described in Section \ref{section:instances},
and our experiments and their results are summarized in Section
\ref{section:experiments}.
We conclude in Section \ref{section:further} with a discussion of
further research directions, including preliminary
results for significant variants on our problems, such as the cases where
not all members of $C$ need be in $F$, and where facility locations have costs
and we wish to find a minimum-cost cover.

\subsection{Related Work}
This paper is, in part, the journal version of a conference paper \citep{BDD11}.
However, it contains substantial amounts of new material, including the
new SDFL lower bound and related algorithms, a faster version of the Genetic
algorithm, details of the omitted proofs, key implementation
details of our algorithms,
and more detailed experimental results and analysis, including an expansion of the
experimental testbed to contain significantly larger synthetic and real-world
instances.
The only previous work on {\sc Set Cover by Pairs}, as far as we know,
is that of \citet{HS05}, which is theoretical rather
than experimental.
That paper considers two 
applications that were significantly different from those introduced here,
and, from a worst-case point of view, much easier to approximate.
The paper also introduces a variant of the Greedy algorithm studied here
for the general {\sc SCP} problem and analyzes its (poor) worst-case behavior.

\section{Our Motivating Application: Host Placement for End-to-End Monitoring}\label{section:applications}
In this section, we describe the monitoring application of \citet{GBDS-loss08}
that motivated this paper, an application
that is more realistic than the content distribution application mentioned
in the previous section, but also more complicated.  We describe it in some
detail here to better motivate our study, and also to present a key new lemma
that is actually needed to guarantee that the proposed monitoring scheme provides
valid results.

Suppose we are an Internet Service Provider (ISP) and provide ``virtual
private network'' (VPN) service to some of our customers.
In such a service, we agree to send traffic between various locations
specified by the customer, promising to provide a certain level of
service on the connections, but not specifying the actual route the
packets will take.
(The actual routing will be done so as to optimize the utilization
of our network, subject to the promised levels of service.)
Our network is a digraph $G = (V,A)$, in which the vertices correspond to
routers and the arcs to the links between routers.
A key service quality metric is packet loss rate
(the fraction of packets on a path that fail to reach their destination).
Let $p(r_1,r_2)$ denote the probability that a packet sent from router
$r_1$ to router $r_2$ will successfully arrive.
Our goal is to obtain estimates for $p(r_i,r_j)$ for a collection of
customer paths $P_{r_i,r_j}$.
%[@@ Should we be using parentheses or angled brackets?]
Note that, in contrast to our content distribution application, we here do
not worry about links' failing (which would cause re-routing), but merely about
their underperforming.

One way to measure the loss rate on the path in our network
from router $r_1$ to router $r_2$ is to attach extra equipment to
the routers, use the equipment at $r_1$ to send special measurement packets
to $r_2$, and use the equipment at $r_2$ to count how many of
the packets arrive.
If $N$ packets are sent and $N'$ arrive, then $N'/N$ should be
a good estimate for $p(r_1,r_2)$, assuming $N$ is sufficiently large.
Unfortunately, the process of authorizing, installing, and maintaining
the extra equipment can be time-consuming and expensive.
Thus, this scheme may not be practical in a large network with hundreds
or thousands of distinct path endpoints.
% Moreover, it may
% be impossible to install the needed equipment at some of the endpoints
% due to lack of space or other concerns.
% For these reasons, the authors of \cite{GBDS-loss08} proposed an
For this reason, \citet{GBDS-loss08} proposed an
alternative scheme that
% can adapt to the fact that certain vertices cannot
% house the needed equipment, and also
may yield a substantial
reduction in the total amount of monitoring equipment needed.

\begin{figure}[h]
\vspace{-.20in}
\begin{center}
\centerline{\includegraphics[width=3.5in]{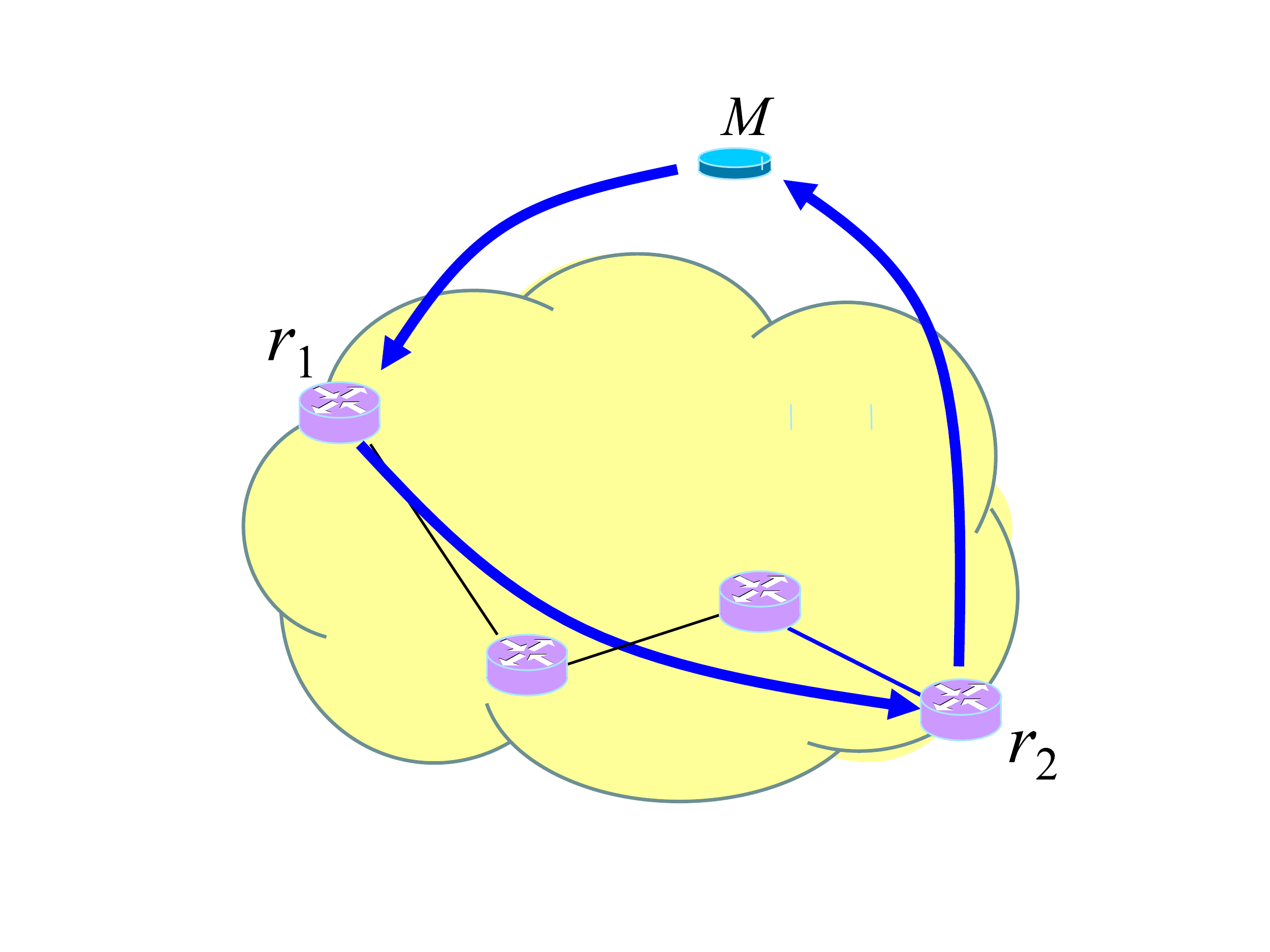}}
\vspace{-.25in}
\caption{The centralized monitoring scheme of \citet{GBDS-loss08}.\label{fig:scheme1}}
\end{center}
\vspace{-.25in}
\end{figure}

In this new scheme, all the monitoring is initiated from a single special
measurement vertex $M$, as originally proposed in 
\citet{burch:sigm05} and
\citet{breslau:inm06}.
See Figure \ref{fig:scheme1}.
To measure loss on the path from vertex $r_1$ to vertex $r_2$,
the equipment at $M$ sends a packet on a circular path that first
goes from $M$ to $r_1$ (the {\em hop-on} path),
then traverses the path from $r_1$ to $r_2$,
and finally returns from $r_2$ to $M$ (the {\em hop-off} path).
Let us make the following assumptions:
\begin{enumerate}
\item  Packets are only dropped by arcs, not by vertices.
(This is a close approximation to reality in modern-day networks,
where an arc models the system consisting of the physical wire/fiber
connecting its endpoints, together with the line card at each of its ends.)
\item The three paths $P_{M,r_1}$, $P_{r_1,r_2}$,
and $P_{r_2,M}$ are pairwise arc-disjoint.
(As we shall show below, this will typically be true under
shortest-path routing.)
\item Loss rates on different arcs are independent of each other.
(This is somewhat less realistic, but is approximately true except
in heavily-loaded networks.)
\end{enumerate}
Then if $N$ packets are sent on the circular path $P_{M,r_1,r_2,M}$,
the expected number $N'$ of packets successfully
making the roundtrip will be $N' = Np(M,r_1)p(r_1,r_2)p(r_2,M)$.
Thus if we measure $N'$ and have good estimates for $p(M,r_1)$ and $p(r_2,M)$,
we will have the estimate
$$
p(r_1,r_2) = \frac{N'/N}{p(M,r_1)p(r_2,M)}\:.
$$
Thus we have reduced the problem of measuring the loss rates for a
collection of paths between arbitrary vertices to that of measuring the loss rates
on a collection of round-trip paths and estimating the
loss rates for a collection of hop-on and hop-off paths,
all of which either begin or end at $M$.

\citet{breslau:inm06} proposed that these loss rates for a given
path endpoint $r$ be estimated
by sending packets along an $(M,r,M)$ circuit and, if, here, $N$ packets were sent
and $N'$ received, concluding that $p(M,r) = p(r,M) = \sqrt{N'/N}$.
Unfortunately, this assumes that Internet performance is symmetric, which
it definitely is not.
A quite accurate way to measure the loss rates would of course be
to put equipment at both ends of each of the hop-on and hop-off paths,
but this method would
require installing equipment at just as many routers as in the original
scheme for measuring the $P_{r_1,r_2}$ paths directly -- indeed at
one more vertex, since now we need equipment at $M$.
{\sc Setwise-} and {\sc Pathwise-Disjoint Facility Location}
arise in the context of a ``tomographic'' method proposed by
\citet{GBDS-loss08} for estimating loss rates on hop-on and hop-off paths in a
potentially much more efficient fashion.

\begin{figure}
\begin{center}
\vspace{-.35in}
\centerline{\includegraphics[width=3.5in]{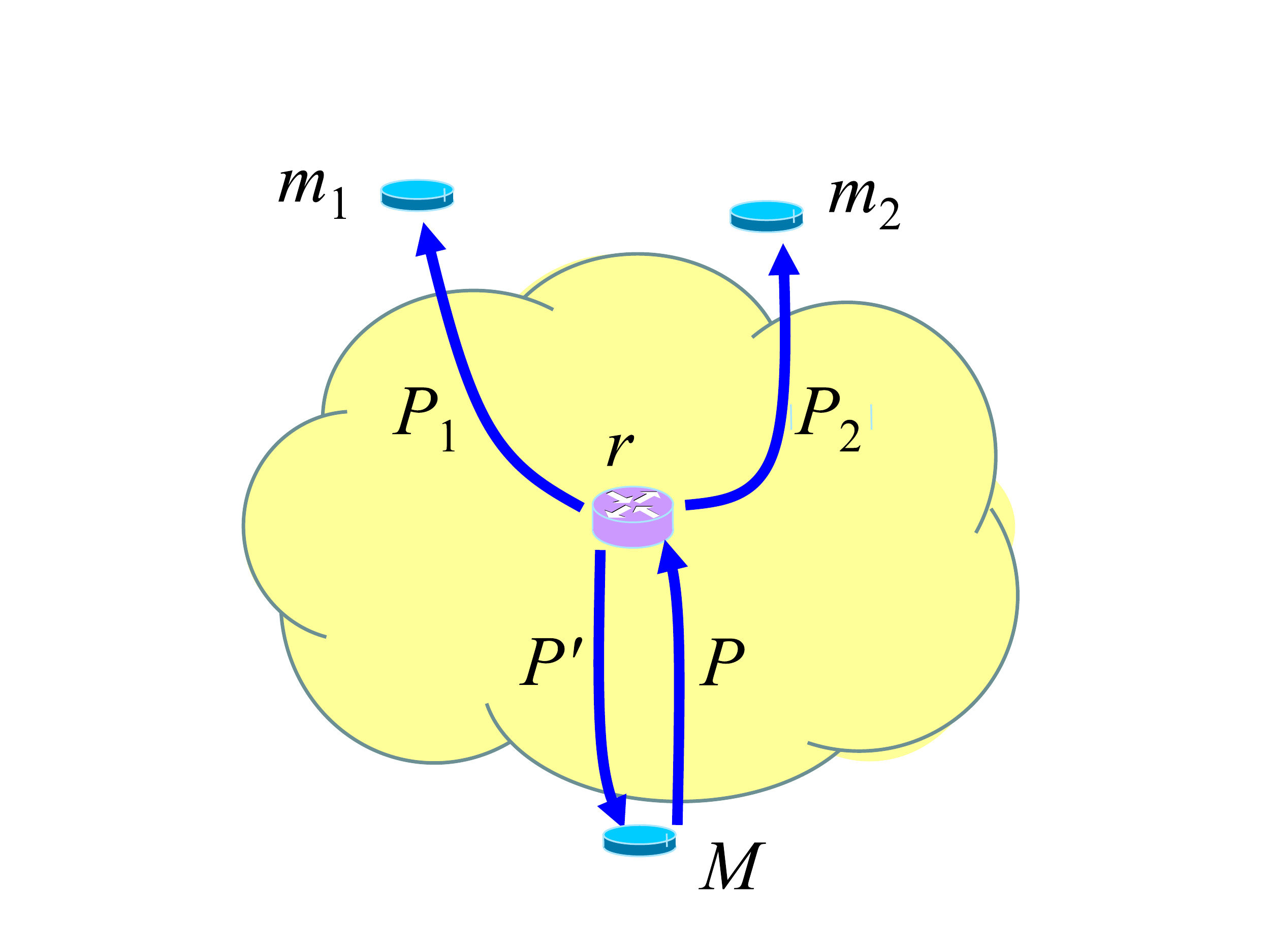}}
\vspace{-.1in}
\caption{Scheme of \citet{GBDS-loss08}
for measuring loss rate of hop-on and hop-off paths.\label{fig:scheme1a}}
\end{center}
\vspace{-.25in}
\end{figure}

In terms of the facility location problems, the set $C$ of ``customer'' vertices will
consist of the endpoints of the original paths whose loss rates we wish
to estimate.
The set $F$ of ``facility locations'' will be these plus those additional vertices
that are capable of hosting monitoring equipment.
In this context, we will call $F$ the set of
(potential) monitoring vertices.
Assuming as before that we are in a shortest-path routing regime
such as OSPF or IS-IS, the set $P(r,m)$
is the set of all legal routes from $r$ to $m$.
% Given a path endpoint $r$ and a distinct monitoring vertex $m$, the set $P(r,m)$
% consists of all paths along which we may legally route packets from $r$ to $m$
% in our network, as currently configured.
% (If $r$ is itself a monitoring vertex, then $P(r,r)$ is the singleton
% set consisting only of the zero-length path from $r$ to itself.)

We have already observed that, if we install equipment at $r$ itself,
it is straightforward to estimate the loss rates $p(M,r)$
(for the hop-on path to $r$)
and $p(r,M)$ (for the hop-off path from $r$) --
simply send packets between $M$ and $r$ and counting the number
of them that successfully arrive at their destinations.
Suppose $r$ is a path endpoint without equipment and $(m_1,m_2)$ is a pair of monitoring vertices
that cover $r$ in a pathwise-disjoint fashion.
We will now explain how, by installing monitoring equipment
at $m_1$ and $m_2$, we can estimate both loss rates.
See Figure \ref{fig:scheme1a}.
Assuming we are allowed to specify the routing paths from $r$ to
$m_1$ and $m_2$, the fact that $m_1$ and $m_2$ cover $r$ in a
pathwise-disjoint
fashion means that we can pick legal routing paths $P_1$ and $P_2$ from $r$ to
$m_1$ and $m_2$, respectively, that are vertex-disjoint except
for $r$ (and hence arc-disjoint).
Moreover, as we shall see, we also have that the two paths $P_1$ and $P_2$
are arc-disjoint from the path $P$ from $M$ to $r$, which itself is arc-disjoint from
the path $P'$ from $r$ to $M$ (under reasonable assumptions).

This is a consequence of the following lemma, whose additional
assumptions have to do with the arc weights used by OSPF and IS-IS 
in their shortest path computations.
These weights are set by traffic engineers to help balance
traffic loads and normally obey certain restrictions.
First, they are positive integers.
Second, in practice networks are typically {\em symmetric} directed graphs, in
that the digraph contains an arc $(a,b)$, then it must also contain
arc $(b,a)$, and we assume that the weights $w$ for our digraph are
themselves symmetric, in that for every arc $(a,b)$,
we have $w(a,b) = w(b,a)$.
In real world networks, the weights typically are symmetric, and even networks
with asymmetric weights have very few arcs where $w(a,b) \neq w(b,a)$.

\begin{lemma}\label{sptheo}
Suppose we are given a symmetric directed graph $G=(V,A)$, a weight
function $w$ on the arcs that is symmetric and positive, and three
vertices $a,b,c$.  If $P_{a,b}$ and $P_{b,c}$ are shortest-weight paths
in this digraph from $a$ to $b$ and $b$ to $c$, respectively, then
they are arc-disjoint.
\end{lemma}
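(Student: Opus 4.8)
Suppose for contradiction that the shortest paths $P_{a,b}$ and $P_{b,c}$ share an arc $(x,y)$. The idea is to exploit the symmetry of the weights: traversing an arc $(x,y)$ costs the same as traversing $(y,x)$, so any path can be ``reversed'' at no change in weight. Since $(x,y)$ lies on the shortest path $P_{a,b}$ from $a$ to $b$, the segment of $P_{a,b}$ from $x$ to $b$ is a shortest $x$-to-$b$ path, and (by optimality of sub-paths) the segment from $a$ to $y$ is a shortest $a$-to-$y$ path. Likewise, since $(x,y)$ lies on $P_{b,c}$, the segment of $P_{b,c}$ from $b$ to $x$ is a shortest $b$-to-$x$ path. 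Now by symmetry of $w$, reversing that last segment gives a shortest $x$-to-$b$ path whose last arc is $(y,x)$ — i.e., $d(y,b) = d(x,b) - w(x,y)$, using $d(\cdot,\cdot)$ for shortest-path distance. But from the first segment we also have $d(y,b) = d(x,b) - w(x,y)$ directly; that is consistent, so I need to push harder and derive a genuinely contradictory strict inequality.

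\textbf{Sharpening the argument.} The clean way is to compare total weights. Let $\ell(\cdot)$ denote the weight of a path. Write $P_{a,b} = A_1 \cdot (x,y) \cdot A_2$ where $A_1$ runs $a \to x$ and $A_2$ runs $y \to b$, and $P_{b,c} = B_1 \cdot (x,y) \cdot B_2$ where $B_1$ runs $b \to x$ and $B_2$ runs $y \to c$. Using symmetry of $w$, let $\overline{B_1}$ be the reverse of $B_1$, an $x \to b$ walk with $\ell(\overline{B_1}) = \ell(B_1)$. Consider the walk $A_1 \cdot \overline{B_1}$ from $a$ to $b$: it has weight $\ell(A_1) + \ell(B_1)$. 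Since $P_{a,b}$ is a shortest $a$-$b$ path, $\ell(A_1) + w(x,y) + \ell(A_2) \le \ell(A_1) + \ell(B_1)$, so $w(x,y) + \ell(A_2) \le \ell(B_1)$. Symmetrically, let $\overline{A_2}$ be the reverse of $A_2$, a $b \to y$ walk of weight $\ell(A_2)$; then $\overline{A_2} \cdot B_2$ is a $b$-$c$ walk of weight $\ell(A_2) + \ell(B_2)$, and comparing with the shortest path $P_{b,c}$ gives $\ell(B_1) + w(x,y) + \ell(B_2) \le \ell(A_2) + \ell(B_2)$, hence $\ell(B_1) + w(x,y) \le \ell(A_2)$. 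Adding the two derived inequalities yields $\ell(B_1) + 2w(x,y) + \ell(A_2) \le \ell(A_2) + \ell(B_1)$, i.e. $2w(x,y) \le 0$, contradicting positivity of $w$.

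\textbf{Where the care is needed.} The one point to be careful about is that $A_1 \cdot \overline{B_1}$ and the other combined objects are \emph{walks}, not necessarily simple paths, so I should phrase the optimality comparisons in terms of walks: the minimum weight of an $a$-$b$ walk equals the minimum weight of an $a$-$b$ path (weights are positive, so shortest walks are simple anyway), and $P_{a,b}$ achieves this minimum. With that observation the inequalities above are immediate. I also implicitly use that sub-paths of shortest paths are shortest, and that reversal preserves weight under symmetry of $w$ — both elementary. I expect no real obstacle here; the only ``trap'' is trying to argue via distances $d(\cdot,\cdot)$ and getting a consistent-but-not-contradictory identity, whereas the walk-splicing/weight-summing argument cleanly produces the strict contradiction $2w(x,y) \le 0$.
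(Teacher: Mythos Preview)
Your proof is correct and follows essentially the same approach as the paper's: both decompose $P_{a,b}$ and $P_{b,c}$ at the shared arc $(x,y)$, use symmetry of $w$ to reverse a segment and splice together an alternative walk, and arrive at the contradiction $2w(x,y)\le 0$. The only cosmetic difference is that the paper builds a single alternative $b$-to-$c$ path (using $d(x,b)=d(b,x)$ along the way) to get the contradiction in one stroke, whereas you produce two spliced walks, extract two inequalities, and add them---but the mechanism and the arithmetic are the same.
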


\smallskip
\noindent
\proof
Suppose that, contrary to the Lemma, $P_{a,b}$ and $P_{b,c}$ share
a common arc $(x,y)$.
Then the path $P_{a,b}$ can be broken up into a path
$P_{a,x}$ followed by arc $(x,y)$ followed by a path $P_{y,b}$,
where $a$ and $x$ may possibly be the same node, as may $y$ and $b$.
Similarly the path $P_{b,c}$ can be broken up into a path $P_{b,x}$
followed by arc $(x,y)$ followed by a path $P_{y,c}$.
See Figure \ref{fig:theorem1}.

\begin{figure}
\begin{center}
\includegraphics[width=4in]{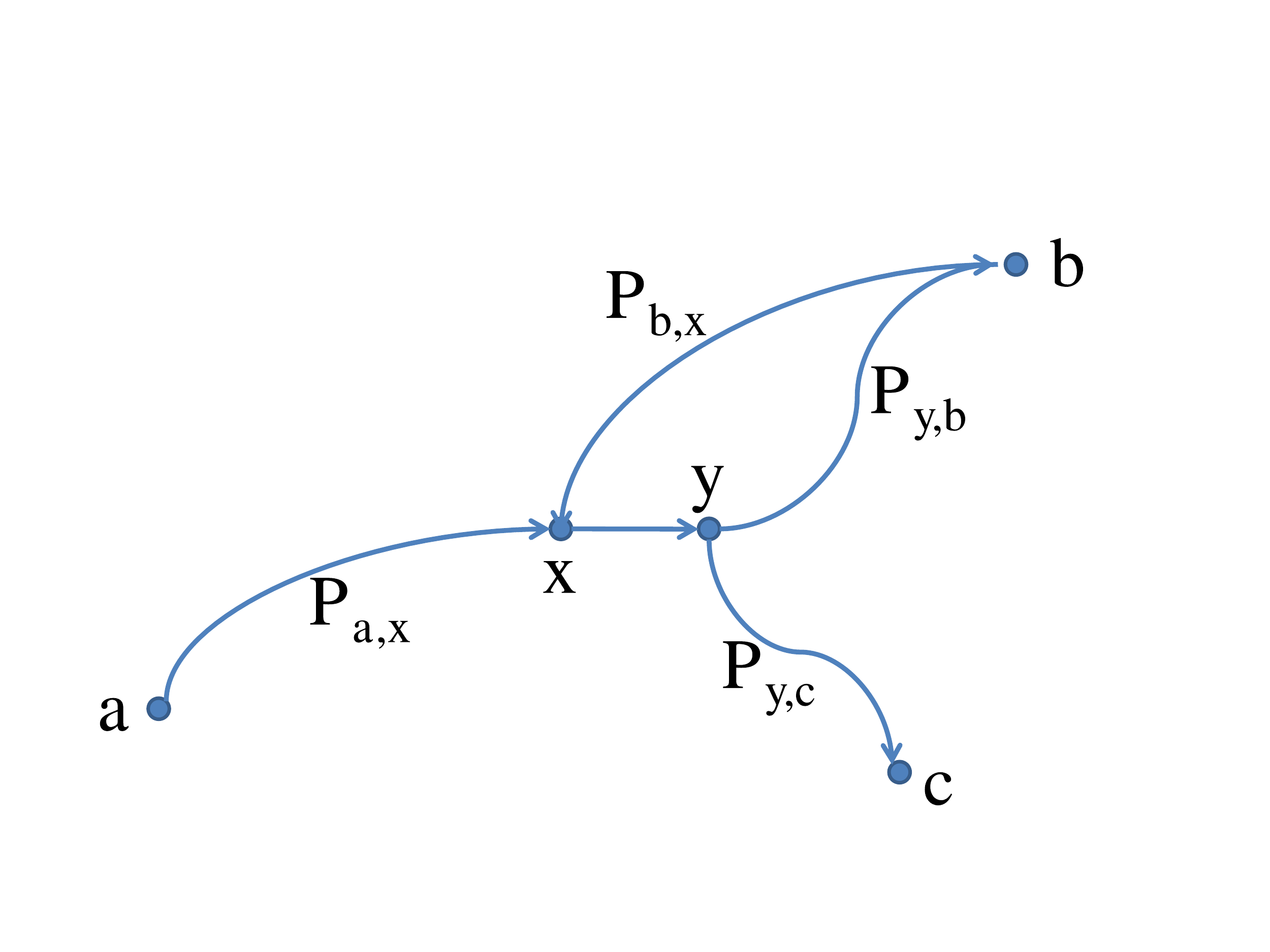}
\caption{Illustration for proof of Lemma \ref{sptheo}.\label{fig:theorem1}}
\end{center}
\end{figure}

Note that each of the subpaths of these paths must themselves be shortest
paths between their endpoints since the overall path is itself a
shortest path.
For instance the path $P_{x,b}$ that is the concatenation of $(x,y)$
with $P_{y,b}$ is a shortest path from $x$ to $b$.
Let us extend the definition of $w$ to paths $P$ by letting $w(P)$ be the
sum of the weights of the arcs in $P$.
Then, for instance, $w(P_{x,b}) = w(x,y) + w(P_{y,b})$.
If we let $P_{b,y}$ be the path from $b$ to $y$ obtained by reversing
the path $P_{y,b}$, then, since weights are symmetric, we must have
$w(P_{b,y}) = w(P_{y,b}) = w(P_{x,b}) -  w(x,y) =  w(P_{b,x}) -  w(x,y)$.

But now consider the path from $b$ to $c$
obtained by concatenating the two paths $P_{b,y}$ and $P_{y,c}$.
It has length $w(P_{b,y}) + w(P_{y,c}) = w(P_{b,x}) -  w(x,y) + w(P_{y,c})$.
On the other hand, by hypothesis, the shortest path from $b$ to $c$ has length
$ w(P_{b,x}) +  w(x,y) + w(P_{y,c})$, and is hence $2w(x,y)$ longer than
the length of this $P_{b,y}$ plus $P_{y,c}$ path.
Since by hypothesis $w(x,y) > 0$, this is a contradiction.
\qed

The basic idea of the technique of
\citet{GBDS-loss08} for estimating the loss rate $p(M,r)$ using these paths
is to send multicast packets
from $M$ to $r$ along path $P$, replicate them at $r$, and 
then send the copies along paths $P_1$ and $P_2$ to $m_1$ and $m_2$, respectively.
After this, $m_1$ and $m_2$ report back to $M$ (using a
guaranteed-delivery service such as TCP) as to which packets arrived.
Based on this information, $M$ estimates $p(M,r)$.
The loss rate $p(r,M)$ can
be estimated by sending packets along
the $(M,r,M)$ loop and counting the number that arrive back at $M$,
using the fact that the loss rate for the loop should be $p(M,r)p(r,M)$.
(We note that a result like Lemma \ref{sptheo} is needed if this method
is to provide reliable estimates, a fact not observed in \citet{GBDS-loss08},
which contained no such result.)

This scheme may require two monitoring hosts to measure the
hop-on and hop-off rates for a path endpoint $r$, rather than the single one that
would be required if we placed the monitoring equipment at vertex $r$ itself.
However, the scheme has the potential advantage that a given monitoring vertex
can be re-used to handle many different path endpoints.
%And note that there are some cases in which only one monitoring node
%is required.  We have already noted the case in which $r$ is itself a
%potential monitoring vertex.
%In addition, $M$ is typically a potential monitoring node,
%and hence can play the role of $m_2$ so long
%as $(M,m_1)$ covers $r$.
Thus there could be a substantial
net overall savings in the total number of monitoring
vertices used, and hence in equipment and operational cost.

As stated,
the problem of finding a minimum-sized set of monitoring vertices at which
to place equipment so that we can estimate loss rates for all hop-on and
hop-off paths is simply our original {\sc Pathwise-Disjoint Facility
Location} problem.
In practice, however, we will most-likely
have to rely on the ISP's routing protocol
(OSPF or IS-IS) to deliver our packets, and so, as with our the first application,
will face the {\sc Setwise-Disjoint Facility Location} problem.

It should be noted that, in contrast to that first application, the necessity
for {\em vertex}-disjoint paths from $r$ to $m_1$ and $m_2$,
rather than simply arc-disjoint paths, is less clear, since
by the previous lemma we can only guarantee that these paths are
arc-disjoint from the path from $M$ to $r$.
This is a meaningless distinction in the Setwise-Disjoint case, however,
in light of the following lemma.

\begin{lemma}\label{avtheo}
Suppose $P(c,f)$ and $P(c,f')$ are the sets of all shortest paths from
vertex $c$ to vertices $f$ and $f'$, respectively, in a given digraph $G$.
Then no path in $P(c,f)$ shares an arc with any path in $P(c,f')$ if
and only if no path in $P(c,f)$ shares a vertex
other than $c$ with any path in $P(c,f')$.
\end{lemma}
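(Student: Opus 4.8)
I would prove the two directions of the biconditional separately; one is immediate, and the other is where the shortest-path structure is actually used. For the direction ``(vertex-other-than-$c$)-disjointness implies arc-disjointness,'' I would argue the contrapositive, which needs nothing about shortest paths: if $p \in P(c,f)$ and $q \in P(c,f')$ share an arc $(x,y)$, then both endpoints $x$ and $y$ lie on $p$ and on $q$, and since $x \neq y$ at least one of them differs from $c$; hence $p$ and $q$ share a vertex other than $c$.

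For the converse I would again argue the contrapositive: assume some $p \in P(c,f)$ and $q \in P(c,f')$ pass through a common vertex $v \neq c$, and exhibit a path in $P(c,f)$ and a path in $P(c,f')$ sharing an arc. Split $p$ at $v$ into its prefix $p_{c,v}$ from $c$ to $v$ and its suffix, and split $q$ at $v$ into its prefix and its suffix $q_{v,f'}$ from $v$ to $f'$. By the ``a subpath of a shortest path is itself a shortest path'' observation already used in the proof of Lemma~\ref{sptheo}, $p_{c,v}$ is a shortest path from $c$ to $v$ and $q_{v,f'}$ is a shortest path from $v$ to $f'$; moreover, because $q$ is a shortest path from $c$ to $f'$ that runs through $v$, their weights satisfy $w(p_{c,v}) + w(q_{v,f'}) = d(c,v) + d(v,f') = d(c,f')$, where $d(\cdot,\cdot)$ denotes shortest-path weight. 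Hence the ``crossover'' walk $p' := p_{c,v}\cdot q_{v,f'}$ has total weight $d(c,f')$, so it is a shortest path from $c$ to $f'$ and lies in $P(c,f')$. But $p'$ and the original $p$ have the common initial segment $p_{c,v}$, which contains at least one arc since $v \neq c$; so $p \in P(c,f)$ and $p' \in P(c,f')$ share an arc, exactly the failure of arc-disjointness we wanted.

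The one step deserving a word of care — and I do not expect it to be a genuine obstacle — is verifying that the crossover walk $p'$ is a legitimate member of $P(c,f')$ rather than merely a walk of the right total weight: since its weight equals $d(c,f')$ and the arc weights are positive, $p'$ cannot revisit a vertex, as deleting the resulting positive-weight cycle would yield a strictly shorter $c$-to-$f'$ walk; hence $p'$ is a simple path and so an honest shortest path. Everything else is routine splitting and re-concatenation of paths at the shared vertex $v$.
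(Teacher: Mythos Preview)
Your proof is correct and follows essentially the same approach as the paper: both directions are argued via the contrapositive, and the nontrivial direction uses exactly the same ``crossover'' construction, concatenating the $c$-to-$v$ prefix of one shortest path with the $v$-to-$f'$ suffix of the other to produce a new member of $P(c,f')$ sharing an arc with the original path in $P(c,f)$. Your added remark about positivity of arc weights ensuring the crossover walk is simple is a point the paper glosses over, but is otherwise the only difference.
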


\noindent
\proof
The no-shared-vertex case trivially implies the no-shared-arc case, so
let us assume that there are no shared arcs and argue that there are
also no shared vertices other than $c$.
Suppose there were a shared vertex $x \neq c$.
Then there is a shortest path $P$ from $c$ to $f$ that is the concatenation
of a path $P_{c,x}$ from $c$ to $x$ with a path $P_{x,f}$
from $x$ to $f$, both themselves shortest paths,
and a shortest path $P'$ from $c$ to $f'$ that is the concatenation
of a path $P_{c,x}'$ from $c$ to $x$ with a path $P_{x,f'}'$ from $x$
to $f'$, both themselves shortest paths.
But then the paths $P_{c,x}$ and $P_{c,x}'$ must have the same length.
Thus the path $P''$ that is the concatenation of $P_{c,x}$ with $P_{x,f'}'$
is a shortest path from $c$ to $f'$, and must be contained in $P(c,f')$.
Moreover, since $x \neq c$, the path $P_{c,x}$ contains at least one
arc $a$.
Thus the path $P'' \in P(c,f')$ shares the arc $a$ with the path
$P \in P(c,f)$, a contradiction.
\qed

A detailed description of the implementation of this scheme
and the formulas used for estimating $p(M,r)$ and $p(r,M)$ is
presented in \citet{GBDS-loss08}.

\section{Complexity}\label{section:complexity}

In this section we investigate the
computational complexity of {\sc Pathwise-} and
{\sc Setwise-Disjoint Facility Location}, covering both general hardness
results and a polynomial-time solvable special case.

\subsection{Hardness of Approximation}\label{section:hardness}
%\smallskip
%\noindent {\sc Set Cover}:
%Given a ground set $U$ of elements and a collection $\cal{C}$ of subsets
%of $U$, find a minimum-cardinality subcollection $\cal{C}' \subseteq \cal{C}$
%such that $\cup_{C \in \cal{C'}} C = U$.

We first observe that the more general {\sc Set Cover by Pairs} problem is
not only NP-hard, but also strongly inapproximable in the worst-case.
Let $n = |U|$.
\citet{HS05} observed that SCP is at least as hard to approximate
as {\sc Set Cover}, which cannot be approximated to within a $o(\log n)$ factor
unless P = NP \citep{RS97}.
We can prove much a stronger inapproximability result
(albeit with a slightly stronger complexity assumption).

\begin{theorem} \label{scp-hard}
If ${\rm NP} \not \subseteq {\rm DTIME} (n^{O({\rm polylog}(n))})$,
%@@dsj
no polynomial-time algorithm can be guaranteed to find a solution to
SCP that is within a factor of $2^{\log^{1-\epsilon} n}$ of optimal
for any $\epsilon > 0$.
\end{theorem}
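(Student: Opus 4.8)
The plan is to reduce from \textsc{Min-Rep} (the minimization version of \textsc{Label Cover}), for which a $2^{\log^{1-\epsilon}n}$-inapproximability under exactly the hypothesis ${\rm NP}\not\subseteq{\rm DTIME}(n^{O({\rm polylog}(n))})$ is known --- a by-now-standard consequence of the PCP theorem together with Raz's parallel-repetition theorem (cf.\ \citet{RS97}). Recall that a \textsc{Min-Rep} instance consists of a bipartite graph whose left vertices are partitioned into groups $A_1,\dots,A_p$ and whose right vertices are partitioned into groups $B_1,\dots,B_q$, together with a set $\mathcal{E}$ of \emph{superedges}, each of which is a pair $(A_i,B_j)$; a vertex set $R$ \emph{covers} the superedge $(A_i,B_j)$ if there are adjacent vertices $a\in R\cap A_i$ and $b\in R\cap B_j$, and the objective is to find a minimum-cardinality $R$ covering every superedge of $\mathcal{E}$.

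Given such an instance I would form an \textsc{SCP} instance by taking the ground set to be $U=\mathcal{E}$, the cover objects to be the vertices $A\cup B$, and $T$ to consist of exactly those triples $\bigl((A_i,B_j),\,a,\,b\bigr)$ for which $(A_i,B_j)\in\mathcal{E}$, $a\in A_i$, $b\in B_j$, and $ab$ is an edge of the graph. Then for any $S'\subseteq A\cup B$ and any superedge $(A_i,B_j)$, there is a triple $\bigl((A_i,B_j),s,t\bigr)\in T$ with $s,t\in S'$ if and only if $S'$, regarded as a set of \textsc{Min-Rep} vertices, covers $(A_i,B_j)$; hence the feasible \textsc{SCP} covers are exactly the feasible rep-covers, and the two optimum values coincide. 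The construction clearly runs in polynomial time ($|T|$ is at most the number of edges of the graph), so a polynomial-time $\rho$-approximation algorithm for \textsc{SCP} would immediately yield a polynomial-time $\rho$-approximation for \textsc{Min-Rep}.

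The remaining point is that the inapproximability \emph{factor} survives the polynomial change of instance size. A \textsc{Min-Rep} instance of size $N$ yields an \textsc{SCP} instance with $n=|U|=|\mathcal{E}|\le N^{c}$ for some fixed constant $c$, so for every fixed $\epsilon'<\epsilon$ a hypothetical $2^{\log^{1-\epsilon}n}$-approximation for \textsc{SCP} would give a $2^{(c\log N)^{1-\epsilon}}\le 2^{\log^{1-\epsilon'}N}$-approximation for \textsc{Min-Rep} once $N$ is large, contradicting the \textsc{Min-Rep} hardness since $\epsilon'>0$; as $\epsilon>0$ is arbitrary, the theorem follows. I do not expect a genuine obstacle here: the reduction performs no gap amplification of its own and is essentially an exact re-encoding of the \textsc{Min-Rep} covering structure as an instance of \textsc{Set Cover by Pairs}, with all the hard work borrowed from the \textsc{Label Cover}/PCP literature --- the only things needing care are this size bookkeeping and invoking the \textsc{Min-Rep} inapproximability under precisely the complexity hypothesis stated in the theorem.
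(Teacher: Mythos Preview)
Your proposal is correct and takes essentially the same approach as the paper: an exact, approximation-preserving reduction from \textsc{MinRep} in which the ground set consists of the superedges, the cover objects are the vertices, and the triples encode the edges of the bipartite graph. The paper cites Kortsarz for the \textsc{MinRep} hardness rather than deriving it from PCP and parallel repetition, and handles the size bookkeeping more tersely, but the reduction and its analysis are identical to yours.
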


\medskip
\noindent
\begin{proof}
The theorem follows via an approximation-preserving transformation
from the {\sc MinRep} problem of Kortsarz, who showed the above
inapproximability bound to hold for {\sc MinRep} \citep{Kortsarz01}.
The reader may readily confirm that, for the type of approximation
bound being proved here, it suffices for the transformation to
provide a one-to-one correspondence between solutions
for the source and target instances (and their values) and take
polynomial time (which also causes the blow-up in the size of
the constructed instance to be bounded by a polynomial).

In {\sc MinRep}, we are given a bipartite graph $G$ with vertex set $V=A \cup B$
and edge set $E$, where $|A|=|B|=kq$ for positive integers $k,q$,
together with partitions of the vertex sets on each side of the
bipartite graph into $k$ groups of $q$ vertices, $A_1$ through $A_k$
on the left and $B_1$ through $B_k$ on the right.
We are also given an integer $K$.
We ask whether there is a subset $V' \subseteq V$ with $|V'| \leq K$
such that for every pair $(A_i,B_j)$, $1 \leq i,j \leq k$, if
$G$ contains an edge between a vertex in $A_i$ and one in $B_j$,
then so does the subgraph induced by $V'$.

We transform {\sc MinRep} to SCP by letting the items to be covered
be the pairs $(A_i,B_j)$ where $G$ contains an edge between a member
of $A_i$ and a member of $B_j$.
The set of covering objects is $V$, with
the item $(A_i,B_j)$ covered by all pairs $\{u,v\} \subseteq V$
in which $u \in A_i$,
$v \in B_j$, and $(u,v) \in E$.
There is then a one-to-one correspondence between SCP cover sets
and subsets $V'$ meeting the {\sc MinRep} requirements,
with the two sets having the same size.
Hence any approximation guarantee for SCP implies an equally good
guarantee for {\sc MinRep}, and so SCP must be at least as hard
to approximate.
\end{proof}

\begin{theorem}  \label{equivalence}
The special cases PDFL and SDFL retain much of the complexity of SCP:
\begin{enumerate}
\item SDFL is at least as hard to approximate as SCP.
\item PDFL is at least as hard to approximate as {\sc Set Cover}.
\end{enumerate}
\end{theorem}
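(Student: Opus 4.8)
The plan is to prove both parts by reductions that run the ``graph $\to$ triples'' translation of the introduction in reverse: part~1 by an approximation-preserving reduction from \textsc{SCP} to \textsc{SDFL}, and part~2 by a reduction from \textsc{Set Cover} to \textsc{PDFL} that changes the optimal value by only an additive~$1$ (harmless for the logarithmic bound we want). In both cases the target instance is built from gadgets whose shortest-path structure realizes \emph{exactly} a prescribed list of triples, after which the two optima are shown to agree up to a factor that is absorbed into the claimed inapproximability exponent.

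For part~1, start from an arbitrary \textsc{SCP} instance $(U,S,T)$; we may assume every triple has $s\neq t$ (this holds, for instance, for the instances produced in the proof of Theorem~\ref{scp-hard}). Create a customer vertex $c_u$ for each $u\in U$ and a facility vertex $f_s$ for each $s\in S$, and set $F=C\cup\{f_s:s\in S\}$. For every $u$ and every $s$ add a private ``router'' $r_{u,s}$ with arcs $c_u\to r_{u,s}\to f_s$; for every unordered pair $\{s,t\}$ with $(u,s,t)\notin T$ add a router $r_{u,\{s,t\}}$ with arcs $c_u\to r_{u,\{s,t\}}$ and $r_{u,\{s,t\}}\to f_s$, $r_{u,\{s,t\}}\to f_t$. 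With all these arcs of weight~$1$, $P(c_u,f_s)$ is precisely the set of length-$2$ paths through the routers incident to $f_s$, so $P(c_u,f_s)$ and $P(c_u,f_t)$ share a vertex \emph{iff} the router $r_{u,\{s,t\}}$ exists, i.e.\ \emph{iff} $(u,s,t)\notin T$; hence $\{f_s,f_t\}$ covers $c_u$ setwise-disjointly exactly when $(u,s,t)\in T$, which is the intended encoding. Two further tweaks handle the side conditions of the definition. First, because $C\subseteq F$ we must stop customer vertices from serving as ``spurious'' cover partners; to do this, add an arc from every router to every customer vertex, so that $P(c_u,c_{u'})$ becomes the set of \emph{all} length-$2$ paths out of $c_u$ and therefore meets every path in every $P(c_u,x)$ at a router, killing all pairs that contain a customer vertex (other than the trivial self-cover). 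Second, to neutralize the fact that $c\in F'$ counts as covering itself, replace each $c_u$ by $N:=|S|^2$ identical copies, each with its own routers. Finally add one vertex $\omega$ with arcs $v\to\omega$ and $\omega\to v$ of weight $W\gg|V|$ for every $v$, which makes the digraph strongly connected without creating any new shortest path among the vertices above. A routine argument then shows the two optima coincide: the facilities used by an \textsc{SCP} cover give an \textsc{SDFL} cover of the same size, while conversely any \textsc{SDFL} cover of size $<N$ must, for each $u$, leave some copy of $c_u$ not covered by itself, hence covered by a facility pair $\{f_s,f_t\}$ with $(u,s,t)\in T$, so its facilities form an \textsc{SCP} cover of no larger size; since the optimum is at most $|S|<N$, this applies to the optimal solution. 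The blow-up is polynomial, so a $\rho$-approximation for \textsc{SDFL} yields a $\rho$-approximation for \textsc{SCP} (the polynomial blow-up only shrinks the $\epsilon$ in the exponent), giving part~1.

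Part~2 follows the same template but requires somewhat more care. Reducing from \textsc{Set Cover} $(U,\{S_j\})$, introduce a customer $c_e$ for each element, a facility $f_j$ for each set, and a single ``hub'' facility $z$, and arrange the graph so that all shortest $c_e$-to-$f_j$ paths (for the sets containing $e$) are forced through one private vertex $g_e$, while the shortest $c_e$-to-$z$ paths avoid $g_e$. Then $\{z,f_j\}$ covers $c_e$ in the path-disjoint sense iff $e\in S_j$, while no pair $\{f_j,f_{j'}\}$ can cover $c_e$ (all their shortest paths collide at $g_e$), so an optimal ``clean'' solution is $\{z\}\cup\{f_j:j\in J\}$ for a minimum set cover $J$; this makes the \textsc{PDFL} optimum equal to $1$ plus the \textsc{Set Cover} optimum, and a $\rho$-approximation for \textsc{PDFL} recovers a set cover of size at most $\rho\cdot(\mathrm{OPT}+1)\le 2\rho\cdot\mathrm{OPT}$, the desired $\Omega(\log n)$-hardness.

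The main obstacle in both reductions is the same: choosing the arc weights so that the shortest-path families realize the intended triples and \emph{no others}, while also respecting strong connectivity and the requirement $C\subseteq F$. Strong connectivity is easy (a single heavy auxiliary vertex), and self-coverage is handled by padding each element into many copies so that putting a customer vertex into a solution is never worthwhile. The genuinely delicate point is ruling out spurious covering pairs that involve a customer vertex: in the set-disjoint case this is clean, because ``covers'' quantifies over \emph{all} shortest paths, so enlarging $P(c,c')$ with the router-to-customer arcs forces it to meet everything; in the path-disjoint case ``covers'' is merely existential over pairs of shortest paths, so one must ensure that a customer vertex $c_{e'}$ reached from $c_e$ along a path that branches away from $g_e$ cannot serve as a disjoint partner for $z$, which demands a considerably more rigid routing of all customer-to-customer shortest paths and is ultimately the reason the \textsc{PDFL} reduction yields only \textsc{Set Cover}-hardness rather than the full \textsc{SCP}-hardness obtained for \textsc{SDFL}. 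Checking that the weights can in fact be chosen to enforce this, case by case over all candidate pairs, is where the real work lies.
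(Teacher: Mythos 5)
Your strategy for both parts is the same as the paper's: a two-layer, unit-weight gadget in which each customer sees each facility only through intermediate vertices at distance one, so that the presence or absence of a shared intermediate vertex realizes exactly the prescribed covering triples, followed by a replication device to neutralize self-coverage once $C\subseteq F$ is imposed. Your part~1 is complete and correct. Your choice of a \emph{private} router $r_{u,s}$ for each customer--facility pair, rather than a single connector per customer shared by all facilities, is not cosmetic: a shared connector adjacent to every facility would itself lie on a shortest path from $c_u$ to every $f_s$ and would destroy every setwise-disjoint pair, so the privatization is needed for the encoding to work. Your router-to-customer arcs are a reasonable alternative to handling customer-as-facility pairs purely by replication (the paper replicates the whole construction $|S|+1$ times and identifies facility copies), and the remaining bookkeeping --- the $|S|^2$ copies, the heavy vertex $\omega$ for strong connectivity, the polynomial blow-up absorbed into the exponent --- is indeed routine.

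Part~2, however, is left as a plan rather than a proof: you describe the shortest-path structure you want (a hub facility whose shortest paths from $c_e$ avoid a private bottleneck $g_e$ through which all paths to the set-facilities are squeezed), but you explicitly defer ``checking that the weights can in fact be chosen,'' and you flag the danger that another customer vertex could replace a set-facility as the hub's disjoint partner. That worry is legitimate, but the cure is not delicate weight engineering; it is built into the concrete gadget the paper uses. Take unit weights, a \emph{universal} connector $x_0$ adjacent to every customer and to every facility including the hub $f_0$ (which is adjacent to nothing else), and a private connector $x_e$ adjacent to $c_e$ and to exactly those $f_j$ with $e\in S_j$. Then every shortest path out of $c_e$ uses $x_0$ or $x_e$; in particular every shortest path from $c_e$ to another customer and every shortest path to $f_0$ uses $x_0$, so in any covering pair for $c_e$ at most one member can be the hub or a customer, and the other must be reached via $x_e$ and hence be an $f_j$ with $e\in S_j$. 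This yields the additive-one relation between the two optima directly, with $C\subseteq F$ handled either this way or by the replication of part~1. Until this instantiation is actually written down, your part~2 identifies the right target structure but does not yet establish the claim.
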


\begin{proof}
We shall initially prove the results without the restriction
that $C \subseteq F$, and then
sketch how the proofs can be modified (by appropriate replication) to
enforce the restriction.

For Claim 1, we will show how to transform an arbitrary instance of SCP into
an equivalent instance of SDFL.
Suppose the given instance of SCP consists of sets $U = \{u_1,\ldots ,u_p\}$
and $S = \{s_1,\ldots ,s_q\}$
and relation $T \subseteq U \times S \times S$.
By the definition of SCP,
we may assume without loss of generality that if $(u,s_i,s_j) \in T$,
then $i \leq j$.
In the corresponding instance of SDFL, there are four types of vertices.
\begin{enumerate}
\item For each $u \in U$, a customer vertex $c_u \in C$.
\item For each $u \in U$, a {\em connector} vertex $x_u$.
\item For each $s \in S$, a facility location vertex $f_s \in F$.
\item For each pair $(s_i,s_j)$, $1\leq i < j \leq q$,
a {\em pair vertex} $v_{s_i,s_j}$.
\end{enumerate}
There are four types of edges, all having length 1.
(The constructed digraph is symmetric, so instead of constructing
both $(a,b)$ and $(b,a)$, we just construct one undirected edge $\{a,b\}$.)
\begin{enumerate}
\item For each customer vertex $c_u \in C$, an edge $\{c_u,x_u\}$.
\item For each connector vertex $x_u$ and each facility location
vertex $f_s$, an edge $\{x_u,f_s\}$.
\item For each pair $(s_i,s_j)$, $i < j$, an edge between
$c_u$ and $v_{s_i,s_j}$ {\em unless} $(u,s_i,s_j) \in T$.
\item For each pair $(s_i,s_j)$, $1\leq i < j \leq q$, the two edges
$\{v_{s_i,s_j},f_{s_i}\}$ and $\{v_{s_i,s_j},f_{s_j}\}$.
\end{enumerate}

We claim that $F' \subseteq F$ is a feasible cover for the constructed
SDFL instance if and only if $S' = \{ s: f_{s} \in F'\}$ is a feasible cover
for the original SCP instance.
Note that in the constructed digraph, there is a path of length two between
each customer vertex $c_u$ and each facility location vertex $f_{s_i}$, i.e.,
$\langle c_u, x_u, f_{s_i}\rangle$, and this is the shortest possible path.
Thus $P(c_u,f_{s_i})$ consists precisely of this path, together with all
paths $\langle c_u,v_{s_i,s_j},f_{s_i}\rangle$
where $i \leq j$ and $(u,s_i,s_j) \notin T$
and all paths $\langle c_u,v_{s_j,s_i},f_{s_i}\rangle$
where $i > j$ and $(u,s_j,s_i) \notin T$.
Thus the only vertex that a path in $P(c_u,f_{s_i})$
can possibly share with a path in $P(c_u,f_{s_j})$, $i \leq j$,
other than $c_u$ itself, is
$v_{s_i,s_j}$, which will be in shortest paths from $c_u$ to
$f_{s_i}$ and  to $f_{s_j}$ if and only if $(u,s_i,s_j) \notin T$.
Hence a pair $\{s_i,s_j\}$ will cover an element $u$ in the SCP instance
if and only if the pair $\{f_{s_i},f_{s_j}\}$
jointly covers the customer vertex $c_u$ in the constructed SDFL instance,
so the optimal solution values for the two instances coincide.

To get an equivalent instance in which $C \subseteq F$,
we replicate the above construction $|S|+1$ times, while, for each $s \in S$, identifying
the $|S|+1$ copies of vertex $f_s$.
We then add the $|S|+1$ copies of each customer vertex $c_u$ to $F$ to obtain
a new set $F_+$ of facilities, one that now contains all the customer vertices
in our expanded instance.
Let $F'$ be an optimal solution to our original instance.
Note that it remains an optimal solution to the expanded instance.
Thus if $F_+'$ is an optimal solution to this new instance, we must have
$|F_+'| \leq |F'| \leq |F| = |S| < |S|+1$.
Thus at least one of the $|S|+1$ copies of our original instance fails
to have any of its customer vertices in $F_+'$.
This implies that $F \cap F_+'$
must be a feasible solution for that copy, and so $|F_+'| \geq |F'|$.
But this implies that $|F_+'| = |F'|$.
So the optimal solution value for our expanded instance still equals that
for the original instance of SCP.
Moreover, the expanded instance can be constructed in polynomial time and has size within
a polynomial factor of the size of the original SCP instance.
Hence the inapproximability result for SCP carries over to SDFL.

\smallskip
For Claim 2, we again prove the result without the restriction that $C \subseteq F$.
The extension to the case where the restriction holds follows by
a replication construction similar to that given in the proof of Claim 1.
Our transformation is from {\sc Set Cover}, which we already
know cannot be approximated to within a $o(\log n)$ factor unless P = NP
\citep{RS97}.
Given a {\sc Set Cover} instance, we construct an instance
of PDFL whose optimal solution differs from the optimal solution to
the {\sc Set Cover} instance by at most 1.
Thus a polynomial-time $o(\log n)$ approximation for our problem would imply one
for {\sc Set Cover}.

An instance of {\sc Set Cover} consists of a
ground set $U = \{u_1,\ldots,u_n\}$ and a collection
${\cal C} = \{C_1,\ldots,C_m\}$ of subsets of $U$.
Assume without loss of generality that $\cup_{C \in \cal{C}}C = U$.
Our constructed graph has four types of vertices:
\begin{enumerate}
\item For each $i$, $1 \leq i \leq n$, a customer vertex $c_i \in C$.
\item For each $i$, $1 \leq i \leq n$, a {\em connector} vertex $x_i$.
\item For each $j$, $0 \leq j \leq m$, a potential facility vertex $f_j \in F$.
Note that $f_0$ is an added facility, not corresponding to any member
of ${\cal C}$.
\item A {\em universal connector} vertex $x_0$.
\end{enumerate}
There are four types of (undirected) edges, all having length 1:
\begin{enumerate}
\item For $1 \leq i \leq n$, an edge $\{c_i,x_i\}$.
\item For each $i,j$, $1 \leq i \leq n$ and $1 \leq j \leq m$,
such that $u_i \in C_j$, an edge $\{x_i,f_j\}$.
\item For each $i$, $1 \leq i \leq n$, an edge $\{c_i,x_0\}$.
\item For each $j$, $0 \leq j \leq m$, an edge $\{x_0,f_j\}$.
\end{enumerate}

Note that for each pair $i,j$, $1 \leq i \leq n$ and $1 \leq j \leq m$,
there is a path of length two from $c_i$ to $f_j$ through $x_0$, and
this is the shortest possible path.
Thus for each such $i,j$, $P(c_i,f_j)$ contains only paths of length two.
More precisely, by construction, if $u_i \notin C_j$ it contains just the path
$\langle c_i,x_0,f_j\rangle$, and otherwise it contains
that path plus the path $\langle c_i,x_i,C_j\rangle$.

Suppose first that $\cal{C}'$ is a feasible solution to our {\sc Set Cover}
instance.  Then it is easy to see that $\{f_0\}\cup\{f_j: C_j \in \cal{C}'\}$
is a feasible solution for our problem.
Each $c_i$ is covered by $f_0$ and an $f_j$ such that $u_i \in F_j$, which
must exist because $\cal{C}'$ is a cover.
Thus the optimum for our constructed instance is at most one more than
the {\sc Set Cover} optimum.

Conversely, suppose that $F' \subseteq F$ is a feasible solution for
our constructed instance.
For each customer vertex $c_i$, there must be at least one vertex in $F'$
that does not equal $f_0$, and so $F' -\{f_0\}$ must be a set cover for
the original instance.
Thus the optimum for the original instance is at most the optimum for
our constructed instance.
As in the proof of Claim 1, we can now, by replication, construct an instance of PDFL where
all customers are facilities, and the optimum value remains unchanged.
Consequently, the two optima are within one of each other, which implies that
a polynomial-time algorithm with $o(\log n)$ approximation guarantee for PDFL
would yield one for {\sc Set Cover} as well.
\end{proof}

\subsection{The Special Case of Trees}\label{sec:trees}
In this section, we consider networks derived from trees,
where for a given tree $T = (V,E)$ the corresponding network is the symmetric
directed graph
$G = (V,A)$, with both $(u,v)$ and $(v,u)$ in $A$ whenever $\{u,v\} \in E$.
We first note that, for such instances, the two problems PDFL and SDFL
coincide.
This follows from the fact that
for every ordered pair $v,w$ of vertices, there is exactly
one simple path from $u$ to $v$, which is of course the shortest path,
and so the shortest path is unique.
More interestingly, the problems can be solved in polynomial (in fact, linear)
time.
Indeed, the optimal solution has a simple combinatorial characterization,
and can be constructed by a simple tree traversal.

\begin{theorem} \label{treetheo}
The following procedure constructs an optimum cover under PDFL.
\begin{enumerate}
\item While $T$ contains a leaf vertex $v$ that is not a customer, delete $v$.
\item Return the set $F^*$ of leaves of the residual tree as our cover.
\end{enumerate}
\end{theorem}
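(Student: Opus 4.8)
The plan is to reduce the theorem to a clean combinatorial characterization of feasible covers on trees and then prove optimality by a leaf-counting argument. First I would record the structural facts special to trees: every two vertices are joined by a unique path (so PDFL and SDFL coincide, as already noted), and deleting a vertex $c$ splits the tree into one component per neighbor of $c$. Consequently, for $c\notin\{f_1,f_2\}$, the unique paths $P(c,f_1)$ and $P(c,f_2)$ meet only at $c$ if and only if $f_1$ and $f_2$ lie in distinct components of $T-c$, equivalently if and only if $c$ lies on the unique $f_1$--$f_2$ path. Writing $T[X]$ for the minimal subtree of $T$ spanning a vertex set $X$ (equivalently, the union of the pairwise paths among the vertices of $X$), I would then deduce the key equivalence: $F'\subseteq F$ is a feasible cover if and only if $C\subseteq V(T[X])$ for $X=F'$. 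For the nontrivial direction, if $c\in C\setminus F'$ lies in $T[F']$ then $c$ is not a leaf of $T[F']$ — a leaf of a minimal spanning subtree must belong to the spanned set, else it could be pruned — so $c$ has at least two incident branches in $T[F']$, each containing a member of $F'$, and these two members witness coverage of $c$; conversely, if $c\notin V(T[F'])$ then all of $F'$ sits in a single component of $T-c$, so no pair can cover $c$.

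Next I would identify what the procedure returns. Repeatedly deleting non-customer leaves never removes a customer and keeps the graph a tree, and it terminates exactly when every leaf is a customer; a subtree containing $C$ all of whose leaves lie in $C$ must be $T[C]$, so the residual tree is precisely $T[C]$ and $F^*$ is its leaf set. Since leaves of $T[C]$ lie in $C\subseteq F$, $F^*$ is a legitimate candidate; and since those leaves already span $T[C]$ we have $T[F^*]=T[C]\supseteq C$, so $F^*$ is feasible by the characterization. The degenerate cases $|C|\le 1$ I would dispatch directly (the empty cover is optimal when $C=\emptyset$, and a single customer covers itself, matching $|F^*|$).

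For optimality, let $F'$ be any feasible cover; then $C\subseteq V(T[F'])$, hence $T[C]\subseteq T[F']$. I would prove the combinatorial lemma that enlarging a subtree cannot decrease the number of leaves: if $T_1\subseteq T_2$ are subtrees, each connected component of $T_2\setminus T_1$ attaches to $T_1$ at a single vertex and contributes at least one leaf of $T_2$ (a degree-one vertex of the component other than its attachment neighbor, or the lone vertex when the component is a single vertex); mapping each leaf of $T_1$ either to itself, if it is still a leaf of $T_2$, or to such a contributed leaf in the component attached at it, yields an injection into the leaves of $T_2$. Applying this to $T[C]\subseteq T[F']$ gives (number of leaves of $T[F']$) $\ge$ (number of leaves of $T[C]$) $=|F^*|$, and since the leaves of $T[F']$ all lie in $F'$ we conclude $|F'|\ge|F^*|$. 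Hence $F^*$ is an optimum cover.

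The main obstacle is not a single deep step but getting the bookkeeping exactly right: proving cleanly that leaves of a minimal spanning subtree are terminals, pinning down the behaviour of the procedure on trees with one or two vertices (and the $C=\emptyset$ corner case) so that ``leaf'' is unambiguous, and checking injectivity in the enlargement lemma — distinct components of $T_2\setminus T_1$ attach at distinct vertices so their contributed leaves are distinct, and these are disjoint from the ``stayed-a-leaf'' images, which lie in $T_1$. Once the equivalence ``$F'$ feasible $\iff C\subseteq V(T[F'])$'' is established, the remainder is short.
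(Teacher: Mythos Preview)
Your proposal is correct, and in fact more complete than the paper's own proof. The paper's argument establishes only that $F^*$ is a feasible cover: it observes that in the residual tree every leaf is a customer (hence in $F$), and that any customer not in $F^*$ is internal, hence has two edge-disjoint branches each leading to a leaf. That is essentially your feasibility argument, phrased slightly differently. But the paper's proof stops there; it never addresses optimality. Your characterization ``$F'$ feasible $\iff C\subseteq V(T[F'])$'' together with the leaf-monotonicity lemma for nested subtrees supplies exactly the missing half: from $T[C]\subseteq T[F']$ you get $|F^*|=|\text{leaves}(T[C])|\le|\text{leaves}(T[F'])|\le|F'|$. This is genuinely needed, since a leaf of the residual tree need not be a leaf of $T$ (pruned branches may contain facilities in $F\setminus C$), so the na\"ive ``every leaf must cover itself'' argument does not apply directly.

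One small wording slip in your injectivity check: you write ``distinct components of $T_2\setminus T_1$ attach at distinct vertices,'' which is false in general (several components can hang off the same internal vertex of $T_1$). What you need, and what makes the injection work, is the converse: each component attaches at a \emph{unique} vertex of $T_1$, so components chosen at distinct leaves of $T_1$ are necessarily distinct. With that correction the argument is airtight.
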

\begin{proof}
Note that in the residual tree, all leaves are customers and hence facility
locations, so $F^* \subseteq F$, as required.
Every customer in $F^*$ is already covered by itself.
If there is any customer $v \notin F^*$, it must be an internal vertex of
the residual tree, and hence there must be at least two edges incident on $v$
in the residual tree.
Each such edge must lie on a path to a leaf of the residual tree, and those
paths can contain no common vertex other than $v$ because we are dealing
with a tree.
But all leaves of the residual tree are in $F'$, so this means that $v$
has vertex-disjoint shortest paths to at least two members of $F'$ and hence
is covered by $F'$.
\end{proof}

\section{Algorithms}\label{section:lbs}
The complexity results of the previous section rule out (assuming widely-believed
hypotheses) both polynomial-time heuristics that are guaranteed to find good
covers and the existence of good lower bounds on the
optimal solution value that can be computed in polynomial time.
Nevertheless, there still may exist algorithms and lower bounds that are useful
``in practice.''
In this section and the next, we describe our candidates for both, beginning with lower bounds,
which provide us with a standard of comparison for evaluating the quality of the
solutions our heuristics provide.

\subsection{Optimal Solutions Using Mixed Integer Programming.}\label{section:MIP}
The best standard of comparison is, of course, the optimal solution value.
It is feasible to compute this for small instances using mixed integer
linear programming, by essentially viewing our instances of PDFL and SDFL
as {\sc Set Cover by Pairs} instances, modified to account for the fact
that a customer, in its role as a facility, can cover itself.
This allows us to leverage general purpose commercial software for solving
our problems.

Our MIP formulation is simple and straightforward.
It has a zero-one variable $x_f$ for each facility $f \in F$,
where $x_f=1$ if $f$ is chosen for our cover.
In addition, we have a real nonnegative variable $y_{f,f'}$ for each
pair $\{f,f'\}$ of facilities, subject
to the constraints that $y_{f,f'}\le x_f$
and $y_{f,f'}\le  x_{f'}$.
Note that these together imply that $y_{f,f'}$ can be positive only
if both $x_f$ and $x_{f'}$ equal 1 (i.e., are in the chosen cover).
To guarantee that the chosen set is a cover,
we have the following constraints, one for each facility $c \in C$.
\begin{equation}\label{ILP1}
x_c + \sum_{f,f':\{f,f'\} \mbox{ covers } c} y_{f,f'}\ge 1
\end{equation}

\noindent
where ``covers'' is interpreted as ``covers in a path-disjoint fashion'' for PDFL
and as ``covers in a set-disjoint fashion'' for SDFL.
The goal of our MIP is to minimize $\sum_{f \in F} x_f$.

We derive the MIP instances from the corresponding PDFL/SDFL instances using the
``triple generation'' algorithms described at the end of this section, and then
attempt to solve them using the version 11.0 CPLEX\texttrademark\ MIP solver.
This MIP approach proved
practical for a wide range of ``small'' instances, enabling us to find optimal
solutions to all but two of the instances in our test set with $|F| \leq 150$.
Our detailed experimental results will be summarized in
Section \ref{section:experiments}.

\subsection{A Hitting Set Lower Bound}\label{section:HSLB}
For larger instances of the set-disjoint variant of our problem
(SDFL), our standard of comparison is derived from a new lower
bound, based on the construction of a special instance of the NP-hard {\sc Hitting Set}
problem that CPLEX finds it particularly easy to solve.
The bound depends on the following simple lemma.

\begin{lemma}\label{neighborlemm}
Suppose $G = (V,A)$ is an arc-weighted directed graph, and $u,v,w$ are distinct
vertices in $V$.  Let $P_{v}$ be a shortest path from $u$ to $v$, and let $x$ be the
first vertex encountered on this path.
If there is a shortest path $P_{w}$ from $u$ to $w$ that shares a common vertex other than $u$
with $P_{v}$, then there is a shortest path $P_{w}'$ from $u$ to $w$ that contains $x$.
\end{lemma}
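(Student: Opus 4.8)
The plan is to locate a cleverly chosen shared vertex $y$ of $P_v$ and $P_w$, splice the initial segment of $P_v$ (which is forced to pass through $x$) onto the terminal segment of $P_w$, and verify that the spliced object is a shortest path from $u$ to $w$ that contains $x$.

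First I would make the hypothesis concrete: $P_v$ and $P_w$ share a vertex other than $u$. Traversing $P_w$ backward from $w$ toward $u$, let $y$ be the first vertex encountered that also lies on $P_v$. Then $y \neq u$, $y$ lies on both paths, and the portion $R$ of $P_w$ strictly between $y$ and $w$ meets $P_v$ only at $y$. Let $Q$ be the portion of $P_v$ from $u$ to $y$. Since $y \neq u$, $Q$ contains at least the first arc of $P_v$, and hence contains $x$.

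Next I would set $P_w' := Q$ followed by $R$ and check three things. (i) \emph{It is a simple path}: $Q$ is a subpath of the simple path $P_v$ and $R$ is a subpath of the simple path $P_w$, and by the choice of $y$ their only common vertex is $y$, so no vertex repeats. (ii) \emph{It is a shortest $u$-$w$ path}: as a subpath of the shortest path $P_v$, $Q$ is a shortest $u$-$y$ path; the prefix of $P_w$ up to $y$ is likewise a shortest $u$-$y$ path, so it has the same weight as $Q$; replacing one by the other leaves the total weight equal to that of $P_w$, which is the $u$-$w$ distance. (iii) \emph{It contains $x$}, as noted above. Combining (i)--(iii) gives the lemma.

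I do not expect a real obstacle here; the only point requiring care is that a ``shortest path'' is meant to be simple, which is exactly why $y$ must be taken as the \emph{last} shared vertex along $P_w$ rather than an arbitrary one -- an arbitrary choice could yield a shortest closed walk instead of a path. (If one assumes the arc weights are positive, as they are throughout the rest of the paper, every minimum-weight $u$-$w$ walk is automatically simple and even this subtlety evaporates.)
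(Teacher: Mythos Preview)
Your proof is correct and follows essentially the same splicing argument as the paper's: locate a shared vertex $y \neq u$, then concatenate the $u$-to-$y$ prefix of $P_v$ with the $y$-to-$w$ suffix of $P_w$. The only difference is that the paper picks $y$ as the \emph{first} shared vertex along $P_w$ (rather than the last) and does not explicitly verify simplicity of the spliced walk, implicitly relying on positive arc weights---your choice of the last shared vertex, and the accompanying simplicity check, is slightly more careful on exactly the point you flag.
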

\begin{proof}
Let $y$ be the first vertex in $P_{w}$, other than $u$, that is also in $P_{v}$.
If $y=x$, we are done, so suppose not.
Divide $P_v$ into subpaths $P_{v,1}$ from $u$ to $y$ and $P_{v,2}$ from
$y$ to $v$.
Note that $x$ remains the first vertex encountered in $P_{v,1}$.
Similarly divide $P_{w}$ into subpaths
$P_{w,1}$ from $u$ to $y$ and $P_{w,2}$ from $y$ to $w$.
Note that all these paths must be shortest paths from their sources to their
destinations.
Thus, in particular, $P_{v,1}$ and $P_{w,1}$ must have the same length since
they are both paths from $u$ to $y$.
Hence, the concatenated path $P_w' = P_{v,1} P_{w,2}$ must be a shortest path
from $u$ to $w$, and it contains $x$.
\end{proof}

For a given customer $c$, let $N(c)$ denote the set of all vertices $x$ such that
$(c,x) \in A$ -- the {\em neighbors} of $c$.
For each facility $f \neq c$,
let $N(c,f)$ be that subset of nodes of $N(c)$ that are included on shortest paths from
$c$ to $f$.
As a consequence of Lemma \ref{neighborlemm}, a pair $\{f_1,f_2\}$ will
cover $c$ in a setwise-disjoint fashion if and only if $N(c,f_1)\cap N(c,f_2)$ is
empty.
Thus a necessary (although not sufficient)
condition for $F' \subseteq F$ to contain a pair $\{f_1,f_2\}$
that covers $c$ is that $\bigcap_{f \in F'}N(c,f) = \emptyset$.

We can use this observation to construct an integer program
whose feasible set contains all legal
setwise-disjoint covers, and hence whose optimal solution
provides a lower bound on the size of
the optimal set-disjoint cover.
As in the MIP for an optimal solution, we have zero-one variables $x_f$ for
all $f \in F$, with $x_f = 1$ meaning that facility $f$ is in our chosen cover,
and we wish to minimize $\sum_{f \in F} x_f$.
Now, however, these are the only variables, and we have a different type of
constraint, one for each pair of a customer $c$ and a neighbor $x$ of $c$:
\begin{equation}\label{LBIP}
x_c + \sum_{f\neq c \mbox{ and }x \notin N(c,f)}x_f\ge 1.
\end{equation}

\begin{figure}[t]
\vspace{-.1in}
\begin{center}
\includegraphics[width=3.5in]{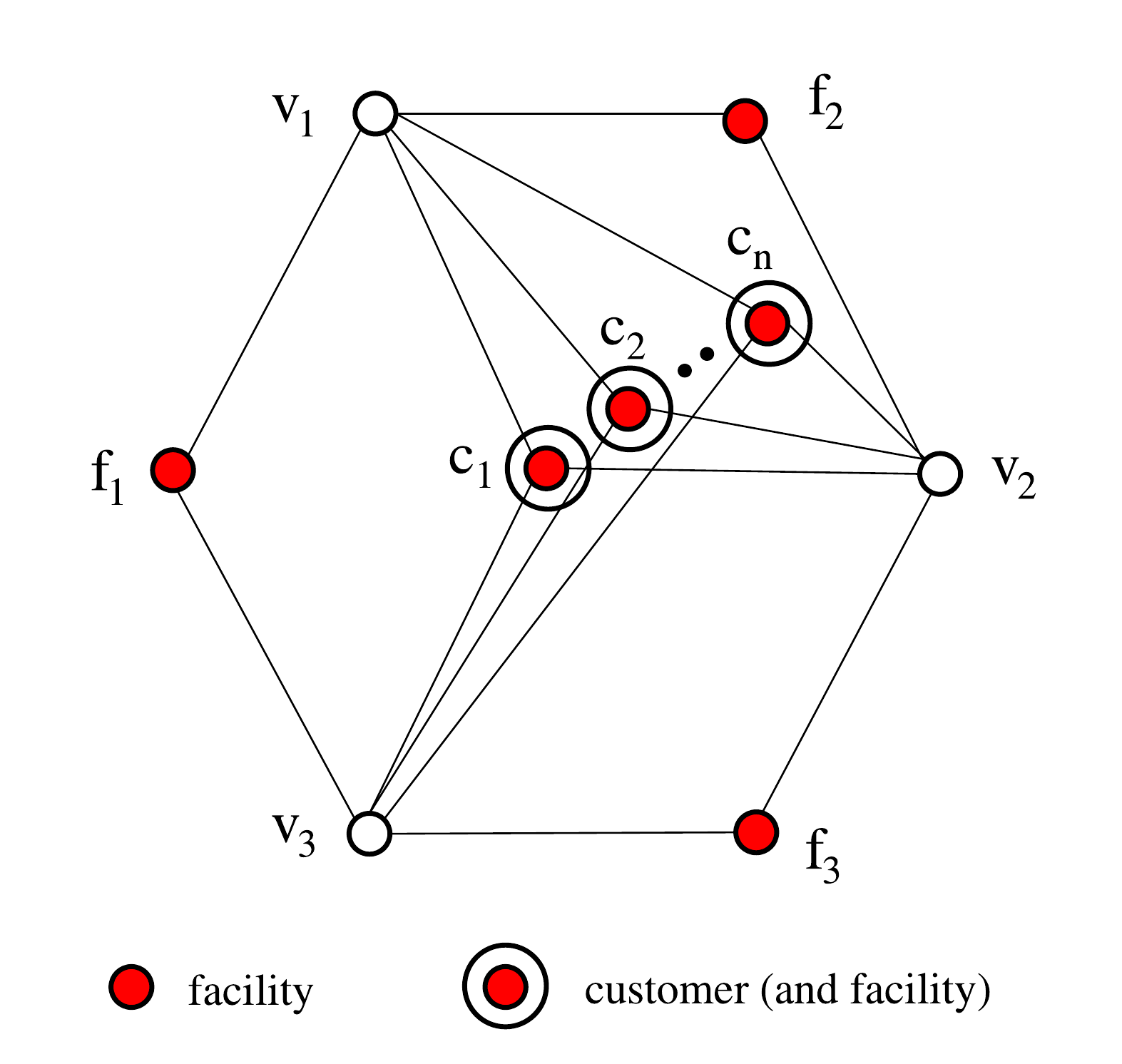}
\vspace{-.1in}
\caption{Instances where the HSLB underestimates the optimal cover size by a factor of $n/3$.}\label{fig:HSLB}
\end{center}
\vspace{-.3in}
\end{figure}

This IP can be viewed as a ``Hitting Set'' problem, where a facility
``hits'' a pair $(c,x)$ if $x \notin N(c,f)$ or if $f=c$.
For this reason, we call the optimal solution to the IP the {\em hitting set lower bound}
(HSLB).
It is easy to see that computing this bound is NP-hard.
Thus its computation
is likely to require exponential time in the worst case, and
our SDFL hardness of approximation result, which only applies to
polynomial-time algorithms, is not applicable.
Nevertheless, the Hitting Set Lower Bound can be quite bad:  In the worst
case it can underestimate the optimal solution value by a linear factor.
See Figure \ref{fig:HSLB}, which depicts a scheme for instances where the HSLB
is too low by a factor of $n/3$.
In this figure there are $n+3$ facility vertices, $f_1$, $f_2$, $f_3$, plus the
customer vertices $c_1$ through $c_n$, together with three additional vertices
($v_1$, $v_2$, $v_3$) that are the neighbors of the customer vertices.
Each (undirected) edge in the figure represents a pair of oppositely directed arcs
of length 1.
Note that for each neighbor vertex $v$, there is one of the $f_i$ whose shortest
paths from the customer vertices $c_i$ do not contain $v$.
Thus, for $n>3$, the solution to the HSLB integer program is the set $\{f_1,f_2,f_3\}$, for a
lower bound of 3.
On the other hand, for each $c_i$ and each pair of facilities
$\{f_j,f_k\} \subseteq \{f_1,f_2,f_3\}$, we have $N(c_i,f_j) \cap N(c_i,f_k) \neq \emptyset$
since it contains one of the neighbor vertices.
Thus no pair of these vertices covers any customer vertex.
In addition, for any customer vertex $c_h$ other than $c_i$, {\em all} three neighbor
vertices are on shortest paths from $c_i$ to $c_h$, so no other customer vertex can
help cover $c_i$ either.
Thus all the customer vertices must cover themselves, and hence the optimal cover
has size $n$. 

Fortunately, the kind of structure occurring in Figure \ref{fig:HSLB}, where
$|N(c,f)|/|N(c)| = 2/3$ for all customer/facility pairs, does not tend to occur
in practice.
Because of the way networks are designed and the way OSPF weights are typically set,
the number of shortest path ties is typically limited, and the ratios $|N(c,f)|/|N(c)|$
are much smaller.
In part because of this, the hitting set lower bound is quite good in practice.
Indeed, as we shall see in Section \ref{section:experiments},
the HSLB equaled the optimal solution value for all our SDFL
test instances.
The solution to the IP was also often a feasible (and hence optimal) solution
to the original SDFL instance.
Moreover, the hitting set instances themselves turn out to be relatively
easy for CPLEX to solve, so we were able
to obtain HSLBs for all our SDFL test instances in reasonable time,
and usually get optimal solutions as a welcome side-effect..
In light of these observations, we will also include our HSLB code in
our comparisons of solution-finding heuristics,
albeit one that either produces an optimal solution or no solution at all.

\subsection{An Unconstrained Path-Disjoint Lower Bound}\label{sec:pdlb}

The previous lower bound applied only to the set-disjoint variant of our problem,
and required a linear programming package to compute.
For the path-disjoint variant (PDFL), the best polynomial-time computable lower
bound we currently have is not nearly as strong in practice, but can be computed in linear
time.  It applies only to networks that are symmetric directed graphs and
hence can be modeled as undirected graphs, but, as remarked above, most real
world networks have this property.
It equals the minimum cover size for the variant of PDFL where we drop
the constraint that our paths be shortest paths, and only insist that they
be vertex-disjoint.  Let us call the task of finding such a cover as the
{\em unconstrained path-disjoint facility location problem} (UPDFL).
The algorithm for solving the UPDFL problem relies on three observations.

\begin{figure}
\begin{center}
\includegraphics[height=3in]{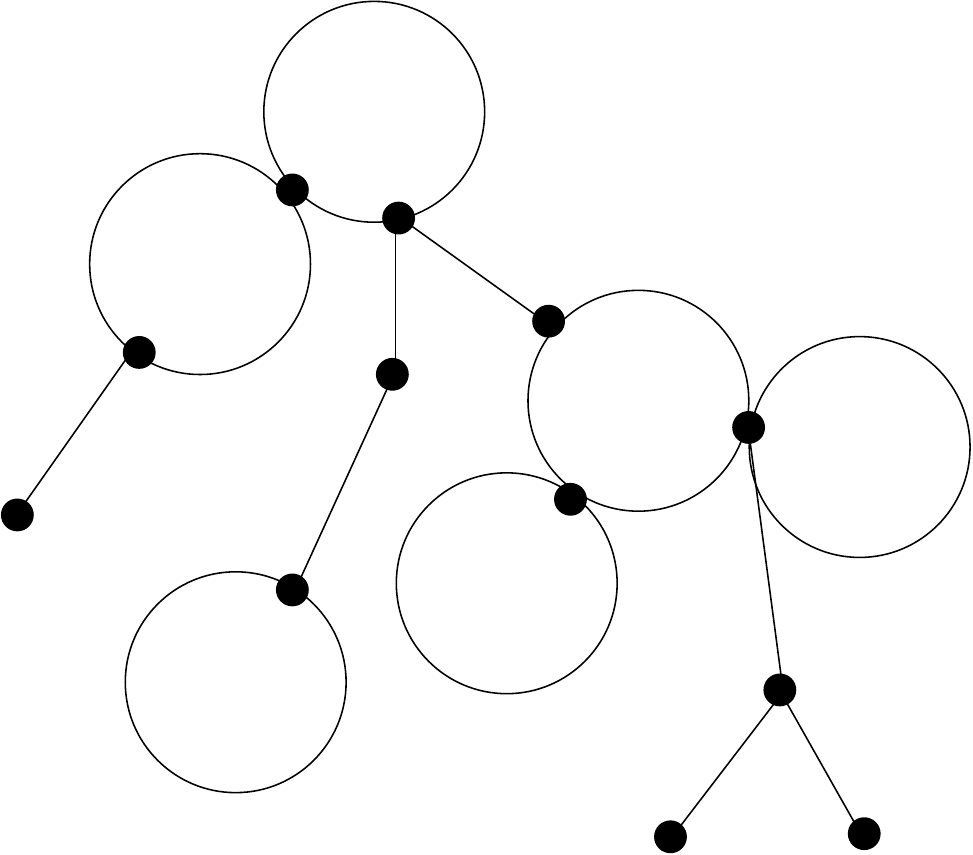}
\caption{\label{treefig}\small Decomposition into biconnected components.}
\end{center}
\end{figure}

\begin{enumerate}
\item Any undirected graph can be decomposed in linear time into its
2-connected components, the
resulting structure being treelike, with the 2-connected components playing
the role of vertices.  (See Figure \ref{treefig}.)
\item If our network is 2-connected, then any pair of facility locations in
it will cover all the customers it contains.
\item We already have an easily modifiable algorithm for PDFL on trees
(see Section \ref{sec:trees}).
\end{enumerate}

The first observation is attributed to Hopcroft in \citep{AHU}, which describes
the algorithm in detail.  To see the tree structure more clearly,
call a vertex an {\em articulation point} if deleting it from the graph breaks
the graph into two or more separate connected components, and consider the
graph  $t(G)$ derived from $G$ as follows.
The vertices of $t(G)$ are the leaves, the articulation points, and for each
2-connected component $C$, a new vertex $v_C$ representing the component.
The edges consist of all original edges that did not connect two vertices in
the same 2-connected component, together with edges $\{u,v_C\}$ for every
pair of an articulation point $v$ and a 2-connected component $C$ that contains it.
This graph cannot contain a cycle.
If it did, such a cycle would have to contain at least two vertices, each
representing either a vertex not in any 2-connected component or a maximal
2-connected set.  However, the existence of the cycle would imply that
the set consisting of all the vertices and 2-connected components represented
in the cycle would itself be 2-connected, a contradiction.
Thus $t(G)$ has no cycles, and is a connected graph since by assumption $G$ is.
Hence it is a tree.

The proofs of observations (2) and (3) go as follows:

\smallskip
\noindent
{\sc Proof of Observation 2}.
By definition,
if a graph $G$ is 2-connected and $u$ and $v$ are distinct vertices of $G$,
then $G$ contains two paths joining $u$ to $v$ whose only common vertices are $u$
and $v$ themselves.
Let $u$ and $v$ be two facility locations in our 2-connected graph $G$, and let $c$
be a customer vertex in $G$.
If $c \in \{u,v\}$ we are done, so we assume it is distinct from both.
Let $P_1(u)$ and $P_2(u)$ be two vertex-disjoint paths from $u$ to $c$,
and similarly, let $P_1(v)$ and $P_2(v)$ be two vertex-disjoint paths from $v$ to $c$.
Such paths exist by the definition of 2-connectivity.
We are done if either of $P_1(v)$ and $P_2(v)$ is vertex-disjoint from one
of $P_1(u)$ and $P_2(u)$.  So assume that both $P_1(v)$ and $P_2(v)$ intersect both
of $P_1(u)$ and $P_2(u)$.  At most one of them can contain the vertex $u$,
so assume without loss of generality that $P_1(v)$ does not contain $u$,
and let $w$ be the first vertex it contains that is an internal vertex of
one of $P_1(u)$ and $P_2(u)$, say, without loss of generality, $P_1(u)$.
Then the path that starts with $v$, proceeds along $P_1(v)$ until it hits $w$,
and then proceeds along $P_1(u)$ until it reaches $c$, is vertex-disjoint from
$P_2(u)$, and once again $u$ and $v$ cover $c$, and this holds for all cases. $\Box$

\smallskip
\noindent
{\sc Proof of Observation 3}.
Here is the needed modified version of the algorithm
of Section \ref{sec:trees}.
We assume there is at least one customer,
as otherwise the empty set is an optimal cover.

Call a 2-connected component a {\em leaf component} if it contains only one
articulation point.  As usual, call a vertex a {\em leaf} if it has degree 1.
Note that no leaf can be contained in a 2-connected component, and that all
nonleaf vertices that are not in a 2-connected components are articulation
points.
The {\em internal vertices} of a 2-connected component are all its vertices
that are not articulation points.
As in the algorithm of Section \ref{sec:trees}, we start with a pruning process.
If a vertex is a leaf in the current graph and not a customer, delete it to
get a new ``current graph.''  Similarly, if a leaf component has no internal
vertices that are customers, delete all its internal vertices, leaving only
its (single) articulation point.

After the pruning is complete, the final graph $G^*$ is such that every leaf is
a customer and every leaf component contains an internal customer.
There are two cases to consider:
\begin{enumerate}
\item $G^*$ contains no leaves or leaf components.
\item $G^*$ contains a leaf or leaf component.
\end{enumerate}

The first case implies that $G^*$ is a 2-connected graph.
By assumption, it must contain at least one customer.
If just one, say $v$, then $\{v\}$ is an optimal cover.
If more than one, then no single facility can cover all of them,
so pick any two customer vertices, say $u$ and $v$, and $\{u,v\}$ will be an optimal
cover by our second observation.

For the second case,
first note that since all the leaves are now customers,
they must be in any optimal cover.
Similarly, any cover must contain at least one facility in the interior
to each leaf component.
To see this, let $v$ be the articulation point for the component, and $c$
an interior point of the component that is a customer.
If our purported cover contains no facilities interior to the component,
there must be two paths to $c$ from facilities it does contain, but all
such paths must now contain $v \neq c$, and hence cannot be vertex-disjoint.
Let $F^*$ be a set consisting of all the leaves, together with one internal customer
from each remaining leaf component.
By the above argument, we know that the optimal cover must be at least of
size $|F^*|$.
We now argue that $F^*$ itself is a cover, and hence an optimal one.

Consider any customer vertex in $G^*$ that is not in $F^*$ and the corresponding
vertex $v$ in the underlying tree $t(G^*)$.
This is $v$ itself if $v$ is an articulation
point or a vertex not contained in any 2-connected component, and otherwise
is $v_C$ for the 2-connected component for which $v$ is an internal vertex.
Note that every leaf of this tree either is a member $F^*$ or corresponds
to a 2-connected component containing a member.
Thus if $v$ corresponds to a nonleaf of the tree, it must be covered
by $F^*$, since it has vertex-disjoint paths to at least two leaves.
The only other possibility is if $v$ is an internal vertex of a leaf component
$C$ of $G^*$.
But then $F^*$ must contain a vertex $f \neq v$ that is
one of $C$'s internal facility locations,
since it contains at least one and does not contain $v$.
Let $u$ be the single articulation point in $C$.
By our second observation, there are vertex-disjoint paths in $C$ from $v$ to
to $f$ and $u$.
Since $u$ itself must be connected to some leaf outside of $C$, this means
that $v$ has vertex-disjoint paths to two members of $F^*$.
Thus $F^*$ is an optimal cover. $\Box$

\begin{figure}[t]
\vspace{-.1in}
\begin{center}
\includegraphics[width=4in]{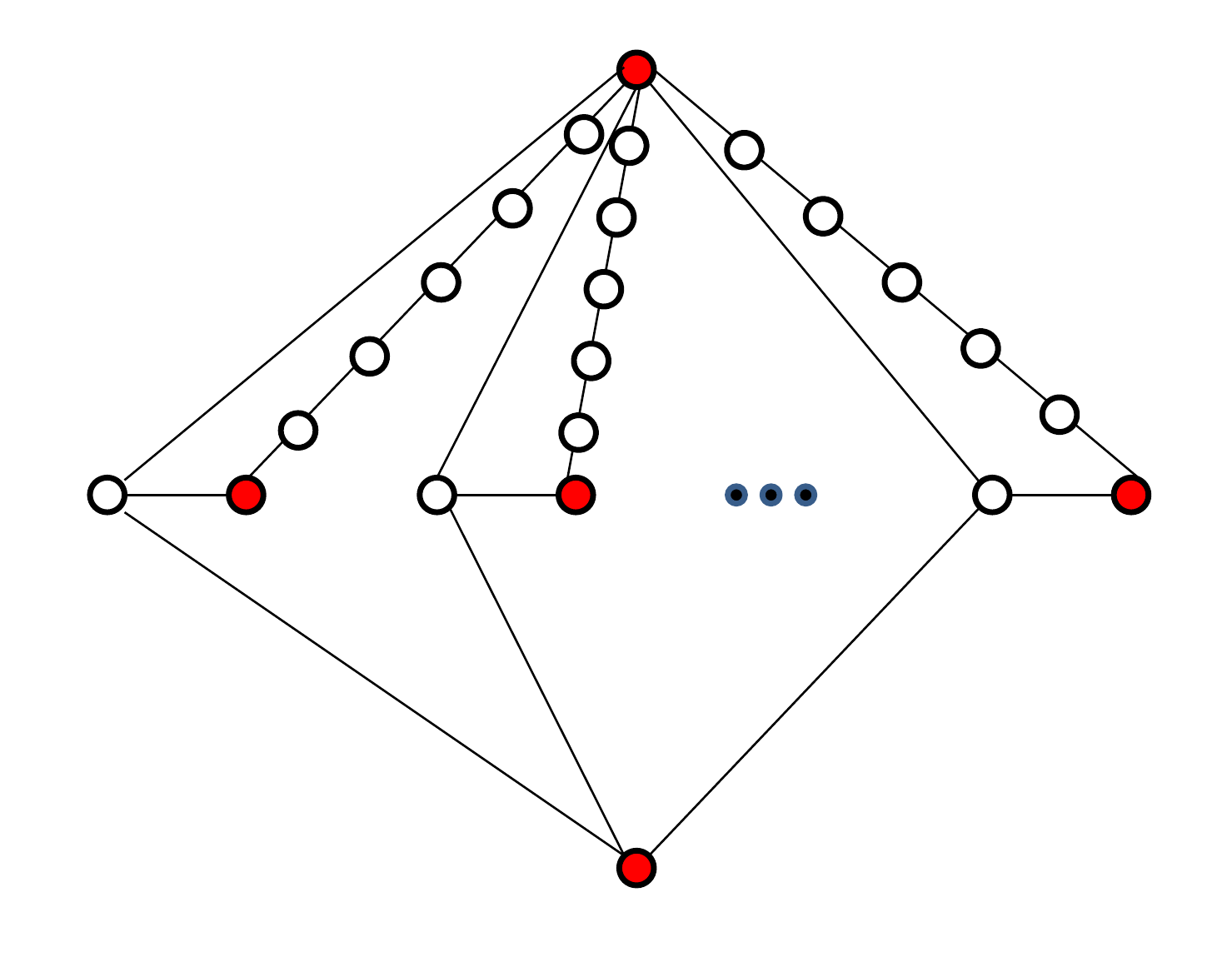}
\vspace{-.2in}
\caption{Instances where the UPDFL lower bound underestimates the optimal cover size by a factor that is linear in $n$.  The filled-in vertices are the only customers/facilities.}\label{UPDFLfig}
\end{center}
\vspace{-.3in}
\end{figure}

As with the Hitting Set Lower Bound for SDFL, the UPDFL lower bound for
PDFL is quite bad in the worst case.  For such examples we can restrict
ourselves to 2-connected graphs, since, as implied by the arguments above,
there is no advantage to allowing nonshortest paths except between vertices
in the same 2-connected component.  So consider the 2-connected graph schematized
in Figure \ref{UPDFLfig}, where the ellipsis stands for $N-3$ additional connected
vertex pairs in the middle row.
Here the filled-in vertices are the customers and the empty vertices are neither
customers nor facilities.  There are $N$ customers in the middle row, and
every PDFL cover must include all of them, since any shortest path
to a target customer in the row from any other customer
vertex (in the row above it or below it) has to go though the vertex connected to
the target on its left.  However, an optimal UPDFL cover consists simply of the two
customers at the top and bottom of the figure, since the length-6 path
from the top customer to each middle customer shares no vertices with
the length-2 path to it from the bottom customer.
The total number of vertices is $n = 7N+2$, and so the $N/2$ ratio between
the size of the two covers is linear in $n$.

Note that this is also a lower bound on how large a penalty we
can experience by limiting ourselves to shortest paths.
For any given instance, if there is a gap between the UPDFL lower bound and
the PDFL cover we construct, that gap is the sum of (1) the amount by
which the number of monitors
used in our constructed solution exceeds the optimal number,
plus (2) the penalty (in excess monitors needed)
for restricting ourselves to shortest paths.
(There may also, of course, be penalties for {\em not} using shortest paths,
such as added congestion and lessened monitoring accuracy.)

\section{Heuristics}\label{section:heuristics}
\subsection{Greedy Heuristic}\label{sec:greedy}
The simplest, and most general, heuristic that we tested is a straightforward
greedy heuristic, denoted by \greedy\ in what follows.
This is a heuristic for solving the problem described by the MIP in Section \ref{section:MIP},
and can be applied both to path-disjoint and set-disjoint versions of the problem.
It consists of three phases.
\begin{enumerate}
\item Construct the triples $\{(c,f,f'): \{f,f'\}$ covers $c$ in a path-disjoint
(set-disjoint) fashion\} and store them in appropriate data structures.
\item Initialize our cover $F' \subseteq F$ to the empty set.
Then, as long as $F'$ does not cover all $c\in C$,
find an $f \in F - F'$ such that $F' \cup \{f\}$ covers
a maximum number of additional elements.
Since all customers are facilities, there is guaranteed to be a facility that will
cover at least one more customer.
\item {\em Minimalize} the cover by testing
each facility in the cover in turn to see if its removal would still
leave a valid cover, and if so remove it.
\end{enumerate}
Note that, except for the fact that we allow a customer to cover itself,
this is essentially a greedy algorithm for the {\sc Set Cover by Pairs} problem.
It differs from the ``Greedy'' algorithm of \citet{HS05} for that problem, however.
The latter does not include a minimalization
phase, and in its growth phase it chooses the best {\em pair} to add.
It is hence likely to be significantly slower and less effective in practice than
our greedy variant.

Our implementation of {\greedy} uses randomized tie-breaking.
It also allows two options for starting the construction phase: either take
a pair that covers the maximum number of customers, or take a
random customer, which at least covers itself.
We also allow two options for the minimalization phase: either consider
facilities in the reverse of the order in which they were added to
the cover, or consider them in random order.

None of these options (or combinations of options) seems to dominate
any of the others, so here our primary tests cover
a multiple-run variant on {\greedy} ({\gr4}), which
performs 400 runs and returns the best solution found.
In those 400 runs, we cycle through
the four option combinations, using each for 100 runs.
Multiple runs help amortize the cost of triple generation.
The latter takes time $\Omega(|C|\cdot |F|\cdot |F|)$ and typically yields a
set $T$ of triples that is substantially smaller than the
$|C|\cdot |F|\cdot |F|/2$ upper bound, whereas the time for one
construction/minimalization cycle is $O(|T| + |F|^2)$.
Moreover, as we shall see, the actual overall running times are not large
enough to represent an obstacle in the proposed applications, where
the time to deploy a solution will greatly exceed the time to compute it.
Our experiments will, however, explore the tradeoffs between robustness
and number of iterations.

Aside from the triple-generation, which will be described in
Section \ref{section:triples}, the implementation of \greedy\ is fairly
straightforward, with two exceptions.
First, to avoid the multiple cache misses that would occur if we did many
random accesses into the list of triples $(c,f,f')$, we keep three copies
of that list, one sorted by the customer $c$, one by the larger-indexed
of $f$ and $f'$, and one by the smaller, with pointers into the start
of the portion for each choice of the sorting key.
The list indexed by $c$ is constructed as part of the triple generation
process.
The other two lists can be computed in linear time by first precomputing
how many triples there are for each key and reserving space for them
in the corresponding list, and then maintaining appropriate counters while
the lists are being constructed.
The relevant portions of the lists for each facility $f$ are accessed
at most four times, once at the start, once when (and if) $f$ is added
to the cover, once in preparation for minimalization, and once when
(and if) $f$ is deleted.
Another detail:  Because we are storing the data three times, we
trade a bit of time to save space, by running the triple generation code
twice.
The first run is used to compute the number of triples, which then allows
us to reserve the correct amount of space for each list of triples,
rather than simply pre-allocating room for the maximum
number $|C|\cdot|F|\cdot(|F|-1)$ of triples.

Our second nonobvious implementation detail consists of the data
structures used to handle minimalization, where we need an efficient
way of noticing when facilities are required, that is, can no longer
be deleted.
Let us call a pair of facilities $\{f,f'\}$ {\em valid} for a customer
$c$ if the pair covers $c$ in the appropriate fashion, and both its
members are still in our current cover.
A facility $f$ can become required for one of two reasons:
\begin{enumerate}
\item $f$ is also a customer and there are no longer any valid pairs for it.
\item There is a customer $c$ such that every valid pair for $c$ contains $f$.
\end{enumerate}
In order to determine when these situations occur, we keep track of
two types of counts.
First, for each customer $c$, we keep track of the current number
{\tt mycount}$(c)$ of valid pairs for $c$.
This makes it easy to check when the first of the above two conditions occurs.
Second, for each customer/facility pair $(c,f)$, we keep track of the current
number {\tt covercount}$(c,f)$ of valid pairs for $c$ that contain $f$.
This information is then stored in a set of linked lists, one for each
customer $c$ and number $n$ of pairs, where the list for $(c,n)$ contains
all those facilities $f$ with {\tt covercount}$(c,f) = n$.
To determine whether the second condition occurs, we simply look at the
lists for $(c,\mbox{\tt mycount}(c))$ for all customers with
{\tt mycount}$(c)$ less than the size of the current cover, since for a
cover of size $s$, there can be at most $s-1$ valid pairs containing the
same facility.
It is not difficult to see that the total time for maintaining the data
structures is $O(|T|)$, and the total time for spotting all the changes from
nonrequired to required is at most $|C|$ times the number of deletions, 
which is $O(|C|\cdot |F|)$.

\vspace{-.1in}
\subsection{Genetic Algorithm}

Genetic algorithms are variants on local search that
mimic the evolutionary process of survival of the fittest.
Our genetic algorithm, called \genetic\ in what follows,
uses the ``random key'' evolutionary strategy proposed by 
\citet{Bea94}, as modified by \citet{GonRes10a} so
as to bias the choices of parents and the operation of crossovers.
In this approach, the ``chromosomes'' that do the evolving are not
solutions themselves, but ``random-key'' vectors from which solutions
can be derived.
(This is in contrast to traditional genetic algorithms, in which the chromosomes
are the solutions themselves.)

As with \greedy\, which it uses as a subroutine,
our \genetic\ algorithm applies to the MIP formulation
of the set-disjoint and path-disjoint versions of the problem, and does not
directly exploit any graph-theoretical properties of the underlying network.
Let the set of facilities $F = \{f_1,f_2,\ldots ,f_k\}$.
In \genetic, each chromosome is a 0-1 vector
$( \mathit{gene}_1, \ldots, \mathit{gene}_k)$ of length $k$.
We derive a solution (cover) $F'$ from such a chromosome as follows:
Start with $F' = \{f_i: \mathit{gene}_i = 1,\ 1 \leq i \leq k\}$.
If $F'$ is already a cover, halt.
Otherwise, complete $F'$ to a cover using our basic \greedy\ algorithm
(without randomization or minimalization).
The ``fitness'' of the chromosome is then the size of the resulting
cover $F'$.

The overall genetic algorithm starts by creating a population of $p$
randomly generated chromosomes, in which each gene has an equal probability
of being 0 or 1.
The population then evolves in a sequence of generations.
In each generation we start by computing the solution for each member
of the population, yielding its fitness.
The top 15\% of the population by fitness (the ``elite'' members)
automatically pass on to the next
generation's population.
In addition, to provide diversity, 10\% of the next generation consists
of randomly generated chromosomes, like those generated initially.
For the remaining 75\%, we repeat the following ``biased crossover'' construction:
Pick a random member $(x_1,x_2,\ldots,x_k)$ of the top 15\% of the current
generation's population and a random member $(y_1, y_2,\ldots,y_k)$ of
the remaining 85\%.
The ``child'' $(z_1,z_2,\ldots,z_k)$ of this pairing is determined as
follows:
For each $i$, independently, set $z_i = x_i$ with probability 70\%,
and otherwise set $z_i = y_i$.

This scheme insures that the best solution always survives into
the next generation, where it may continue as champion or be dethroned
by a better solution.
Generations are continued until $q$ have gone by without any
improvement in the fitness (cardinality) of the best solution, which
we then output.
Our code includes the option of minimalizing this final solution, but
in practice this never has helped. 

In our results for \genetic\, we typically took $p = \min\{300,|F|\}$
and $q = |V|$.
For these and the other parameters of the algorithm (top 15\%, etc.)
we chose values based on preliminary experimentation and on intuition derived from
applications of this approach to other problems.
They appear to yield a good tradeoff between running time and
quality of solution for our application, but we make no claim as to
their optimality.

In our experiments, \genetic\ did not prove competitive for the set-disjoint
case.
The much faster heuristics of the next two sections,
which exploit the properties of that case, found optimal solutions for
all our test instances.
In the path-disjoint case, where, for our larger test instances,
multiple-iteration \greedy\ is the only competition, and where we have no
readily-computable lower bound that is as good as {\lb} is for the
set-disjoint case, \genetic\ can at least give us some idea of
the extent to which \gr4\ can be improved upon.

\subsection{The Single Hitting Set Heuristic (\shs)}

This section describes a simple heuristic that applies to the set-disjoint variant
of our problem (SDFL), and is based on Lemma \ref{neighborlemm}
and our Hitting Set lower bound for that variant.
Instead of optimally solving the NP-hard Hitting Set problem described
in Section \ref{section:HSLB},
we find a not-necessarily optimal solution using a straightforward greedy
hitting set heuristic:
Let $X \subseteq F$ denote the hitting set we are building, initially the empty set.
While $X$ is not a valid hitting set, repeatedly add that facility that hits
the most as-yet-unhit (customer,neighbor) pairs, ties broken randomly,
where a facility $f$ hits a pair $(c,v)$ if either $f=c$ or $v \notin N(c,f)$.
Once $X$ is a valid hitting set, we check whether it is also
a valid cover of all the customers and, if not, we extend it to one
using \greedy, this time including randomization and minimalization.
(Minimalization is applied even if $X$ is itself already a cover.)

As in \gr4\, we use this simple heuristic in a multi-iteration scheme,
where we take the best solution found.
This allows us to exploit the variance introduced
by all the randomization -- in addition to all the randomized tie-breaking,
we also alternate between reverse
delete and randomized delete in the minimalization phases.
We shall denote this scheme by \shs$(k)$, where $k$ is the number of
iterations, again typically 400.

Since \shs$(k)$ is a polynomial-time algorithm, Theorem \ref{equivalence} implies
that we cannot prove a polylogarithmic worst-case performance guarantee for it
unless a widely-believed conjecture is false.
However, in practice the initial hitting set $X$ is typically close to a cover,
and so the following provides an informative bound.

\begin{theorem}  \label{SHSbound}
For a given instance $I$ of SDFL based on a graph $G=(V,A)$, let {\sc OPT}$(I)$ be
an optimal cover, and suppose we run {\shs} one time, with
$X(I)$ being the hitting set generated and \shs$(I)$ being the final cover.
Then, if we set $\Delta = |\mbox{\shs}(I)|-|X(I)|$, the number of
facilities that needed to be added (removed) to turn $X$ into a minimalized cover, we have
$$
\mbox{\shs}(I) \leq (\ln(|A|)+1)\cdot \mbox{\sc OPT}(I) + \Delta.
$$
\end{theorem}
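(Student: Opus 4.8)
The plan is to recognize the greedy hitting-set phase of {\shs} as exactly the classical greedy algorithm for {\sc Set Cover}, run on a ground set of size at most $|A|$, and then to chain the standard logarithmic performance bound for that algorithm with the lower-bound property of the HSLB established in Section~\ref{section:HSLB}.

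First I would make the reduction explicit. Take the ground set to be $\mathcal{U} = \{(c,v) : c \in C,\ v \in N(c)\}$ and, for each facility $f \in F$, the set $S_f = \{(c,v) \in \mathcal{U} : f = c \text{ or } v \notin N(c,f)\}$. A subset of facilities is a valid hitting set in the sense of Section~\ref{section:HSLB} precisely when the corresponding $S_f$'s cover $\mathcal{U}$, and the {\shs} rule of repeatedly adding the facility that hits the most as-yet-unhit pairs is literally the greedy {\sc Set Cover} rule. Two easy facts should be recorded: the instance is feasible, since $C \subseteq F$ puts each pair $(c,v)$ into $S_c$, so the greedy phase does terminate with a valid hitting set $X(I)$; and the ground set has size $|\mathcal{U}| = \sum_{c \in C}|N(c)| = \sum_{c \in C}\mathrm{outdeg}(c) \le |A|$, since the pairs $(c,v)$ are in bijection with the arcs leaving customer vertices.

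Next I would invoke the classical guarantee for greedy {\sc Set Cover}: its output has cardinality at most $(\ln m + 1)$ times the optimum, where $m$ is the number of ground-set elements; here $m = |\mathcal{U}| \le |A|$, so $|X(I)|$ is at most $(\ln|A| + 1)$ times the optimal {\sc Set Cover} value for $(\mathcal{U},\{S_f\})$. That optimal value is exactly the optimal value of the integer program~(\ref{LBIP}), the HSLB for $I$; and by the observation preceding~(\ref{LBIP}) (a consequence of Lemma~\ref{neighborlemm}), every feasible set-disjoint cover of $C$ satisfies the constraints of~(\ref{LBIP}), so in particular the HSLB is at most $|\mbox{\sc OPT}(I)|$. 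Combining, $|X(I)| \le (\ln|A|+1)\cdot|\mbox{\sc OPT}(I)|$, and since $|\mbox{\shs}(I)| = |X(I)| + \Delta$ by the definition of $\Delta$, the bound follows.

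I do not expect any genuine obstacle: the entire content sits in the first, bookkeeping step, and the statement has been arranged so that the extension-and-minimalization postprocessing is simply absorbed into $\Delta$. The one point needing care is pinning down the correct ground-set size --- $|A|$, rather than, say, $|C|\cdot|F|$ or the number of facility pairs --- so that the approximation constant comes out as $\ln|A|+1$ exactly as claimed; once $|\mathcal{U}| \le |A|$ is established, the rest is a citation together with the already-proved lower bound.
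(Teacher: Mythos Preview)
Your proposal is correct and follows essentially the same approach as the paper's proof, which simply cites the $(\ln n + 1)$ greedy {\sc Set Cover} guarantee of Johnson and Lov\'asz together with the HSLB being a lower bound on $\mbox{\sc OPT}(I)$. You supply more detail than the paper does---in particular the explicit bookkeeping that $|\mathcal{U}|=\sum_{c\in C}|N(c)|\le |A|$, which the paper leaves implicit---but the underlying argument is identical.
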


\begin{proof}
This follows from the fact that the optimal solution to the hitting set problem
of Section \ref{section:HSLB} is a lower bound on OPT$(I)$,
together with the result of \citet{Johnson74} and \citet{Lovasz75} that the greedy algorithm
for {\sc Set Cover} (and hence for {\sc Hitting Set}) has worst-case ratio
at most $\ln n + 1$, where $n$ is the size of the set being covered/hit.
\end{proof}

Note that the $\ln n +1$ bound for the greedy {\sc Hitting Set} heuristic
is essentially the best possible guarantee for
a polynomial-time algorithm for this problem, as follows from 
\citet{BGLR93},
\citet{Feige98}, and
\citet{AMS06}.
In practice, however, the greedy {\sc Hitting Set} heuristic typically
returns a set of size much closer than this to the optimal, often only a
single-digit percentage above optimal.
As we shall see, the performance of \shs$(k)$ is similarly good.

\subsection{The Double Hitting Set Heuristic (\dhs)}

This section describes a second heuristic for SDFL that exploits Lemma \ref{neighborlemm},
as well as the way OSPF weights tend to be set.
It involves computing two hitting sets rather than one, but is a good
match for {\tt SHS}, in that both heuristics perform very well and neither
consistently dominates the other.
 
As remarked above, while OSPF can split traffic in the
case of multiple shortest $c\rightarrow f$ paths, the amount of splitting
that occurs in practice seems to be limited.  
The following two definitions provide a new way to quantify this effect.
\begin{Def}
A potential facility location $f$
is {\em good for} customer vertex $c$ 
if $f\ne c$ and $|N(c,f)| = 1$, that is, all shortest $c\rightarrow f$
paths leave $c$ via the same arc.
%Equivalently, 
%if $wt(a,b)$ denotes the weight of arc $(a,b)$
%and $dist(a,b)$ denotes the distance from $a$ to $b$,
%then a potential facility location $f$ is good for 
%customer vertex $c$ if $f\ne c$ and 
%$wt(c,v)+dist(v,f)>dist(c,f)$
%for all out-neighbors $v$ of $c$ but one. 
\end{Def}

If there is just one shortest $c\rightarrow f$ path, clearly $f$ is good for $c$, 
but $f$ may be good for $c$ even when there are many $c\rightarrow f$ paths.
Note that if $f$ is good for $c$, then, under OSPF, there can be no splitting at
$c$ of the traffic from $c$ to $f$, although splitting at later vertices in
the path would be possible.

\begin{Def}
Let $t$ be an integer, $1 \leq t \leq |F|$.
A customer vertex $c$ is {\em $t$-good} if there are
$t$ or more good potential facility locations $f$ for $c$ (and {\em $t$-bad}
otherwise).
\end{Def}

For our test instances, inspired or derived from real-world networks and
models, a large majority of the customers tend to be $t$-good, even for $t\sim |F|/2$,
and almost all are $1$-good.
Our Double Hitting Set heuristic (DHS) takes $t$ as a parameter, and is designed to
cover the $t$-good customers nearly optimally, via a union $X\cup Y$ of two hitting
sets, leaving the $t$-bad customers, if any, to be covered via the \greedy\ algorithm.
Let $C_t$ denote the set of all $t$-good customers.
For a given $t$, we shall, when it makes a difference, denote the algorithm by ${\rm DHS}_t$.

The first step of ${\rm DHS}_t$ is
to choose one good potential facility location $f \in F$ for each $c \in C_t$,
without choosing too many vertices in total.
For each such $c$, let $S_c$ be the set consisting of $c$ plus all
the good potential facility locations for $c$.
By the definition of $t$-good and the fact that $C \subseteq F$, we must have
$|S_c| \geq t+1 \geq 1$.
We want to choose a small set $X \subseteq F$ such that $X\cap S_c\ne \emptyset$
for all $t$-good customers $c$.
In other words, we want an $X$ that hits (intersects) all such sets $S_c$.
We again settle for using the standard
greedy {\sc Hitting Set} heuristic, as described in the previous section,
to construct a hitting set $X$ that we hope is near-optimal.

Next, let $C' \subseteq C_t$ be the set of $t$-good customers $c$ covered by $X$, either
because $c \in X$ or because there exist distinct $f_1,f_2 \in X$ such that
$c$ is covered by $\{f_1,f_2\}$ in a setwise disjoint fashion.
Let $C'' = C_t - C'$, and for each customer $c \in C''$,
choose a facility $f_c \in X\cap S_c$.  Note that by definition of $C''$,
we have $f_c \neq c$, so let $x_c \in N(c)$ be the unique neighbor of $c$
which is on all shortest paths from $c$ to $f_c$.  This choice is unique since
the fact that $f_c \in S_c$ implies that $f_c$ is good for $c$, and so
by definition all shortest paths from $c$ to $f$ leave $c$ through the same arc.

Our second {\sc Hitting Set} problem is designed so any solution $Y$ will, for
each $c \in C''$, contain either $c$ itself or a facility $f \neq c$ such that
$\{f_c,f\}$ covers $c$ is a setwise disjoint fashion.
The {\sc Hitting Set} problem is specified as follows.
For each such customer vertex $c$, let $F_c$ consist of $c$
together with all potential facility
locations $f\ne c$ for which all shortest $c\rightarrow f$ paths
{\em avoid} $x_c$.
Note that by Lemma \ref{neighborlemm}, for any $f \in F_c$
(other than $c$ itself), the fact that $f_c$ is good for $c$
and the definition of $x_c$ together imply that the pair $\{f_c,f\}$ will cover $c$
in the desired fashion.
We thus have the following:

\begin{lemma} \label{XcupY}
Given $X$ as above, any hitting set $Y$ of 
$\{F_c: c \in C''\}$ is such that $X\cup Y$ is a cover of $C' \cup C'' = C_t$,
the set of all the $t$-good customer nodes $c$.   
\end{lemma}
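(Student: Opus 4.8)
The plan is to verify directly that every $t$-good customer $c$ is covered by $X \cup Y$, splitting into the two cases by which $C'$ and $C''$ were defined. First I would handle $c \in C'$: by the very definition of $C'$, either $c \in X \subseteq X \cup Y$ (so $c$ covers itself), or there are distinct $f_1, f_2 \in X$ such that $\{f_1,f_2\}$ covers $c$ in a setwise-disjoint fashion; since $f_1, f_2 \in X \subseteq X \cup Y$, the pair witnesses that $c$ is covered by $X \cup Y$. So this case is essentially immediate from the construction.

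The substantive case is $c \in C''$. Here I would unwind the chain of definitions that the excerpt has already set up. Since $c \in C'' \subseteq C_t$, we chose $f_c \in X \cap S_c$, and because $c \notin C'$ we have $f_c \neq c$; moreover $f_c \in S_c$ means $f_c$ is good for $c$, so there is a unique neighbor $x_c \in N(c)$ lying on every shortest $c \to f_c$ path. The set $F_c$ was defined to consist of $c$ together with all facilities $f \neq c$ all of whose shortest $c \to f$ paths avoid $x_c$. Now $Y$ hits $F_c$, so $Y \cap F_c \neq \emptyset$; pick $f \in Y \cap F_c$. If $f = c$ then $c \in Y \subseteq X \cup Y$ and covers itself. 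Otherwise $f \neq c$ and all shortest $c \to f$ paths avoid $x_c$, while all shortest $c \to f_c$ paths pass through $x_c$; I would then invoke Lemma~\ref{neighborlemm} exactly as the paragraph preceding the statement indicates: since $f_c$ is good for $c$, $N(c,f_c) = \{x_c\}$, and since every shortest $c\to f$ path avoids $x_c$ we get $x_c \notin N(c,f)$, hence $N(c,f_c) \cap N(c,f) = \emptyset$. By the consequence of Lemma~\ref{neighborlemm} noted in Section~\ref{section:HSLB} — a pair $\{f_1,f_2\}$ covers $c$ setwise-disjointly iff $N(c,f_1) \cap N(c,f_2) = \emptyset$ — the pair $\{f_c, f\}$ covers $c$ in a setwise-disjoint fashion. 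Since $f_c \in X$ and $f \in Y$, both lie in $X \cup Y$, so $c$ is covered.

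Combining the two cases, every $c \in C' \cup C'' = C_t$ is covered by $X \cup Y$, which is the claim. I do not expect a genuine obstacle here: the lemma is really a bookkeeping statement assembling facts the text has already established, and the only non-formal ingredient is the correct application of Lemma~\ref{neighborlemm} (equivalently, the $N(c,\cdot)$-intersection criterion for setwise-disjoint covering). The one point to be careful about is the degenerate subcase $f = c$ (and likewise $c \in X$ in the first case), where ``$c$ covers itself'' must be explicitly allowed — which it is, by the convention recorded in the definition of a setwise-disjoint cover. No edge case involving $t$-badness arises, since by hypothesis we only assert coverage of $C_t$.
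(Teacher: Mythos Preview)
Your proposal is correct and follows exactly the approach the paper takes: the paper's ``proof'' is the sentence immediately preceding the lemma, which observes that for any $f \in F_c \setminus \{c\}$, the goodness of $f_c$ and the definition of $x_c$ together with Lemma~\ref{neighborlemm} force $\{f_c,f\}$ to cover $c$ setwise-disjointly. You have simply unpacked this carefully, handled the $C'$ case and the $f=c$ subcase explicitly, and invoked the $N(c,\cdot)$-intersection criterion in the right place.
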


\noindent
For the DHS algorithm, we also use the greedy {\sc Hitting Set} heuristic
to produce $Y$.  At this point, we have a cover $X \cup Y$ for $C_t$.
DHS concludes by using \greedy\ to extend $X \cup Y$ to a minimalized
cover $F'$ of {\em all} the customers, not just the $t$-good ones.

As with {\shs}, we will consider multi-iteration schemes, \dhs$_t(k)$,
where we perform $k$ independent runs of \dhs$_t$, in each using randomized
tie breaking when running the greedy {\sc Hitting Set} heuristic
and the final \greedy\ completion phase, and alternating between
reverse and random deletes in the minimalization phases.
Our main experiments will be with \dhs$_1(400)$ and
\dhs$_{\lfloor|F|/2\rfloor}(400)$, which seemed somewhat complementary
in our initial, single-iteration trials.
For simplicity, we shall denote the latter by {\dhs}$_H$(400)
(with the $H$ standing for ``half'').

As with SHS$(k)$, since DHS$_t(k)$ is a polynomial-time algorithm,
Theorem \ref{equivalence} implies
that we cannot prove a polylogarithmic worst-case performance guarantee for it unless
a widely-believed conjecture is false.
However, in practice the initial hitting set $X$ is typically quite small,
and often there are very few $t$-bad customers, so meaningful bounds are
possible.

\begin{theorem}  \label{DHSbound}
Suppose $1 \leq t \leq |F|$, and let $I$ be an instance of SDFL
with at least one $t$-good customer.
Let ${\rm OPT}_t(I)$ be the size of an optimal
cover for the set $C_t$ of all $t$-good customers in $I$.
Suppose we run ${\rm DHS}_t$ one time on $I$, with $X(I)$ and $Y(I)$
being the two hitting sets generated, which, as observed constitute a cover
of $C_t$.  Then,
\begin{eqnarray*}
|X \cup Y| \leq (\ln(|C_t|)+1) \cdot{\rm OPT}_t(I) + |X|
\end{eqnarray*}
and, if $t \geq |F|/3$, we have
\begin{eqnarray*}
|X \cup Y| \leq (3.47\cdot\ln(|C_t|)+2)\cdot{\rm OPT}_t(I)
\end{eqnarray*}
\end{theorem}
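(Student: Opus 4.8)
The plan is to bound $|X|$ and $|Y|$ separately, each against ${\rm OPT}_t(I)$, and then add. Note first that $C_t\neq\emptyset$, since $I$ has at least one $t$-good customer, so any cover of $C_t$ is nonempty and hence ${\rm OPT}_t(I)\geq 1$.

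I would obtain the bound on $|Y|$ first, since it already yields the first displayed inequality. Fix an optimal cover $O$ of $C_t$, so $|O|={\rm OPT}_t(I)$, and I claim $O$ is a hitting set of the family $\{F_c:c\in C''\}$ that $Y$ is built to hit. Take $c\in C''\subseteq C_t$. Since $O$ covers $c$, either $c\in O$ — and then $O$ hits $F_c$ because $c\in F_c$ — or there is a pair $\{f_1,f_2\}\subseteq O$ covering $c$ in a setwise-disjoint fashion, necessarily with $f_1,f_2\neq c$ (a covering pair lies in $F-\{c\}$ by definition). By the consequence of Lemma~\ref{neighborlemm} recorded in Section~\ref{section:HSLB}, this gives $N(c,f_1)\cap N(c,f_2)=\emptyset$. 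Since $f_c$ is good for $c$, $N(c,f_c)=\{x_c\}$ is a singleton, so $x_c$ lies in at most one of $N(c,f_1),N(c,f_2)$; say $x_c\notin N(c,f_1)$. Then $N(c,f_c)\cap N(c,f_1)=\emptyset$, so (again by that consequence of Lemma~\ref{neighborlemm}) $\{f_c,f_1\}$ also covers $c$ setwise-disjointly, and since $x_c$ lies on every shortest $c\to f_c$ path it can lie on no shortest $c\to f_1$ path; as $f_1\neq c$ this says exactly that $f_1\in F_c$. Hence $O$ hits $F_c$ through $f_1$, proving the claim. Consequently the minimum hitting set of $\{F_c:c\in C''\}$ has size $\leq{\rm OPT}_t(I)$, and since $Y$ is produced by the greedy Hitting Set heuristic hitting the $|C''|\leq|C_t|$ sets $F_c$, the greedy bound cited in the proof of Theorem~\ref{SHSbound} (ratio $\ln|C''|+1$) gives $|Y|\leq(\ln|C''|+1)\,{\rm OPT}_t(I)\leq(\ln|C_t|+1)\,{\rm OPT}_t(I)$, with $Y=\emptyset$ when $C''=\emptyset$. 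Since $|X\cup Y|\leq|X|+|Y|$, this is precisely the first inequality.

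For the second inequality I would bound $|X|$ using the hypothesis $t\geq|F|/3$. Here $X$ is the greedy Hitting Set solution for $\{S_c:c\in C_t\}$, and the key point is that every $S_c$ satisfies $|S_c|\geq t+1>|F|/3$ while the ground set has only $|F|$ facilities. Running the greedy process, let $m_j$ be the number of not-yet-hit sets after $j$ picks, so $m_0=|C_t|$. If $m_j\geq 1$, the not-yet-hit sets contribute more than $m_j|F|/3$ set-facility incidences distributed among at most $|F|$ facilities, so some facility hits more than $m_j/3$ of them; the greedy choice therefore forces $m_{j+1}<(2/3)m_j$. Hence $m_j<(2/3)^j|C_t|$ for $j\geq 1$, the process halts as soon as $(2/3)^j|C_t|\leq 1$, and (checking $|C_t|=1$ directly) $|X|\leq\frac{\ln|C_t|}{\ln(3/2)}+1\leq 2.47\ln|C_t|+1$ because $1/\ln(3/2)=2.466\ldots$. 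Combining this with the bound on $|Y|$ and ${\rm OPT}_t(I)\geq 1$,
\[
|X\cup Y|\;\leq\;|X|+|Y|\;\leq\;(2.47\ln|C_t|+1)\,{\rm OPT}_t(I)+(\ln|C_t|+1)\,{\rm OPT}_t(I)\;=\;(3.47\ln|C_t|+2)\,{\rm OPT}_t(I),
\]
which is the second inequality.

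The greedy-shrinkage estimate and the arithmetic with the numerical constants are routine. The step that needs genuine care is the claim that the optimal cover $O$ of $C_t$ hits every $F_c$: it rests on the equivalence among ``$x_c\notin N(c,f_1)$'', ``$\{f_c,f_1\}$ covers $c$ setwise-disjointly'', and ``$f_1\in F_c$'', which must be routed through Lemma~\ref{neighborlemm} (the same lemma already used where $F_c$ is defined) and through the fact that a setwise-disjoint covering pair of $c$ avoids $c$ itself. I would also double-check the degenerate cases $C''=\emptyset$ and $|C_t|=1$ so that the logarithms and the greedy guarantees are applied legitimately.
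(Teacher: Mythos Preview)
Your proof is correct and follows essentially the same approach as the paper's. For the bound on $|Y|$, the paper appeals more tersely to the observation that the sets $F_c$ are a subset of the constraint sets in the Hitting Set lower bound of Section~\ref{section:HSLB} (restricted to $t$-good customers), whereas you unpack this explicitly by verifying that any cover $O$ of $C_t$ hits each $F_c$ via Lemma~\ref{neighborlemm}; these are the same argument at different levels of detail. Your greedy-shrinkage bound on $|X|$ and the final combination are identical to the paper's.
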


\noindent
Note that, for our implementation choice of $t = \lfloor |F|/2 \rfloor$, we have
$t \geq |F|/3$ as long as $|F| \geq 2$.
Also, the cover of all of $C$ produced by ${\rm DHS}_t$ can exceed $|X \cup Y|$ by at most
the number $|C-C_t|$ of $t$-bad customers, so if
almost all customers are $t$-good, the above bounds will reflect the overall
performance of DHS.
Finally, note that the logarithm here is of $|C|$ or something less, whereas for
the analogous Theorem \ref{SHSbound} for SHS, the logarithm was over the potentially
much larger $|A|$, which might affect the relative constant factors in the bounds.

\smallskip
\begin{proof}
Observe that the sets $F_c$ in our second hitting set constitute a subset of those
used in our {\sc Hitting Set} lower bound, restricted to the $t$-good customers.
Thus the optimal hitting set for this second
problem cannot be larger than ${\rm OPT}_t(I)$.
Thus, by the results of \citet{Johnson74} and \citet{Lovasz75} 
and the fact that, by assumption,
$C_t \neq \emptyset$, we have
$|Y| \leq (1 + \ln |C|)\cdot{\rm OPT}_t(I)$, and the first claim follows.

We cannot apply the same argument to the set $X$, since in constructing
$X$ we restricted attention to good potential facility locations, and ${\rm OPT}_t(I)$
need not be so restricted.
However, a different argument applies when $t \geq |F|/3$,
which will lead to the proof of our second claim.
This choice of $t$ means that, for each $t$-good vertex $c$, $|S_c| \geq |F|/3$.
Therefore, $\sum_{\mbox{$c$ is $t$-good}}|S_c| \geq |C_t|\cdot|F|/3$,
and so some potential facility location $f$ must be in at least $|C_t|/3$
of the sets.
Consequently, the greedy choice must hit at least that many sets.
By essentially the same argument, it must hit at least $1/3$ of
the remaining unhit sets each time it makes a choice, and so
must have completed constructing its hitting set $X$ after $y$ steps, where
$(2/3)^y|C| < 1$.
This means it suffices that $y > \ln(|C|)/\ln(3/2) > 2.47\cdot\ln(|C|)$,
and so $|X| \leq 1+ 2.47\cdot\ln(|C|)$.
The second claim follows.
\end{proof}

\subsection{Triple Generation.}\label{section:triples}
All our cover-generation algorithms require the generation of the
set of relevant triples $(c,f_1,f_2)$ (set-disjoint or path-disjoint)
used by our basic \greedy\ algorithm -- the algorithms other than \greedy,
either use it as a core-subroutine (\genetic) or as a fallback device to
complete the cover they generate (SHS and DHS).
Because this set is potentially of size $\Theta(|C|\cdot|F|^2)$,
and would naively take significantly more time than that to generate,
its computation typically constitutes a significant portion of the running time.
In this section we show how the time can be reduced to $O(|V|^3)$ for real-world
networks, in both the set- and path-disjoint cases.

In what follows, let $n = |V|$ and $m=|A|$.
For real-world data networks, we may assume that $m \leq an$
for some relatively small constant $a$.
We also have that $|C| \leq |F| \leq n$, although both $|C|$ and $|F|$
may be proportional to $n$.
Let $T_S$  and $T_P$ be the sets of triples needed by \greedy\ in
the set-disjoint and path-disjoint cases, respectively.

Let us first consider the set-disjoint case.
A triple $(c,f_1,f_2)$ is in $T_S$ if $c\in C$,
$f_1,f_2 \in F$, and no shortest path from $c$ to $f_1$ shares any vertex
other than $c$ with any shortest path from $c$ to $f_2$.  The naive way
to test this would be to construct, for each $f_i$, the set $S_i$ of vertices
on shortest paths from $c$ to $f_i$, and then testing whether $S_i \cap S_j
= \{c\}$.  These sets could conceivably be of size
proportional to $n$, yielding a running time that could be proportional
to $|C||F|^2n$, so the best we can say about the running time of such an
algorithm is that it is $O(n^4)$.

However, recall from Lemma \ref{neighborlemm} that if $S_i \cap S_j$ contains
some vertex other than $c$, then it contains a vertex that is a neighbor
of $c$.  Thus we may restrict attention to the sets $N_i$ of vertices
adjacent to $c$ that are on shortest paths to $f_i$.
To compute these sets we first construct a shortest path graph from
$c$ to all other vertices in time $O(m\log n)$, and then for each
facility $f$ we can compute $N_i$ by a single backward traversal of the shortest
path graph from $f$ in time $O(m)$.
For each pair $f_i,f_j$, the intersection test can
then be performed in time $O(outdegree(c))$.
Amortized over all customers in $C$, the tests for a given pair $f_i,f_j$
take time $O(\sum_{c \in C} outdegree(c)) = O(m)$.
Thus we get an overall time bound of $O(m \log n|C| + m|F|\cdot|C| + m|F|^2)
= O(n^2m)$ under our assumptions that $m$ is $O(n)$.

For the pathwise-disjoint problem, the naive algorithm is even worse,
since for a given triple $c,f_1,f_2$, there may be exponentially many
paths of potential length $\Omega(n)$ to compare.  Here we must be
more clever.  We first observe that we can actually reduce the
test to a simple network flow problem, by augmenting the shortest path
graph with a new vertex $t$, adding
arcs from $f_1$ and $f_2$ to $t$, giving all vertices and edges
capacity one, and asking whether there is a flow
of value 2 from $c$ to $t$.  Next, we observe that this is a particularly
easy flow problem, which can be solved in linear time as follows.

As before, for a given customer $c$ we begin by computing a shortest path
graph from $c$ to all other vertices in time $O(m\log n)$, to be amortized
over all facilities.
Set the capacities of all the arcs to 1.
We then add a new vertex $t$, and edges of length and capacity 1 from
all facilities $f$ to $t$.
To handle a given candidate pair $(f_1,f_2)$, we pick some shortest
path from $c$ to $f_1$ using the shortest path graph,
and extend it to a shortest path from $c$ to $t$ by adding the link from
$f_1$ to $t$.
This yields a flow of 1 from $c$ to $t$.
Now add a link of capacity 1 from $f_2$ to $t$.
There will be disjoint paths from $c$ to $f_1$ and to $f_2$ if and only
if there is now a flow of value 2 from $c$ to $t$,
and such a flow will exist if and only if
there is an augmenting path from $c$ to $t$ in the graph.
This can be found in linear time by a breadth-first search in the
residual graph, where each arc $(u,v)$ in the first path is replaced by
an arc $(v,u)$ of capacity 1.

Finally, we observe that, having fixed the shortest path to $t$ through $f_1$
and constructed the residual graph, we can actually solve the network
flow problems for all pairs $(f_1,f_i)$ in parallel with a single
breadth-first search.
The overall running time thus becomes
$O(m\log n|C| + m|C|\cdot |F|) = O(n^2m)$, once
again $O(n^3)$, assuming $m=O(n)$.

What we have just described is actually the algorithm for generating
path-disjoint triples when ``disjoint'' only means ``arc-disjoint.''
To get vertex-disjointness as well, one replaces each vertex $v$ in
the shortest path graph be a directed arc $(v_{in},v_{out})$, with
each arc $(u,v)$ replaced by $(u,v_{in})$ and each arc $(v,u)$
replaced by $(v_{out},u)$.

In conclusion, we note that for our {\lb} code in the set-disjoint case,
some of the above work can be avoided, as the triples need only be
considered implicitly.
To construct the {\sc Hitting Set} linear program, we need only consider
(customer,neighbor,facility) triples, at an amortized cost 
(and space) of $O(mn)$.
The same triples can be reused in the verification phase, where we
check to see whether the computed hitting set is actually a feasible solution
to our problem, although now $O(mn^2)$ time may be required.

\section{Network Test Instances}\label{sec:instances}\label{section:instances}

We tested our algorithms on both synthetic and real-world instances.
The two classes modeled different types of data networks and had
distinctly different characteristics, enabling us to test the generality
of our methods.
In this section we will discuss each type of instance,
and compare the structural properties of the resulting testbeds.
%We also describe the preprocessing needed to turn these instances
%into inputs for our algorithms.

\subsection{Synthetic LAN/WAN Instances}
Our 630 synthetic instances were designed to reflect the structure of large
real-world local- and wide-area networks (LAN's and WAN's) and were sized
so that we could study
the scalability of our algorithms and the solutions they produce.
They were generated via the following four-step process.
\begin{enumerate}
\item
A transit-stub skeleton graph is generated using the
Georgia Tech Internetwork Topology Models (GT-ITM) package
\citep{CalDoaZeg97a}.  We generated 10 graphs each for parameter
settings that yielded $|V|=50,100,190,220,250,300,558,984$.
(The value of $|V|$ is an indirect result of one's choice of the
allowed input parameters.)
\item
Traffic demand is generated between all pairs of vertices in the
skeleton graph using a gravitational model (described below) with the shortest
path metric.
\item
Using the heuristic network design algorithms of \citet{BurResTho07a},
we determine link capacities and OSPF link weights such that
all traffic can be routed in a balanced way on the resulting network
using the OSPF routing protocol.
Links with no traffic routed on them are deleted.
%Need to know what the capacities are
\item
For specified numbers of customer and facility location vertices, 
the sets $C \subseteq F$ are then randomly generated.
\end{enumerate}

\vspace{-.1in}

\subsubsection{Step 1: Transit-Stub Skeleton Graphs}
Transit-stub graphs \citep{CalDoaZeg97a}
are hierarchical graphs made up of transit vertex
components and stub vertex components.
The stub node components can be thought of as access networks while
the transit node components make up the backbone of the network.
See Figure \ref{transitstubfig}.
\begin{figure}[t]
\begin{center}
\includegraphics[width=11cm,angle=0]{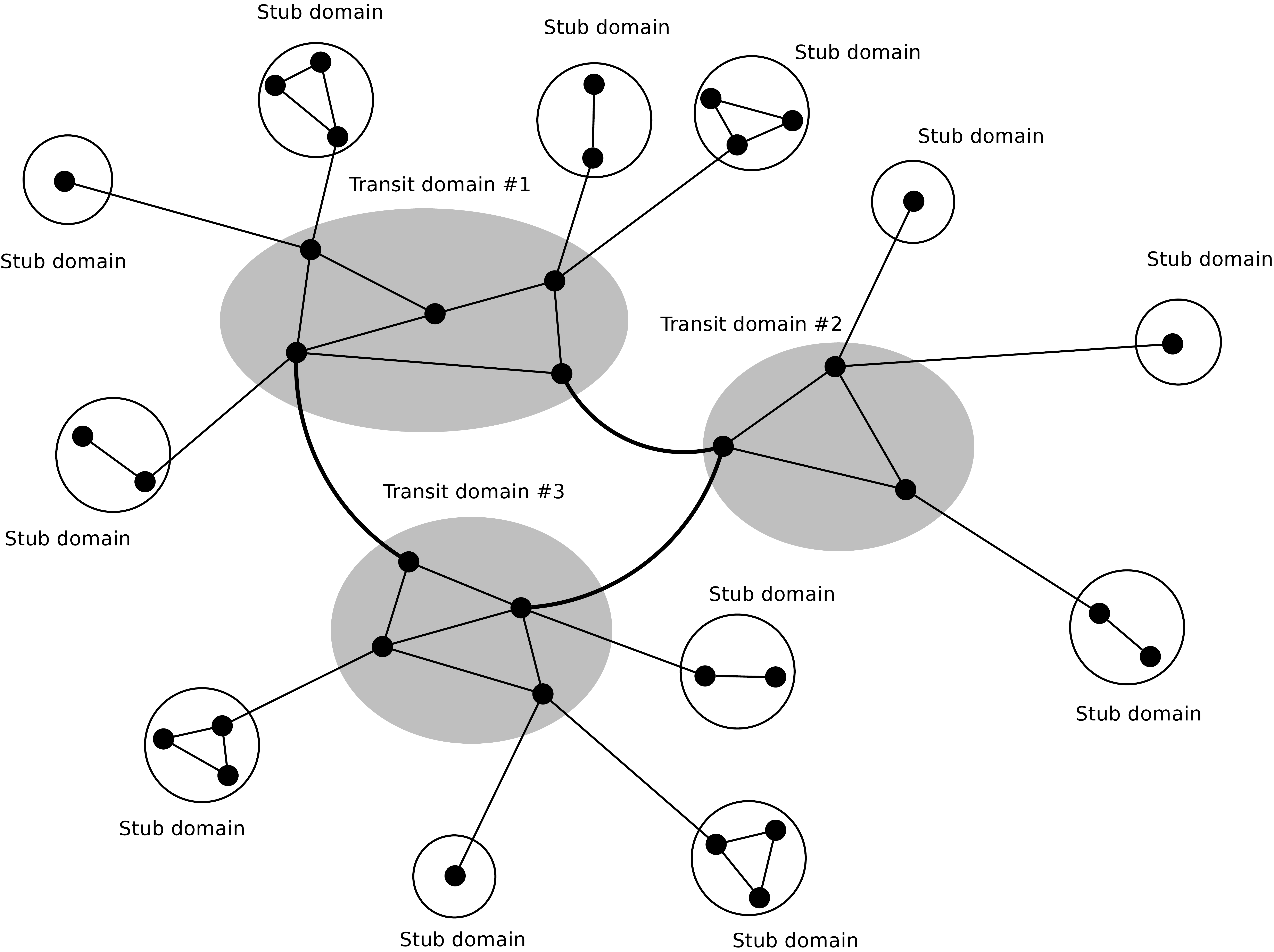}
\caption{A transit stub network with three transit domains 
and 16 stub domains.}
\label{transitstubfig}
\end{center}
\vspace{-.25in}
\end{figure}

The GT-ITM package provides a parameterized method for generating
such graphs randomly.  
The relevant papers and documentation are not totally consistent or clear, and
the following description is our best guess as to how the networks are
generated, based on the documentation and our inspection of selected outputs.
The generation process involves many parameters, but we based our
generation on the settings used in the illustrative example identified
as ``file t100'' in the documentation file {\tt itm-doc.txt} available at
{\tt http://www.cc.gatech.edu/projects/gtitm/}, varying only
the parameters $T$, $N_T$, $S$, $N_S$, which determine the number of vertices.
 
The constructed graphs consist of $T\cdot N_T$ {\em transit vertices}
assigned randomly to $T$ {\em transit domains}.
It is required that $T \geq 1$ and $N_T \geq 1$, and it appears that
the generation process starts by placing one transit vertex in each
transit domain so that none are empty.
In addition, there are $S\cdot T\cdot N_T$ {\em stub domains} (one per transit vertex),
and $S\cdot N_S\cdot T\cdot N_T$ {\em stub vertices},
assigned randomly to the stub domains.
Again we must have $S \geq 1$ and $N_S \geq 1$ and it appears that the
generation process starts by placing one stub vertex in each stub domain.
Edges appear to be added to the network as follows:
\begin{itemize}
\item
Each transit domain has an average of two edges to other transit domains.
The endpoints of each edge are chosen randomly from the transit vertices in
the two relevant domains.
If the chosen edges fail to connect all the transit domains, they are deleted
and the experiment is repeated until we do get a connected graph.
\item
Each transit vertex is associated with an average of $S$ stub domains,
and is connected by an edge to a random member of each.
\item
For each pair of vertices within a given transit domain, there is an edge
with probability 0.6.
If the resulting choices do not make the transit domain connected, the experiment
is repeated until a connected graph results.
\item
For each pair of vertices within a given stub domain, there is an edge
with probability 0.42.
If the resulting choices do not make the stub domain connected, the experiment
is repeated until a connected graph results.
\end{itemize}
In addition, it appears that the code
also adds a few extra edges between stub domains or between stub and transit domains,
even though there should be none, given the parameter settings we chose.
We deduce this from the fact that, although the above description suggests that the
largest 2-connected component should be of size $\max(N_S,T \cdot N_T)$, we typically find
that the largest is two to four times larger than that.

Note that this process yields a network with a fixed number $T\cdot N_T\cdot S\cdot N_S$ vertices,
but a random number of edges (some of which may be deleted in Step 3).
Before proceeding to Step 2, we convert our graphs to symmetric directed graphs by replacing
each edge $\{u,v\}$ with the two arcs $(u,v)$ and $(v,u)$.
Since the original graph was designed to be connected, this directed graph will be
strongly connected.
The GT-ITM generator also produces a length $l(u,v)$ for each edge $\{u,v\}$, which we
use in Step 3.

%If all transit and stub domains were of size precisely equal to their expected
%values ($N_T$ and $N_S$, respectively), the expected number of edges would be
%$$
%T(T-1)/2 + TN_TS + (0.3)TN_T(N_T-1) + (0.21)TN_TSN_S(N_S-1),
%$$
%Given that some of the domain sizes are likely to be larger than their
%expected values, the number of edges should slightly exceed this expectation,
%and this is what happened in practice.

We generated 10 graphs each for eight different sets of parameters, chosen
to yield increasing values for $|V|$, from 50 to 984.
In Table \ref{transitstubparams} we present the parameter choices, the corresponding
values for $|V|$, the final average number of edges and the average degree
(when the graph is viewed as undirected).
The graphs for instances with up to 558 vertices were generated by the 32-bit
version of the GT-ITM code, which is apparently no longer supported.  Those for
the graphs with 984 vertices, which we generated more recently, were produced using
the new, 64-bit code from Version 2.33 on {\tt ns-allinone},
downloaded from {\tt http://sourceforge.net/projects/nsnam/files/}.
This version apparently has a bug, in that the parameters for the edge densities seem to
be ignored.
Consequently, both the stub and transit domains start out as complete graphs,
and their sparsification, while still substantial, is due only to the Step 3 deletions,
described in more detail below.

\begin{table}
\begin{center}
{
\begin{tabular}{rrrr|rrc}
$T$ & $N_T$ & $S$ & $N_S$ & $|V|$ & Average $|A|$ & Average Degree\\ \hline
1 & 2 & 3 & 8 & 50 & 91.1 & 3.6\\
1 & 4 & 3 & 8 & 100 & 177.2 & 3.5\\
2 & 5 & 3 & 6 & 190 & 284.6 & 3.0\\
2 & 5 & 3 & 7 & 220 & 360.2 & 3.3\\
2 & 5 & 3 & 8 & 250 & 435.0 & 3.5\\
2 & 6 & 3 & 8 & 300 & 531.6 & 3.5\\
3 & 6 & 3 & 8 & 558 & 1172.2 & 4.2\\
4 & 6 & 4 & 10 & 984 & 2083.3 & 4.2\\ \hline
\end{tabular}
}
\caption{Input parameters and measured statistics of generated transit-stub graphs.}\label{transitstubparams}
\end{center}
\vspace{-.25in}
\end{table}

\subsubsection{Step 2: Traffic Demands}
The traffic demands are created via a randomized ``gravitational'' method from \citet{FT00}.
We first generate random numbers $o(v)$ and $d(v)$ for each vertex $v$,
chosen independently and uniformly from the interval $(0,1)$.
Here ``o'' stands for origin and ``d'' for destination.
Then, for each pair of vertices $(u,v)$, we compute $D(u,v)$,
the shortest length (in edges) of a path from $u$ to $v$.
Let $Dmax$ be the largest of these distances.
Then, for any ordered pair $(u,v)$ of distinct vertices, we choose a random
number $r(u,v) \in (0,1)$ and set the traffic demand from $u$ to $v$ to be
the (nonzero) value
$$
{\Large t(u,v) = r(u,v) \cdot o(u) \cdot d(v)\cdot e^{ -\left(\frac{D(u,v)}{Dmax}\right)}}.
$$
Note that, since our network is strongly connected, all of this traffic can
be routed.

\subsubsection{Step 3: OSPF Weights}
We next choose OSPF weights and arc capacities that allow for an
efficient routing of the traffic
specified by the matrix $t(u,v): u,v \in V$.
To do this, we apply the survivable IP network design tool developed by Buriol
et al.\ and described in \citet{BurResTho07a}.
The tool takes as input the skeleton graph and the traffic matrix.
For each arc $(u,v)$, it produces an integer OSPF weight $w(u,v)$
in the range $[0,30]$ and an integer capacity $c(u,v)$,
such that if all traffic is routed according
to the OSPF weights (with traffic splitting if there is more
than one shortest path from one vertex to another),
no arc will have traffic load exceeding 85\% of
its capacity.

It does this while attempting to minimize the total arc cost
$\sum_{(u,v)}l(u,v)\cdot c(u,v)$.
The optimization is done via a biased random key genetic algorithm,
using a population of size $50$ and run for $100$ iterations.
For more details, see \citet{BurResTho07a}.
(Running with larger parameter values would likely produce lower-cost solutions
to the design problem, but this would increase the running time,
and our goal here is simply to produce a sparse network on
which the demand can be routed.)

Note that any arc $(u,v)$ that is on some shortest path for the computed OSPF
weights must have nonzero capacity $c(u,v)$, since every traffic demand
is nonzero and so every shortest path must carry {\em some} traffic.
We thus can safely delete all capacity-zero arcs from our network.
In order to insure that our graphs remain symmetric, however, we only delete
a capacity-zero arc $(a,b)$ if its partner $(b,a)$ also has zero capacity.

\subsubsection{Step 4: Customers and Potential Facility Locations}

For each of our 80 skeleton graphs (10 for each of our eight sets of GT-ITB parameters),
we generated seven networks, differing in
their choices of the sets $C \subseteq F$ of customer vertices and
potential facility locations.
The sizes were chosen as a set of fixed fractions of the set $V$ of vertices.
Let (C$x$,F$y$) denote the set of instances with
$|C| = \lceil|V|/x\rceil$ and $|F| = \lceil|V|/y\rceil$.
For each graph we generated one instance of each of the following types:
(C1,F1), (C2,F2), (C4,F4), (C8,F8), (C2,F1), (C4,F1),
and (C8,F1). 

Our synthetic instance testbed thus contains a total of 560 instances,
each consisting of a network, OSPF weights (which determine the shortest
paths in the network), and the sets $C$ and $F$.
The full set of synthetic instances is available from author
Resende, and currently can be downloaded from {\tt http://mauricio.resende.info/covering-by-pairs/}.

\subsection{Real-World ISP-Based Instances}
The 70 real-world instances in our testbed were derived from 10
proprietary Tier-1 Internet Service Provider (ISP) backbone networks and many
of them use actual OSPF weights.
Because of the proprietary nature of these instances, we will not be able to report as detailed
and precise results about them, although we shall summarize how they differ from
the synthetic instances, and how the results for
the two instance types compare.

The ISP networks ranged in size from a little more than 100 routers to
over 5000, each with between $3.5|V|$ and $4.3|V|$ arcs
(similar to the range for our synthetic instances).
We shall denote them by
R100a, R100b, R200, R300, R500, R700, R1000, R1400, R3000, and R5500.
where the number following the R is the number of routers, rounded
to the nearest multiple of 100.
(Because of their proprietary nature, these instances cannot be made publicly
available.)

For four of these instances (R100a, R100b, R200, and R500), we only had information
about the topology, but not the roles of the routers.
For each of these, we constructed 16 instances,
starting with the case in which $F = C = V$.
The other instances also had $F = V$, but $C$ was a random sample
of roughly $1/2$, $1/4$, or $1/8$ of the vertices.
Following our notation for the synthetic instances, we shall refer
to these four alternatives as (C1,F1), (C2,F1), (C4,F1), and (C8,F1).
In addition to the single (C1,F1) instance, we generated five distinct instances
for each alternative from {(C1,F2), (C1,F4), (C1,F8)}, based on different random
choices for $C$.

For R300, R700, R1000, R1400, R3000, and R5500, where we have more detailed information
about the roles played by the routers, we constructed instances, one for each topology,
that took those roles into account: routers most closely associated with network
customers (access routers, etc.) served as the customer vertices,
and the potential facility locations consisted of these, together with all other routers
except for the backbone routers (thus modeling the likely situation in which extra monitoring
equipment cannot easily be connected to the backbone routers).
The number of customer vertices ranged between 50\% and 93\% of the total,
and the number of vertices that were neither customers nor potential facility locations
ranged from 2\% to 41\%.

\vspace{-.1in}
\subsection{Instance Properties}\label{propertysect}

\begin{table}
\begin{center}
{\Large Synthetic Instances}
\bigskip
{
\begin{tabular}{c|r @{\ \ \ \ \ \ \ \ } r @{\ \ \ \ \ } r @{\ \ \ \ \ } r @{\ \ \ \ \ } r @{\ \ \ \ \ } r @{\ \ \ } r @{\ \ } r}
Class & 50 & 100 & 190 & 220 & 250 & 300 & 558 & 984\\ \hline
\multicolumn{9}{c}{\rule[-.2cm]{0cm}{.6cm}Average Number of Triples in Thousands}\\ \hline
C1,F1  & \ 12.9 & 57.8 & \ \ 224.8 & \ \ \ \ \ 340 & \ \ \ \ \ 487 & \ \ \ \ \ 756 & \ \ \ \ \ 4,111 & 19,954 \\
C2,F1  &  6.3 & 31.6 & 119.1 & 184 & 270 & 375 & 2,070 & 10,310 \\
C4,F1  &  3.5 & 15.1 &  53.6 &  92 & 118 & 185 & 1,131 &  5,136 \\
C8,F1  &  2.2 &  7.5 &  28.0 &  44 &  59 &  98 &   471 &  2,642 \\
C2,F2  &  1.5 &  6.7 &  26.9 &  43 &  55 &  93 &   507 &  2,438 \\
C4,F4  &  0.2 &  0.8 &   3.7 &   5 &   8 &  11 &    62 &    299 \\
C8,F8  &  0.0 &  0.1 &   0.4 &   1 &   1 &   1 &     6 &     38 \\ \hline
%C1,F1  & 12928 & 57836 & 224756 & 340454 & 486678 & 755674 & 4110554 & 19953529\\
%C2,F1  & 6343 & 31588 & 119062 & 183786 & 269523 & 374676 & 2069914 & 10310450\\
%C4,F1  & 3460 & 15074 & 53584 & 91874 & 118402 & 184722 & 1131390 & 5136233\\
%C8,F1  & 2175 & 7520 & 27988 & 44100 & 58730 & 98266 & 471002 & 2642326\\
%C2,F2  & 1520 & 6659 & 26925 & 42654 & 54614 & 93497 & 507266 & 2437996\\
%C4,F4  &  189 &  779 &  3738 &  5118 &  8119 & 11414 & 62293 & 299421\\
%C8,F8  &   21 &  131 &   444 &   711 &  1039 &  1409 & 6292  & 37804\\ \hline
\multicolumn{9}{c}{\rule[-.2cm]{0cm}{.6cm}Average Percent of Total Number of Possible Triples for Set-Disjoint Instances}\\ \hline
C1,F1  & 21.99 & 11.92 &  6.66 &  6.48 &  6.30 &  5.65 &  4.76 & 4.20\\
C2,F1  & 21.58 & 13.02 &  7.05 &  7.00 &  6.98 &  5.61 &  4.79 & 4.34\\
C4,F1  & 22.63 & 12.43 &  6.28 &  7.00 &  6.09 &  5.53 &  5.22 & 4.33\\
C8,F1  & 26.42 & 11.92 &  6.56 &  6.60 &  5.94 &  5.80 &  4.35 & 4.45\\
C2,F2  & 22.03 & 11.33 &  6.48 &  6.59 &  5.73 &  5.65 &  4.72 & 4.12\\
C4,F4  & 22.03 & 11.30 &  7.20 &  6.50 &  6.82 &  5.63 &  4.64 & 4.07\\
C8,F8  & 20.10 & 15.23 &  7.32 &  7.24 &  6.99 &  5.57 &  3.83 & 4.16\\ \hline
\multicolumn{9}{c}{\rule[-.2cm]{0cm}{.6cm}Average Percent Increase in Triples for Pathwise-Disjoint Instances over Set-Disjoint Ones}\\ \hline
C1,F1  &  6.94 &  6.63 &  4.96 &  5.99 &  3.88 &  6.68 &  4.29 & 5.60\\
C2,F1  &  6.93 &  5.73 &  5.05 &  5.10 &  3.25 &  5.99 &  4.09 & 4.90\\
C4,F1  &  7.52 &  5.00 &  4.87 &  6.08 &  3.84 &  7.10 &  2.74 & 4.23\\
C8,F1  &  5.28 &  5.14 &  3.93 &  7.49 &  3.26 &  5.60 &  3.37 & 2.88\\
C2,F2  &  8.32 &  7.13 &  4.36 &  5.37 &  4.50 &  6.62 &  3.88 & 6.15\\
C4,F4  &  3.99 &  6.98 &  7.46 &  8.12 &  4.38 &  7.60 &  4.97 & 5.69\\
C8,F8  & 17.71 &  7.20 &  5.61 &  3.32 &  1.98 &  5.63 &  4.62 & 6.15\\ \hline
%\multicolumn{9}{c}{\rule[-.2cm]{0cm}{.6cm}Average Percent of Total Number of Possible Triples for Arc-Path-Disjoint Instances}\\ \hline
%\multicolumn{9}{c}{\rule[-.2cm]{0cm}{.6cm}NEED CORRECT DATA!!!}\\
%C1,F1  &  6.94 &  6.63 &  4.96 &  5.99 &  3.88 &  6.68 &  4.29 & 5.60\\
%C2,F1  &  6.93 &  5.73 &  5.05 &  5.10 &  3.25 &  5.99 &  4.09 & 4.90\\
%C4,F1  &  7.52 &  5.00 &  4.87 &  6.08 &  3.84 &  7.10 &  2.74 & 4.23\\
%C8,F1  &  5.28 &  5.14 &  3.93 &  7.49 &  3.26 &  5.60 &  3.37 & 2.88\\
%C2,F2  &  8.32 &  7.13 &  4.36 &  5.37 &  4.50 &  6.62 &  3.88 & 6.15\\
%C4,F4  &  3.99 &  6.98 &  7.46 &  8.12 &  4.38 &  7.60 &  4.97 & 5.69\\
%C8,F8  & 17.71 &  7.20 &  5.61 &  3.32 &  1.98 &  5.63 &  4.62 & 6.15\\ \hline
\multicolumn{9}{c}{\rule[-.2cm]{0cm}{.6cm}Average Percent Decrease in Triples for Equal-Weights on Set-Disjoint Instances} \\ \hline
C1,F1 & 16.88 & 13.82 &  9.07 & 10.29 & 14.53 & 13.22 & 26.89 & 31.13\\
C2,F1 & 15.48 & 15.12 & 10.48 &  9.67 & 15.11 & 14.89 & 26.68 & 31.09\\
C4,F1 & 11.82 & 15.48 &  9.52 &  9.79 & 14.50 & 14.44 & 30.95 & 32.24\\
C8,F1 & 22.68 &  8.64 &  4.92 &  7.68 & 12.71 & 15.31 & 24.03 & 39.41\\
C2,F2 & 13.81 & 14.27 &  4.47 & 13.34 & 11.84 & 11.26 & 26.48 & 30.23\\
C4,F4 & 13.60 & 13.03 &  7.46 &  9.13 & 20.49 &  3.16 & 24.81 & 32.15\\
C8,F8 &  8.35 & 17.06 &  5.55 & 19.95 & 14.86 &  4.89 & 27.72 & 37.81\\ \hline
%\multicolumn{9}{c}{\rule[-.2cm]{0cm}{.6cm}Average Percent Increase in Triples for Equal Weight on Arc-Path-Disjoint Instances}\\ \hline
%C1,F1  &  59.06 & 68.93 & 53.94 & 63.37 & 54.08 & 72.12 & 46.85 & 46.09\\
%C2,F1  &  58.40 & 66.30 & 49.58 & 53.72 & 53.90 & 63.82 & 43.91 & 47.26\\
%C4,F1  &  55.21 & 65.61 & 43.46 & 69.96 & 61.34 & 77.10 & 38.11 & 46.40\\
%C8,F1  &  43.37 & 83.79 & 41.24 & 62.91 & 67.15 & 53.96 & 57.96 & 39.89\\
%C2,F2  &  60.81 & 71.82 & 51.49 & 63.24 & 58.83 & 71.24 & 47.79 & 47.17\\
%C4,F4  &  61.08 & 60.64 & 61.97 & 63.64 & 48.29 & 65.33 & 49.40 & 50.41\\
%C8,F8  & 131.07 & 71.84 & 51.81 & 53.04 & 51.12 & 81.72 & 45.54 & 43.33\\ \hline
\end{tabular}
}
\caption{Average numbers of triples for synthetic instances, the percentage of
all possible triples that these numbers represent, the percentage increases
in numbers of triples for the arc-disjoint version of
pathwise-disjoint instances, and for equal-weight instances.}\label{tripletab}
\end{center}
\vspace{-.25in}
\end{table}

\vspace{-.1in}

\begin{table}
\begin{center}
{\Large ISP-Based Instances}
\bigskip
{
\begin{tabular}{c|r @{\ \ \ \ } r @{\ \ \ } r @{\ \ \ } r @{\ \ \ } r @{\ \ \ } r @{\ \ \ \ \ } r @{\ \ \ \ } r @{\ \ \ \ } r @{\ \ \ \ } r}
Class & R100a & R100b & R200 & R300 & R500 & R700 & R1000 & R1400 & R3000 & R5500\\ \hline
\multicolumn{11}{c}{\rule[-.2cm]{0cm}{.6cm}Average Number of Triples in Millions}\\ \hline
C1,F1  & 0.22 & 0.36 &  1.44 &  2.47 & 17.7 &  22.0 & 47.2 & 134 & 816 & 8,156\\
C2,F1  & 0.11 & 0.18 &  0.73 &       &  8.8 &       &      &     &     &     \\
C4,F1  & 0.06 & 0.09 &  0.36 &       &  4.4 &       &      &     &     &     \\
C8,F1  & 0.03 & 0.05 &  0.20 &       &  2.2 &       &      &     &     &     \\ \hline
\multicolumn{11}{c}{\rule[-.2cm]{0cm}{.6cm}Average Percent of Total Number of Possible Triples for Set-Disjoint Instances}\\ \hline
C1,F1  & 30.0 & 34.0 &  25.6 &  33.7 &  23.9 &  28.2 &  11.2 & 21.4 & 35.0 & 20.9\\
C2,F1  & 30.1 & 34.1 &  26.0 &       &  23.5 &       &       &      &      &     \\
C4,F1  & 32.2 & 32.7 &  25.4 &       &  23.6 &       &       &      &      &     \\
C8,F1  & 26.5 & 32.8 &  27.0 &       &  23.3 &       &       &      &      &     \\ \hline
\multicolumn{11}{c}{\rule[-.2cm]{0cm}{.6cm}Average Percent Increase in Triples for Pathwise-Disjoint Instances over Set-Disjoint Ones}\\ \hline
C1,F1  & 43.6 & 40.1 &  59.8 &  41.7 &  24.9 &  31.0 &  335.6 & 33.0 & 25.9 & 101.7\\
C2,F1  & 42.6 & 39.9 &  58.4 &       &  25.3 &       &       &      &      &     \\
C4,F1  & 41.2 & 46.5 &  63.6 &       &  26.7 &       &       &      &      &     \\
C8,F1  & 53.4 & 36.2 &  56.8 &       &  24.0 &       &       &      &      &     \\ \hline
\multicolumn{11}{c}{\rule[-.2cm]{0cm}{.6cm}Average Percent Increase in Triples for Arc-Disjoint over Vertex Disjoint Instances} \\ \hline
C1,F1  & 1.0 & 1.3 &  1.3 &  .1 &  1.8 &  .3 &  15.5 & 1.5 & .1 & 5.5\\
C2,F1  & 1.0 & 1.3 &  1.2 &       &  1.9 &       &       &      &      &     \\
C4,F1  & 1.4 & 1.4 &  1.3 &       &  1.9 &       &       &      &      &     \\
C8,F1  & .8 & 1.6 &  1.3 &       &  .9 &       &       &      &      &     \\ \hline
\multicolumn{11}{c}{\rule[-.2cm]{0cm}{.6cm}Average Percent of Total Number of Possible Triples for Arc-Path-Disjoint Instances} \\ \hline
C1,F1  & 43.5 & 48.3 &  41.5 & 47.8 &  30.3  &  37.0 &  56.4 &  28.9 & 44.1 & 44.3\\
C2,F1  & 43.2 & 48.3 &  41.6 &       &  30.0 &       &       &      &      &     \\
C4,F1  & 46.2 & 48.4 &  42.0 &       &  30.4 &       &       &      &      &     \\
C8,F1  & 40.9 & 45.1 &  42.4 &       &  29.2 &       &       &      &      &     \\ \hline
\multicolumn{11}{c}{\rule[-.2cm]{0cm}{.6cm}Average Percent Decrease in Triples for Equal-Weights on Set-Disjoint Instances} \\ \hline
C1,F1  & 27.4 & 16.0 &  4.9 & 29.4 &  14.2  &  31.6 &  -16.7 &  15.7 & 55.1 & 58.7\\
C2,F1  & 26.1 & 16.1 &  5.7 &       &  13.7 &       &       &      &      &     \\
C4,F1  & 27.0 & 11.9 &  6.3 &       &  12.0 &       &       &      &      &     \\
C8,F1  & 30.1 & 15.3 &  2.7 &       &  16.9 &       &       &      &      &     \\ \hline
%\multicolumn{11}{c}{\rule[-.2cm]{0cm}{.6cm}Average Percent Increase in Triples for Equal-Weights on  Arc-Path-Disjoint Instances} \\ \hline
%C1,F1  & 36.2 & 32.3 &  41.3 &  26.7 &  87.4 &  67.1 &  33.8 & 68.3 & 43.9 & 42.3\\
%C2,F1  & 36.0 & 31.1 &  39.3 &       &  86.1 &       &       &      &      &     \\
%C4,F1  & 36.7 & 30.2 &  42.0 &       &  87.5 &       &       &      &      &     \\
%C8,F1  & 37.3 & 36.8 &  41.0 &       &  93.1 &       &       &      &      &     \\ \hline
\end{tabular}
}
\caption{Average percentage of valid triples for real world instances,
the percentage increases in numbers of triples for the vertex- and arc-disjoint versions of
pathwise-disjoint instances, and for equal-weight instances.}\label{realtripletab}
\end{center}
\vspace{-.25in}
\end{table}

Our synthetic instances reflect the structure of real-world LANs and WANs, 
and consist of many, relatively small 2-connected components.
For instances with $|V| \geq 100$, the largest 2-connected component averages
around 12\% of the vertices
(112.5 vertices for our $|V| = 984$ instances).
In contrast, the Tier-1 ISP instances
consist of one very large 2-connected component, typically containing over 95\%
of the vertices, with small 2-connected components hanging off of it,
most of size 2.
One consequence of this difference in topology
is differing numbers of triples for the two instance types.

For a given instance with a set $C$ of customers and a set $F$ of potential
facility locations, there are $|C|(|F|-1)(|F|-2)/2$ potential triples $(c,f,f')$,
each one containing a customer $c$ and two distinct potential
facility locations in $F - \{c\}$,
For our synthetic instances, only a relatively small fraction of the potential
triples are actually valid triples in the set-disjoint case.
For each class $(Cx,Fy)$, the percentage of valid triples declines as $N$ increases,
until it appears to stabilize somewhere between 4 and 5\%.
See Table \ref{tripletab}.
The ISP-based instances yield substantially more set-disjoint triples
than the synthetic ones of the same size.
The percentage of valid triples, with one exception, ranges between 20 and 35\%,
the exception being R1000, which at 11.2\% still has
more than double the percentage for the large synthetic instances.
See Table \ref{realtripletab}. 
Note that, because of the cubic growth in the number of potential triples, the
actual number of triples can become quite large.
For our largest instance, R5500, there are over 5.7 billion valid triples in the
set-disjoint case.

The choice of arc weights also has an effect on our instances.
The optimized weights of our synthetic instances lead to relatively few
shortest-path ties, and so the pathwise-disjoint instances typically have only
5 to 6\% more valid triples than the corresponding set-disjoint
instances.
The weights in the ISP instances were constrained by a variety of factors
in addition to traffic balancing, and yield far more ties.
As a result, the number of path-disjoint triples exceeds the number of
set-disjoint triples by from 24 to 335\%, that latter occurring for our anomalous
R1000 instance.

We also measured the effect of relaxing the constraints on our path-disjoint instances,
by requiring only that the paths be arc-disjoint, not vertex-disjoint.
(Although vertex-disjointness adds robustness to our monitoring application,
the mathematics of the application only requires arc-disjointness.)
Here the effects were relatively minor for both synthetic and ISP-based instances,
typically yielding an increase of less than 2\% in the number of valid triples.
Instance R1000 was again the major exception, yielding an increase of 15\%, to reach
a total of more than half the maximum possible number of triples.

Finally, to further explore the effect of increased numbers of ties
in the set-disjoint case, we considered
modified versions of our instances in which all edge weights are set to 1.
As might be expected for such weights,
the number of set-disjoint triples drops substantially for almost all
our synthetic and ISP-based instances (the only exception again being instance
R1000, where we saw a 16\% increase).
%The increased number of shortest paths has the opposite effect for our path-disjoint
%instances, with substantial increases over the
%set-disjoint case for the number of valid triplies
%in all our instances, both synthetic and ISP-based.
%For the arc-disjoint path instances, the number of valid triples typically
%grows to 6-12\% of the potential number for our synthetic instances,
%and to 50-65\% for our ISP-based instances.

\section{Summary of Experimental Results}\label{section:experiments}

This section evaluates our algorithms in three areas.
%We shall concentrate primarily on the setwise-disjoint instances,
%which are the ones that motivate our main application,
%although the corresponding pathwise-disjoint instances will be discussed
%where they add valuable perspective.

\begin{itemize}
\item
\textsl{Accuracy:} How close to optimal were the solutions
provided by each of heuristics \greedy, \shs,  and \dhs?
For small instances, for which the exact optimal can be computed using CPLEX,
we can compare to that.
In the set-disjoint case, where the SHS lower bound turns out to equal the
optimal solution value for all our instances, we can compare to that.
For large path-disjoint instances, we have no good, readily computable,
standard of comparison.
We consider two options.
First, we can compare the results to the (very weak but
linear-time computable) lower bound obtained when one requires only that paths
be vertex-disjoint, not that they be shortest paths.
Second,
we can evaluate the results for {\gr4} in the path-disjoint case by
attempting to improve on them
using the {\genetic} algorithm and large amounts of computing time.
(Note that {\genetic} optimally solved all the bad set-disjoint instances
for {\gr4} except R5500, which is too big for {\genetic} to handle.)

\item
\textsl{Running Time:} How fast are the algorithms?
Our runs were primarily performed on two
machines.
We used a shared-memory machine with 1.5 Ghz Itanium processors and
256 Gigabytes of random access memory, running
SUSE LINUX with kernel 2.6.16.60-0.81.2-default,
for the heuristic runs on instance R5500, where 80 gigabytes of main
memory were needed because of the large number of triples involved.
In the text this will be referred to as ``our slower machine.''
The remainder of our runs were primarily performed on a shared
memory multiprocessor machine with 24 2.7 Ghz Intel Xeon processors
and 50 gigabytes of random access memory running
CentOS Linux with kernel 2.6.32-431.11.2.el6.x86\_64Linux.
In addition, we did some confirmatory runs on a MacBook Pro with a
2.4 Ghz dual-core Intel i5 processor and 8 gigabytes of RAM, running
Mac OS 10.9.4 and with Gurobi 5.6 replacing CPLEX 11.0 as our MIP solver.
When we report running times, we give the sum of the system and user times,
and will specify the machine used.
(Each algorithm run used only a single processor/core of the
respective machine.)
All our codes were written in {\tt C}, and compiled using {\tt gcc}.
Unless otherwise stated, the running times reported are for code compiled
with no optimization flags set, although, where
appropriate, we will indicate the kinds of
speedups one might expect using {\tt -O2} optimization.

It should be noted that on none of these machines were running times
precisely repeatable.
Differences of 20\% or more on duplicate runs were not uncommon.
So our running time reports can provide only order-of-magnitude information,
and can provide evidence that one code is faster than another only if the
differences are substantial and hold across many instances.

\item
\textsl{Solution Quality (Cost Reduction):}
The first two measures address the quality of the algorithms.
This one addresses what they tell us about the applications.
Even an optimal solution will not be enough to justify the proposed
monitoring scheme of \citet{GBDS-loss08} if
it does not provide a significant savings over the default solution
that simply takes {\em all} customers (which for our instances are
themselves facility locations) and is likely to yield more reliable measurements.
We also consider how much further improvement is obtainable when going
from the setwise-disjoint to the pathwise-disjoint versions of the problem.
\end{itemize}

\noindent

\subsection{Accuracy and Running Times for the Set-Disjoint Case}
\subsubsection{Optimization via our {\mip} Code}
In Table \ref{lb+opttab}, we summarize the results for our {\mip}
optimization code on our synthetic and ISP-based set-disjoint instances.
All the runs were performed on the faster of our two Linux machines.
The first section of the table reports how many of the 10 instances of
each type that the code was able to solve within a 24-hour time bound.
For $|V|=190$, we eventually solved the missing (C4,F1)
and (C8,F1) instances, although the longest (C4,F1) took
slightly more than a week.
We also solved the two missing (C4,F1) instances based on the ISP-based
R200 instances, although the longer took over a week.
On the other hand, for 299 of the 383 solved instances, {\mip}
took less than 10 minutes.
The second section of the table reports the worst running times
(system time plus user time, rounded to the nearest second)
for the classes where all instances were solved in less than 24 hours.
For our testbed in general,
the code tends to run into trouble as the number of
customers grows beyond 100, with exceptions when the number of
facilities is also relatively small.
Its running times seem clearly to be growing in exponential fashion as $|C|$
increases.
Note also that,
when all vertices are potential facility locations, reducing the
number that are also customers can actually make the instance harder.

\begin{table}[t]
\begin{center}

\bigskip
{
\begin{tabular}{c|r @{\ \ } r @{\ \ \ } r @{\ \ \ } r @{\ \ \ } r @{\ \ \ } r @{\ \ \ } r @{\ \ \ } r|  r  r  r }
\multicolumn{9}{c}{\rule[-.2cm]{0cm}{.6cm}\ \ \ \ \ \ \ \ Synthetic Instances} & \multicolumn{3}{c}{\rule[-.2cm]{0cm}{.6cm}ISP-Based Instances}\\ \hline
Class & 50 & 100 & 190 & 220 & 250 & 300 & 558 & 984 & R100a & R100b & R200\\ \hline
%\multicolumn{9}{c}{\rule[-.2cm]{0cm}{.6cm}Average Optimal Solution Value}\\ \hline
%(C1,F1) & 10.7 & 24.0 & 53.8 &  59.2 &  60.7 & 71.1 & 87.5 & 159.\\
%(C2,F1) & 7.5 & 17.1 & 38.2 &  39.5 &  42.8 & 50.0 & 67.7 & 118.3\\
%(C4,F1) & 6.1 & 11.7 & 26.8 &  28.2 &  29.5 & 36.3 & 52.2 & 93.3\\
%(C8,F1) & 3.9 & 7.9 & 16.1 &  19.1 &  20.5 & 24.5 & 38.7 & 68.2\\
%(C2,F2) & 7.8 & 17.7 & 39.2 & 43.5 &  45.9 & 53.6 & 70.4 & 126.4\\
%(C4,F4) & 5.6 & 11.9 & 26.0 & 30.7 & 32.2 & 37.4 & 57.8 & 98.7\\
%(C8,F8) & 3.9 & 7.8 & 16.4 & 19.9 & 20.2 & 25.2 & 40.6 & 71.6\\ \hline
\multicolumn{12}{c}{\rule[-.2cm]{0cm}{.6cm}Number of Instances Solved in Less Than 24 Hours by MIP}\\ \hline
(C1,F1) & 10 &  10 &   10 &  - &  - &  - &  - &  - & 1 & 1 & 1\\
(C2,F1) & 10 &  10 &   10 &  - &  - &  - &  - &  - & 5 & 5 & -\\
(C4,F1) & 10 &  10 &    9 &  - &  - &  - &  - &  - & 5 & 5 & 3\\
(C8,F1) & 10 &  10 &    9 &  - &  - &  - &  - &  - & 5 & 5 & 5\\ \cline{10-12}
(C2,F2) & 10 &  10 &   10 & 10 & 10 & 10 &  - &  -  \\
(C4,F4) & 10 &  10 &   10 & 10 & 10 & 10 & 10 & 10 \\
(C8,F8) & 10 &  10 &   10 & 10 & 10 & 10 & 10 & 10 \\ \hline
\multicolumn{12}{c}{\rule[-.2cm]{0cm}{.6cm}Worst Running Times (in Seconds) for Instances of Each Completed Class} \\ \hline
(C1,F1) & 20 &  89 &   23 & 518 & 42,998 &  - &  - &  - & 0 & 0 & 74\\
(C2,F1) & 18 &  86 & 1725 &   - &  - &  - &  - &  - &   8 &  1 & -\\
(C4,F1) & 36 &  46 &    - &   - &  - &  - &  - &  - & 291 & 92 & -\\
(C8,F1) & 35 & 803 &    - &   - &  - &  - &  - &  - &  13 & 46 & 33,789\\ \cline{10-12}
(C2,F2) &  0 &   0 &    0 &  0 &  2 & 25 &  - &   - \\
(C4,F4) &  0 &   0 &    0 &  0 &  0 &  0 &  0 & 227 \\
(C8,F8) &  0 &   0 &    0 &  0 &  0 &  0 &  0 &   0 \\ \hline
\end{tabular}
}
\caption{Results for our {\mip} optimization code on set-disjoint instances.}\label{lb+opttab}
\end{center}
\vspace{-.25in}
\end{table}

\subsubsection{Lower Bounds and Solutions via our {\lb} Code}
Fortunately, the exponential behavior of our {\mip} code is not an
insurmountable difficulty in the case of our set-disjoint instances.
Not only did our {\lb} code produce a lower bound that equals the optimal
solution value for all our instances, but the hitting set it constructed
was always itself a feasible, and hence optimal, solution.
Moreover, it took no more than 4 seconds on any instance besides R5500,
for which it only took 22 seconds.  In general, it was far faster than
any of our heuristics, often by a factor of 100 or more,
even though it involves finding an optimal solution
to an instance of the NP-hard {\sc Hitting Set} problem.
Moreover, its space requirements are much less than those of our heuristics,
given that the (Customer,Neighbor,Facility) triples it stores are
far fewer in number than the standard (Customer,Facility,Facility) triples
required by the heuristics (and stored by them in triplicate).
Consequently, for our largest instance, R5500, it required just over 1 GB,
whereas the heuristics need in excess of 65 GB.

\subsubsection{The Heuristics}
As for the heuristics, the ``best-of-400-iterations'' versions
of {\greedy}, {\shs}, and our two variants on {\dhs}
also all performed well in general on our test bed, although two
instances were seriously problematical for at least one of them.

The best solutions found (over 400 runs) by {\shs}
were optimal for {\em all} the instances in our testbeds, both synthetic and
ISP-based.
{\dhs}$_H$ and {\greedy} performed almost as well.
The former failed to produce an optimal solution for just two instances,
on one missing optimal by just a little, but on the other missing by a lot.
Its solution for one of our (C1,F1), $|V|=984$ synthetic instances was
one facility larger than optimal, but its solution for the ISP-based
instance R1000 was off by more than a factor of 3.
{\greedy} was optimal for all our synthetic instances and all but
three of our ISP-based instances.
It was slightly worse than {\dhs}$_H$ on R1000 and off by a factor of 2.2
on R5500.
Its third nonoptimality was minor, consisting of a solution
with one more facility than optimal on the second of five (C2,F1)
variants on R500.
Although {\dhs}$_1$ found optimal solutions on all our ISP-based instances,
if failed to find optimal solutions for seventeen of our synthetic instances,
six of the ones with $|V|=558$ and eleven of those with $|V|=984$,
including the one that on which {\dhs}$_H$ failed.
On twelve it was off by one facility, on four it was off by two,
and on one it was off by five.
In no case, however, was the discrepancy greater than 3.3\%.
The better of the {\dhs}$_H$ and {\dhs}$_1$ solutions was always optimal.

To further compare these heuristics, let us consider two other metrics,
running time and robustness, together with the tradeoffs between the two.
See Table \ref{heurtimetab}, which reports the running times in seconds for
the 400-iteration versions of our heuristics on our faster Linux machine
(80-iteration runs on our slower machine for R5500, which would not fit
on the faster machine), and on our MacBook Pro, where Gurobi 5.6 replaced
CPLEX 11.0 as our MIP solver, and we compiled with the -O2 optimization flag
set.
The instances covered are the largest ones in our testbed for each
ISP topology with $|V| \geq 200$, i.e., the single instance for each
topology where we knew the OSPF weights and router types,
and the (C1,F1) instance for each of the instances where we optimized the
weights ourselves.
For comparison purposes, the table also breaks out the time devoted
to constructing the triples used by the heuristics, and the times for
our code that computes the {\lb} and determines whether the optimal hitting
set produced is a feasible solution.
Running times for our synthetic instances of similar sizes were roughly
comparable.

\begin{table}
\begin{center}
{
\begin{tabular}{r|r @{\ \ \ \ } r @{\ \ \ } r @{\ \ \ } r @{\ \ \ } r @{\ \ \ } r @{\ \ \ \ \ } r @{\ \ \ \ } r r}
Algorithm  & R200 & R300 & R500  & R700 & R1000 & R1400 & R3000 & R5500*\\ \hline
\multicolumn{9}{c}{\rule[-.2cm]{0cm}{.6cm}Linux Machine(s), Compiled without Optimization} \\ \hline
\lb         &  0.1 &  0.1 &   0.4 &  0.3 &   1 &   1 &    4 &     23\\
Triple Gen  &  0.1 &  0.2 &   0.9 &  1.4 &   4 &   8 &   41 &  1,800\\
\shs        &  7.3 &  4.7 &  68.3 & 32.0 & 125 & 361 &  452 & 23,170\\
{\dhs}$_1$  &  7.2 &  6.7 &  73.1 & 37.9 & 147 & 389 &  524 & 24,560\\
{\dhs}$_H$  &  8.4 & 11.6 &  74.1 & 75.8 & 396 & 730 & 1165 & 47,970\\
\greedy     & 11.8 & 14.0 & 116.1 & 90.9 & 578 & 704 & 2171 & 70,100\\ \hline
\multicolumn{9}{c}{\rule[-.2cm]{0cm}{.6cm}MacBook Pro, Compiled with Optimization} \\ \hline
\lb         &  0.8 &  0.7 &   1.2 &  1.0 &   1 &   1 &    3 & 14\\
Triple Gen  &  0.5 &  0.5 &   0.8 &  0.9 &   2 &   3 &   14 &  -\\
\shs        &  2.2 &  2.0 &  17.7 &  8.8 &  32 &  94 &  179 &  -\\
{\dhs}$_1$  &  2.4 &  2.5 &  19.0 & 13.4 &  44 & 113 &  265 &  -\\
{\dhs}$_H$  &  2.5 &  2.3 &  19.8 & 14.9 & 130 & 119 &  277 &  -\\
\greedy     &  3.2 &  4.1 &  30.3 & 25.9 & 188 & 198 &  866 &  -\\ \hline

\end{tabular}
}
\caption{Running times in seconds for triple construction, full
400-iteration heuristic runs, and {\lb} computations.  *For R5500, all Linux times except those for {\lb} are on the slower
of our two machines, because of memory constraints, and for 80-iteration runs.
Memory constraints also prevented running anything except {\lb} on the MacBook Pro for this instance. }\label{heurtimetab}
\end{center}
\vspace{-.25in}
\end{table}

The algorithms are typically ranked {\shs}, {\dhs}$_1$ {\dhs}$_H$,
{\greedy} in order of increased running times for both machines,
although there are minor inversions for a few instances.
The large differences typically seen between {\dhs}$_H$ and {\dhs}$_1$ on the
Linux machines occurred for only one instance (R1000) on the MacBook Pro runs.
The fact that, for the smaller instances, the
running times for {\lb} and triple generation are larger on the
optimized MacBook Pro runs than on the unoptimized Linux runs
is attributable to larger system times for triple construction.
For the (facility,neighbor,facility) triples used in computing the {\lb},
the system time was roughly 0.7 seconds, independent of instance size,
on the MacBook Pro, versus 0.1 seconds on the Linux machine, and the
differences in system times for computing standard triples
in our heuristics were similar.
It is interesting to note that, on both machines, the growth in user time
for triple construction appears to be somewhat slower than our
theoretical bound of $\Theta(|V|^3)$, although
no firm conclusions can be drawn given the differences in topology and
weight settings for the various instances.

The general take-away from these results, however, is that all of our
algorithms have running times that would be practical in the context of
our proposed applications, where the time for implementing a solution
will still greatly exceed the time to compute it.
The main obstacle to practicality is the memory requirement for
instances significantly larger than R3000, and in the final
section of this paper, we will discuss ways that this memory requirement
might be reduced while keeping running times practical.

Of course, although running time is not a substantial obstacle, it is always
worthwhile to find ways to reduce it
(beyond simply using optimized compilation).
One obvious way would be to perform fewer than 400 iterations.
Conversely, we might want give up running time and perform more iterations,
if that would improve the robustness of our results.
How robust are they?

\begin{table}
\begin{center}
{
\begin{tabular}{r|c c c c c}
k  & $N=100$ & $N=200$ & $N=400$ & $N=800$ & N=1600\\ \hline
 1 &  0.77855704 &  0.60615106 &  0.36741911 &  0.13499680 &  0.01822414\\
 2 &  0.60577044 &  0.36695782 &  0.13465804 &  0.01813279 &  0.00032880\\
 3 &  0.47103323 &  0.22187230 &  0.04922732 &  0.00242333 &  0.00000587\\
 4 &  0.36603234 &  0.13397967 &  0.01795055 &  0.00032222 &  0.00000010\\
 5 &  0.28425652 &  0.08080177 &  0.00652893 &  0.00004263 &  0.00000000\\
 6 &  0.22060891 &  0.04866829 &  0.00236860 &  0.00000561 &  0.00000000\\
 7 &  0.17110232 &  0.02927600 &  0.00085708 &  0.00000073 &  0.00000000\\
 8 &  0.13261956 &  0.01758795 &  0.00030934 &  0.00000010 &  0.00000000\\
 9 &  0.10272511 &  0.01055245 &  0.00011135 &  0.00000001 &  0.00000000\\
10 &  0.07951729 &  0.00632300 &  0.00003998 &  0.00000000 &  0.00000000\\
11 &  0.06151216 &  0.00378375 &  0.00001432 &  0.00000000 &  0.00000000\\
12 &  0.04755251 &  0.00226124 &  0.00000511 &  0.00000000 &  0.00000000\\
13 &  0.03673647 &  0.00134957 &  0.00000182 &  0.00000000 &  0.00000000\\
14 &  0.02836164 &  0.00080438 &  0.00000065 &  0.00000000 &  0.00000000\\
15 &  0.02188134 &  0.00047879 &  0.00000023 &  0.00000000 &  0.00000000\\ \hline

\end{tabular}
}
\caption{Probabilities of failure to find an optimal solution
in $N$ independent runs when the probability of finding one in 400 runs is
$k/400$.}\label{probtab}
\end{center}
\vspace{-.25in}
\end{table}

In the context of our experiments, let $k$ be the number of runs
(out of our 400) that returned an optimal solution.
We can then roughly estimate
the probability of finding an optimal solution to be $k/400$,
and the probability that such a solution will {\em not} be found if
we perform another $N$ independent random runs is $((400-k)/400)^N$.
Table \ref{probtab} gives the probabilities for $1 \leq k \leq 15$ and
$N \in \{100,200,400,800,1600\}$.  Using this table, we can evaluate the
likely robustness of our heuristics, based on the instances with the 
smallest values of $k$ for each heuristic.
If we want to reduce the failure probability to, say, 0.1\%, we would
need to have observed a 400-iteration success count of 2 if we were going
to perform 1600 iterations, 4 if 800 iterations, 7 if 400 iterations,
14 if 200 iterations, and more than 15 if 100 iterations (actually, a
count of 27 would suffice in this case).
Given that the counts given by single 400-iteration runs give noisy
estimate of the true probabilities, it might be safer to inflate the
above count thresholds in practice.

To help us evaluate our four heuristics in this context,
Table \ref{heurprobtab}
gives the 15 smallest success counts for each, both for synthetic
and ISP-based instances, together with derived counts for an algorithm
{\dhs}$_{1H}$, which performs 200 runs each for {\dhs}$_1$ and {\dhs}$_H$,
and reports the best solution overall.
Note that the entries for the derived heuristic are not simply the averages of the values for {\dhs}$_1$ and {\dhs}$_H$ in the same row of the table,
since these may represent counts for different instances -- the hardest
instances for one of the two need not be the same as those for the other.
The counts tend to be higher for our ISP-based instances, since 15 represents
over 21\% of our 70 instances of that type, whereas it represents only
about 2.4\% of our 630 synthetic instances,

Based on the results in the table, one can conclude that {\greedy} is the
most robust for synthetic instances, with {\shs} almost as good.
Neither {\dhs}$_1$ nor {\dhs}$_H$ is competitive for such instances, and
their combination seems less robust than {\dhs}$_H$, the better of the two,
by itself.
In contrast, for the ISP-based instances, the combination is distinctly
more robust than either {\dhs}$_1$ or {\dhs}$_H$ alone, and also dominates
{\greedy}, although {\shs} is the best overall, having only one count
less than 25.
The corresponding instance, where {\shs} found an optimal solution only once in
400 times, is troublesome, however.
This suggests performing a hybrid of \shs\ with one of our other three heuristics
(again running each 200 times and taking the best).
If we blend with {\greedy}, the smallest counts are 61 (synthetic) and
6.0 (ISP-based).
Blending with {\dhs}$_1$ yields corresponding smallest counts of 34.5 and 9.0,
and blending with {\dhs}$_H$ yields smallest counts of 46.5 and 10.0.
The last choice would seem to be the best overall, as
400 total iterations should likely be enough to keep the
failure rate for all instances less than 0.1\%, and, depending on the
accuracy of our sample, 200 iterations might suffice.
All three blends, however, offer significant improvements in robustness over
any of the individual heuristics.

\begin{table}
\begin{center}
{
\begin{tabular}{r|r r r r r | r r r r r}
 & & \multicolumn{3}{c}{Synthetic Instances} & & & \multicolumn{3}{c}{ISP-Based Instances} & \\
k & {\shs} & {\dhs}$_1$ & {\dhs}$_H$ & {\dhs}$_{1H}$ & {\greedy} & {\shs} & {\dhs}$_1$ & {\dhs}$_H$ & {\dhs}$_{1H}$ & {\greedy}\\ \hline
 1 & 10 & 0 &  0 & 0.0 & 10 &   1 &   3 &   0 &   6.0 &  0 \\
 2 & 10 & 0 &  1 & 0.5 & 19 &  25 &   5 &   0 &   8.5 &  0 \\
 3 & 13 & 0 &  1 & 1.0 & 29 &  27 &  13 &   0 &  20.5 &  7 \\
 4 & 16 & 0 &  4 & 2.0 & 50 &  35 &  23 &   1 &  21.0 &  7 \\
 5 & 39 & 0 &  4 & 2.0 & 52 &  68 &  37 &   7 &  26.5 & 29 \\
 6 & 47 & 0 &  5 & 2.5 & 55 &  73 &  40 &  14 &  40.0 & 29 \\
 7 & 52 & 0 &  5 & 3.5 & 61 &  80 &  44 &  28 &  41.5 & 33 \\
 8 & 57 & 0 &  6 & 3.5 & 65 &  91 &  49 &  30 &  44.0 & 35 \\
 9 & 57 & 0 &  8 & 4.0 & 66 &  95 &  60 &  44 &  58.0 & 53 \\
10 & 60 & 0 &  8 & 4.0 & 69 & 101 &  67 &  46 &  68.0 & 57 \\
11 & 64 & 0 &  9 & 4.5 & 72 & 110 &  68 &  56 &  79.5 & 66 \\
12 & 65 & 0 &  9 & 5.0 & 74 & 136 &  80 &  68 &  87.0 & 85 \\
13 & 65 & 0 & 10 & 5.0 & 79 & 139 & 100 &  93 & 100.0 & 88 \\
14 & 67 & 0 & 10 & 5.5 & 80 & 163 & 107 & 100 & 100.0 & 89 \\
15 & 67 & 0 & 10 & 6.0 & 82 & 177 & 123 & 100 & 103.5 & 98 \\ \hline

\end{tabular}
}
\caption{The 15 smallest numbers of optimal solutions found in 400
independent runs over synthetic and ISP-based instances.}\label{heurprobtab}
\end{center}
\vspace{-.25in}
\end{table}

\subsubsection{Challenging {\lb}: Equal-Weight Edges}

Given the fact that our {\lb} code produced hitting sets that were optimal
solutions for all 700 instances in our set-disjoint testbed, one might
wonder why we should bother with heuristics at all for the set-disjoint case.
Perhaps instances where the {\lb} is smaller than the optimal solution value,
or where the {\lb} solution is otherwise infeasible,
simply do not arise in practice, possibly because of the
restricted topologies in real world LAN/WANs and ISP networks, or because
edge weights are optimized to manage traffic.
To investigate the latter possibility, we repeated our experiments with
all the weights set to 1 -- not a very good choice for traffic management,
but certainly a natural default, and one that yields a major change in
the set of triples in the derived ``cover-by-pairs'' instances, as observed
in Section \ref{propertysect}.

With these instances, the hitting sets constructed by our {\lb} code were still
legal and hence optimal covers for all 70 of our ISP-based instances.
However, for 51 of our 630 synthetic instances, the hitting set was
{\em not} a legal cover, and for nine of these, the best solution found
by our 400-iteration {\shs} run was still larger than the computed lower bound
(although never by more than 2.1\%).
These were all (C1,F1) instances, five with $|V|=984$, three with $|V|=558$,
and one with $|V|=250$, and in all nine cases, the {\lb} nevertheless continued
to equal the optimal solution size.
The optimal solutions for these nine were {\em not} found by any of our
400-iteration runs of {\dhs}$_1$ or {\dhs}$_H$.
As in our experiments on the synthetic instances in our standard testbed,
it was {\greedy} that proved to have the best synergies with {\shs}.
Our 400-iteration runs of {\greedy} found solutions
matching the {\lb} for three of the 984's, for one of the 558's, and for the 250.
For the remaining four instances, we needed to increase the number of iterations.
For the two remaining 984's, 4000-iteration runs of {\shs} found solutions
matching the {\lb}.
For one of the two remaining 558's, a 100,000-iteration run of {\greedy}
found a lower-bound matching solution (three times),
although a 200,000-iteration of {\shs} failed to find any.
For the other remaining 558, a 100,000-iteration run of {\shs}
found a lower-bound matching solution (twice),
although a 200,000-iteration run of {\greedy} failed to find any.

Thus the track record of {\lb} equaling the optimal solution value for
``real world'' WAN and ISP topologies remains unblemished.
We can conclude, however, that we should not count on
our {\lb} code always producing feasible optimal solutions,
and that {\shs} and {\greedy}
remain valuable, and complementary, backups.

\begin{table}[t]
\begin{center}

\bigskip
{
\begin{tabular}{c|r @{\ \ } r @{\ \ \ } r @{\ \ \ } r @{\ \ \ } r @{\ \ \ } r @{\ \ \ } r @{\ \ \ } r|  r  r  r }
\multicolumn{9}{c}{\rule[-.2cm]{0cm}{.6cm}\ \ \ \ \ \ \ \ Synthetic Instances} & \multicolumn{3}{c}{\rule[-.2cm]{0cm}{.6cm}ISP-Based Instances}\\ \hline
Class & 50 & 100 & 190 & 220 & 250 & 300 & 558 & 984 & R100a & R100b & R200\\ \hline
\multicolumn{12}{c}{\rule[-.2cm]{0cm}{.6cm}Number of Instances Solved in Less Than 24 Hours by MIP}\\ \hline
(C1,F1) & 10 & 10 & 10 & 10 & 10 &  - &  - &  - & 1 & 1 & 1\\
(C2,F1) & 10 & 10 &  9 &  - &  - &  - &  - &  - & 5 & 5 & -\\
(C4,F1) & 10 & 10 &  8 &  - &  - &  - &  - &  -  & 5 & 5 & 3\\
(C8,F1) & 10 & 10 &  9 &  - &  - &  - &  - &  -  & 5 & 5 & 5\\ \cline{10-12}
(C2,F2) & 10 & 10 & 10 & 10 & 10 & 10 &  - &  -  \\
(C4,F4) & 10 & 10 & 10 & 10 & 10 & 10 & 10 & 10 \\
(C8,F8) & 10 & 10 & 10 & 10 & 10 & 10 & 10 & 10 \\ \hline
\multicolumn{12}{c}{\rule[-.2cm]{0cm}{.6cm}Worst Running Times (in Seconds) for Instances of Each Completed Class} \\ \hline
(C1,F1) & 21 &   9 & 28 & 19,362 & 78,660 &  - &  - &  - & 0 & 0 & 146\\
(C2,F1) & 18 & 212 &  - &  - &  - &  - &  - &  - &  17 &  21 & -\\
(C4,F1) & 35 & 818 &  - &  - &  - &  - &  - &  - & 335 & 474 & -\\
(C8,F1) & 25 & 321 &  - &  - &  - &  - &  - &  - &  50 & 644 & 40,739\\ \cline{10-12}
(C2,F2) &  0 &   0 &  0 &  0 &  4 & 13 &  - &    - \\
(C4,F4) &  0 &   0 &  0 &  0 &  0 &  0 &  0 & 8106 \\
(C8,F8) &  0 &   0 &  0 &  0 &  0 &  0 &  0 &    0 \\ \hline
\end{tabular}
}
\caption{Results for our {\mip} optimization code on vertex-disjoint instances.}\label{table:pathopt}
\end{center}
\vspace{-.25in}
\end{table}

\subsection{Accuracy and Running Times for the Path-Disjoint Case}
\subsubsection{Optimization via our {\mip} Code}
Table \ref{table:pathopt} summarizes our optimization results for the
path-disjoint versions of our instances, and shares the format of
Table \ref{lb+opttab}, which covered the set-disjoint versions.
Once again, we report the number of instances of each class that were
solved to optimality within 24 hours, and, for each class with all instances
so solved, the {\em worst} running time encountered, as a measure of
robustness.
Note that we were able to solve about the same number of instances of each
type within the 24-hour time bound as we did in the set-disjoint case, with
the worst time for each type usually slightly worse.
In more extended runs, we were able to compute the optimum for the
one missing $|V|=190$, (C2,F1) synthetic instance (in 39 hours),
and the two missing R200, (C4,F1) ISP-based instances (in 49 and 77 hours).

The results for the arc-disjoint versions of our path-disjoint instances were
roughly the same as reported in Table \ref{table:pathopt} for
the vertex-disjoint versions.  We were able to solve essentially the same
types of instances as before, with times that were roughly equivalent.
The maximum for each class tended to be shorter for the arc-disjoint
versions, although not so often as to provide a statistically meaningful
distinction. 

\subsubsection{Heuristic Results}

In evaluating heuristics for the path-disjoint case, we suffer from
several limitations in comparison to the set-disjoint case.
First, the {\lb} no longer applies, so we do not have a good standard
of comparison for those instances that our {\mip} approach failed to
solve.
Second, we cannot use {\shs} or either of our {\dhs} heuristics,
which only apply to the set-disjoint case, and so only have {\gr4} and
its higher-iteration variants (including \genetic) to consider.

Fortunately, {\gr4} seems to do very well.
It finds the optimal solution on all the instances for which that is
known, and we have not been able to find any better solutions than
the ones it provides on any of the instances for which the optimum
is {\em not} known.
In particular, we performed 1000-generation runs of our {\genetic}
algorithm, on each of these instances, and only matched the
{\gr4} solutions on each.

We had hopes that, if anything could improve on {\gr4}, then {\genetic}
would, based on results we obtained in the set-disjoint case.
There, {\genetic} found optimal solutions for two of the
three ISP-based instances where {\gr4} failed to find the optimum
(the third was R5500, which was far too large for {\genetic} to handle).
Most impressively, for R1000 it found the optimum, whereas the
best of even 100,000 independent runs of {\greedy} was over three times optimal.
The failure of {\genetic} to find any improvements in the path-disjoint
case thus at least suggests that our {\gr4} path-disjoint results might
indeed be optimal across the board.

The results were also robust.
For each of our 720 synthetic instances, the best solution found was
found at least 20 times out of 400.
For each of our 69 ISP-based instances other than R5500, the best solution
found was found at least four times out of 400 (and for all but one, the count was
34 or greater).
The best solution for R5500 was found 72 times out of 80.

For the arc-disjoint versions of our path-disjoint instances, the results
were again similar, both as to accuracy and robustness.
For every instance where the optimum was known, {\gr4} found it.
For all 720 synthetic instances, the best solution found was found at least
34 times out of 400.
For all 69 ISP-bases instances other than R5500,
the best solution found was found at least
six times out of 400 (and for all but one, the count was 41 or greater).
The instance with the low count was the same for both vertex- and arc-disjoint
versions of the problem.
The best solution for R5500 was found 79 times out of 80.

\begin{table}
\begin{center}
{
\begin{tabular}{r|r @{\ \ \ \ } r @{\ \ \ } r @{\ \ \ } r @{\ \ \ } r @{\ \ \ } r @{\ \ \ \ \ } r @{\ \ \ \ } r r}
Algorithm  & R200 & R300 & R500  & R700 & R1000 & R1400 & R3000 & R5500*\\ \hline
\multicolumn{9}{c}{\rule[-.2cm]{0cm}{.6cm}Linux Machine(s), Compiled without Optimization} \\ \hline
Path-disjoint Triple Gen &  1.0 & 1.2 & 6.9 & 8.2 &  56 &  72 &  558 &  17,400\\
Arc-disjoint Triple Gen  &  0.7 & 0.6 & 4.9 & 5.4 &  33 &  39 &  262 &  11,200\\
Set-disjoint Triple Gen  &  0.1 & 0.2 & 0.9 & 1.4 &   4 &   8 &   41 &  1,800\\ \hline
Path-disjoint \gr4   & 19.7 & 22.6 & 154.8 & 130.3 & 715 & 664 & 3612 & 79,000\\
Arc-disjoint \gr4    & 20.3 & 20.3 & 154.3 &  96.1 & 767 & 548 & 2801 & 68,500\\
Set-disjoint \gr4    & 11.8 & 14.0 & 116.1 &  90.9 & 578 & 704 & 2171 & 70,100\\ \hline
\multicolumn{9}{c}{\rule[-.2cm]{0cm}{.6cm}MacBook Pro, Compiled with Optimization} \\ \hline
Path-disjoint Triple Gen &  0.7 & 0.7 & 2.7 & 3.5 & 17 & 24 & 347  &  -\\
Arc-disjoint Triple Gen  &  0.6 & 0.6 & 1.5 & 1.8 &  9 & 10 &  95 &  -\\
Set-disjoint Triple Gen  &  0.5 & 0.5 & 0.8 & 0.9 &  2 &  3 &  14 &  -\\ \hline
Path-disjoint \gr4 &  4.6 &  5.4 & 38.3 & 35.8 & 239 & 224 & 1865 & -\\
Arc-disjoint \gr4  &  4.4 &  5.2 & 36.3 & 32.7 & 254 & 198 & 1360 & -\\
Set-disjoint \gr4  &  3.2 &  4.1 & 30.3 & 25.9 & 188 & 198 &  866 & -\\ \hline

\end{tabular}
}
\caption{Running times in seconds for triple construction and {\gr4} runs on
path-disjoint instances (both vertex- and arc-disjoint versions), with results
for set-disjoint instances included for comparison.
*For R5500, all Linux times except those for {\lb} are on the slower
of our two machines, because of memory constraints, and for 80-iteration runs.
}\label{heurtimetab2}
\end{center}
\vspace{-.25in}
\end{table}

As to running times, see Table \ref{heurtimetab2}, which mimics Table
\ref{heurtimetab} in covering the maximum-customer versions
of our ISP-based instances with more than 200 vertices, and both the
times for triple-generation and the overall {\gr4} times.
As before, we present running times for unoptimized code on our Linux
machines and optimized code on our MacBook Pro.
We list times for both the vertex- and arc-disjoint versions of our
path-disjoint instances, and, for comparison purposes, copy from
Table \ref{heurtimetab} the
corresponding times for our set-disjoint instances.

Recalling from Table \ref{realtripletab} that the number of triples
associated with an instances increases as one goes from set-disjoint to
vertex-disjoint to arc-disjoint instances, one might expect running
times to behave similarly, but this is not the case.
For triple generation, the arc-disjoint triples take less time to generate
than the vertex-disjoint ones.
The explanation for relative triple-generation running times
is algorithmic.
As described in Section \ref{section:triples}, the path-disjoint
constructions should take longer than the set-disjoint ones because they
need an extra network-flow step, and the vertex-disjoint constructions
take longer than the arc-disjoint ones because they need
to add an extra edge for each vertex in order to insure
vertex-disjointness as well as arc-disjointness.
Note, however, that, as with the set-disjoint case, the time
for vertex- and arc-disjoint triple generation again may well be growing
at slightly less than the cubic rate of our worst-case analysis.

Our overall running times also do not precisely track triple counts
The times for the path-disjoint versions of the instances are
indeed significantly slower than for the set-disjoint versions (except in the
case of R1400), but once again the arc-disjoint times tend to be faster than
those for the vertex-disjoint versions.
Further study will be needed to understand these discrepancies.

\subsection{Cost Reduction}

\begin{table}
\begin{center}
\begin{tabular}{c|c | r@{\ \ }|c @{\ \ } c  c @{\ \ }|c @{\ \ } c c @{\ \ } c| c c}
& & &\multicolumn{3}{c|}{Set-Disjoint} & \multicolumn{6}{c}{Path-Disjoint} \\ \hline
&& & & & & \multicolumn{2}{c}{Vertex-Disjoint} & \multicolumn{2}{c|}{Arc-Disjoint} & \multicolumn{2}{c}{Unbounded} \\ 
$|V|$ & Class & $|C|$ &\multicolumn{2}{c}{\greedy} & {\opt} & \multicolumn{2}{c}{\greedy} & \multicolumn{2}{c|}{\greedy} &  \multicolumn{2}{c} {\dsj} \\ \hline
% Actual averages:
% 984 159.4 159.4 148.1 148.1 104.8
% 492 118.3 118.3 114.0	113.6  94.2
% 492 126.4 126.4 117.7 117.5  93.3
% 246  93.3  93.3  92.3  92.0  84.2
% 246  98.7  98.7  94.4  94.2  82.0
% 123  68.2  68.2  67.8  67.8  64.3
% 123  71.6  71.6  70.3  70.2  63.0
& (C1,F1) & 984 & 16.2 & (83.8) & 16.2 & 15.1 & (7.1) & 15.1 & (0.0) & 10.7 & (29.2) \\
& (C2,F1) & 492 & 24.0 & (76.0) & 24.0 & 23.2 & (3.6) & 23.1 & (0.4) & 19.1 & (17.1) \\
& (C2,F2) & 492 & 25.7 & (74.3) & 25.7 & 23.9 & (6.9) & 23.9 & (0.2) & 19.0 & (20.6) \\
984 & (C4,F1) & 246 & 37.9 & (62.1) & 37.9 & 37.5 & (1.1) & 37.4 & (0.3) & 34.2 & ( 8.5) \\
& (C4,F4) & 246 & 40.1 & (59.9) & 40.1 & 38.4 & (4.4) & 38.3 & (0.2) & 33.3 & (13.0) \\
& (C8,F1) & 123 & 55.4 & (44.6) & 55.4 & 55.1 & (0.6) & 55.1 & (0.0) & 52.3 & ( 5.2) \\
& (C8,F8) & 123 & 58.2 & (41.8) & 58.2 & 57.2 & (1.8) & 57.1 & (0.1) & 51.2 & (10.3) \\
& & & & & & & & & &\\
% 558 89.5 89.5 83.0 83.0 59.5
% 279 68.7 68.7 65.8 65.7 54.7
% 279 72.1 72.1 68.3 68.2 53.6
% 140 52.9 52.9 52.0 52.0 48.2
% 140 59,9 59,9 57.7 57.5 48.5
%  70 39.4 39.4 39.3 39.3 37.1
%  70 42.0 42.0 41.4 41.1 38.3
& (C1,F1) &558 & 16.0 & (84.0) & 16.0 & 14.9 & ( 7.3) & 14.9 & ( 0.0) & 10.7 & (28.3) \\
& (C2,F1) & 279 & 24.6 & (75.4) & 24.6 & 23.6 & ( 4.2) & 23.5 & ( 0.2) & 19.6 & (16.7) \\
& (C2,F2) & 279 & 25.8 & (74.2) & 25.8 & 24.5 & ( 5.3) & 24.4 & ( 0.1) & 19.2 & (21.4) \\
558 & (C4,F1) & 140 & 37.8 & (62.2) & 37.8 & 37.1 & ( 1.7) & 37.1 & ( 0.0) & 34.4 & ( 7.3) \\
& (C4,F4) & 140 & 42.1 & (57.9) & 42.1 & 41.2 & ( 2.2) & 41.1 & ( 0.3) & 34.6 & (15.7)) \\
& (C8,F1) &  70 & 56.3 & (43.7) & 56.3 & 56.1 & ( 0.3) & 56.1 & ( 0.0) & 53.0 & ( 5.6) \\
& (C8,F8) &  70 & 60.0 & (40.0) & 60.0 & 59.1 & ( 1.4) & 58.7 & ( 0.7) & 54.7 & ( 6.8) \\
& & & & & & & & & &\\
% 300 72.1 72.1 69.1 69.1 44.6
% 150 50.5 50.5 48.9 48.9 37.8
% 150 54.1 54.1 52.2 52.2 37.0
%  75 36.8 36.8 35.8 35.8 30.8
%  75 38.1 38.1 37.5 37.4 30.4
%  38 24.5 24.5 24.2 24.2 22.2
%  38 26.0 26.0 26.0 26.0 21.8
& (C1,F1) & 300 & 24.0 & (76.0) & 24.0 & 23.0 & ( 4.2) & 23.0 & ( 0.0) & 14.9 & (35.5) \\
& (C2,F1) & 150 & 33.7 & (66.3) & 33.7 & 32.6 & ( 3.2) & 32.6 & ( 0.0) & 25.2 & (22.7) \\
& (C2,F2) & 150 & 36.1 & (63.9) & 36.1 & 34.8 & ( 3.5) & 34.8 & ( 0.0) & 24.7 & (29.1) \\
300 & (C4,F1) & 75 & 49.1 & (50.9) & 49.1 & 47.7 & ( 2.7) & 47.7 & ( 0.0) & 41.1 & (14.0) \\
& (C4,F4) & 75 & 50.8 & (49.2) & 50.8 & 50.0 & ( 1.6) & 49.9 & ( 0.3) & 40.5 & (18.7) \\
& (C8,F1) &  38 & 64.5 & (35.5) & 64.5 & 63.7 & ( 1.2) & 63.7 & ( 0.0) & 58.4 & ( 8.3) \\
& (C8,F8) &  38 & 68.4 & (31.6) & 68.4 & 68.4 & ( 0.0) & 68.4 & ( 0.0) & 57.4 & (16.2) \\
& & & & & & & & & &\\
% 190 54.6 54.6 52.7 52.7 39.9
%  95 38.3 38.3 37.6 37.6 31.7
%  95 39.6 39.6 38.2 38.1 31.5
%  48 26.8 26.8 26.5 26.5 24.8
%  48 26.6 26.6 26.1 26.1 22.9
%  24 16.1 16.1 16.0 16.0 14.5
%  24 17.0 17.0 17.0 17.0 14.7
& (C1,F1) & 190 & 28.7 & (71.3) & 28.7 & 27.7 & ( 3.5) & 27.7 & ( 0.0) & 21.0 & (24.3) \\
& (C2,F1) & 95 & 40.3 & (59.7) & 40.3 & 39.6 & ( 1.8) & 39.6 & ( 0.0) & 33.4 & (15.7) \\
& (C2,F2) & 95 & 41.7 & (58.3) & 41.7 & 40.2 & ( 3.5) & 40.1 & ( 0.3) & 33.2 & (17.3) \\
190 & (C4,F1) & 48 & 55.8 & (44.2) & 55.8 & 55.2 & ( 1.1) & 55.2 & ( 0.0) & 51.7 & ( 6.4) \\
& (C4,F4) & 48 & 55.4 & (44.6) & 55.4 & 54.4 & ( 1.9) & 54.4 & ( 0.0) & 47.7 & (12.3) \\
& (C8,F1) &  24 & 67.1 & (32.9) & 67.1 & 66.7 & ( 0.6) & 66.7 & ( 0.0) & 60.4 & ( 9.4) \\
& (C8,F8) &  24 & 70.8 & (29.2) & 70.8 & 70.8 & ( 0.0) & 70.8 & ( 0.0) & 61.2 & (13.5) \\
& & & & & & & & & &\\
% 100 24.1 24.1 23.6 23.6 14.3
%  50 17.1 17.1 17.0 17.0 11.7
%  50 17.7 17.7 17.2 17.2 11.3
%  25 11.7 11.7 11.6 11.6  9.6
%  25 12.3 12.3 12.1 12.1  9.3
%  13  8.0  8.0  8.0  8.0  6.3
%  13  8.3  8.3  8.1  8.0  6.2
& (C1,F1) & 100 & 24.1 & (75.9) & 24.1 & 23.6 & ( 2.1) & 23.6 & ( 0.0) & 14.3 & (39.4) \\
& (C2,F1) & 50 & 34.2 & (65.8) & 34.2 & 34.0 & ( 0.6) & 34.0 & ( 0.0) & 23.4 & (31.2) \\
& (C2,F2) & 50 & 35.4 & (64.6) & 35.4 & 34.4 & ( 2.8) & 34.4 & ( 0.0) & 22.6 & (34.3) \\
100 & (C4,F1) & 25 & 46.8 & (53.2) & 46.8 & 46.4 & ( 0.9) & 46.4 & ( 0.0) & 38.4 & (17.2) \\
& (C4,F4) & 25 & 49.2 & (50.8) & 49.2 & 48.4 & ( 1.6) & 48.4 & ( 0.0) & 37.2 & (23.1) \\
& (C8,F1) & 13 & 61.5 & (38.5) & 61.5 & 61.5 & ( 0.0) & 61.5 & ( 0.0) & 48.5 & (21.3) \\
& (C8,F8) & 13 & 63.8 & (36.2) & 63.8 & 62.3 & ( 2.4) & 61.5 & ( 1.2) & 47.7 & (22.5) \\ \hline
\end{tabular}
\caption{Average percentages of customer vertices in algorithmically
generated covers for all seven classes of our $|V| \in \{984, 558, 300, 190, 100\}$ synthetic
instances.
Entries in parentheses are the percentage reduction in cover size
from the results in the preceding column.
}\label{coverreduction}
\end{center}
\vspace{-.25in}
\end{table}

In this section we consider the savings our heuristics can provide
over simply choosing the cover consisting
of the set $C$ of all customer vertices, which is always a feasible solution,
and would be the default solution for network path monitoring if we did not use our
proposed monitoring scheme.

We first consider the results for our synthetic instances.
Table \ref{coverreduction} presents our results for the seven classes of
instances for each value of $|V| \in \{984, 558, 300, 190, 100\}$.
(For space reasons, we omit the results for
$|V| \in \{250, 220, 50\}$, but they do not evince any strikingly different
behavior.)
For each algorithm and instance class,
we give both the average percentage of customers that are in
the covers produced by the algorithm, and (in parentheses) the percentage
reduction over the results in the column to the left.
Here ``{\greedy}'' stands for {\gr4}, the algorithm that takes the best of
400 independent runs of {\greedy}.
``Opt'' stands for the cover produced by our {\lb} code for set-disjoint instances,
which for these instances was always optimal.
``DP'' stands for the optimum when we do not require our
vertex-disjoint paths to be shortest paths, as computed
by the linear-time dynamic programming algorithm described in Section \ref{sec:trees}.
This value is not relevant to either of our applications, but does provide a
lower bound on what is possible in the path-disjoint case.

These results clearly show that using our scheme for
disjoint path monitoring offers a substantial
reduction in the number of vertices needed to cover all the customers.
In the set-disjoint case, the reduction is typically by a factor of 4 or more
for the (C1,F1) instances, in which all vertices are customers, and grows to a
factor of 6 when $|V| \in \{558, 984\}$.
As the proportion of vertices that are customers is reduced for a given value
of $|V|$, the savings
decreases, but is always more than 29\%.
Perhaps not surprisingly, when not all vertices are customers, the reduction
is greater in the (C$k$,F1) classes, where all vertices are potential facility
locations, than in the (C$k$,F$k$) classes, where the only potential facility
locations are the customers themselves.
Note that, as remarked before, {\gr4} finds optimal set-disjoint solutions
in all cases, so we omit
the parenthetical ``percentage reduction'' column for {\opt}.

Going from the set-disjoint to the path-disjoint instances requiring
vertex-disjointness typically offers a significant reduction (up to 7\% in
the (C1,F1) classes with $|V| \in \{558, 984\}$),
although the percentage reduction declines with $|V|$ and $|C|$.
Only requiring arc-disjoint paths often yields a small additional improvement,
although typically less than 1\%.
Allowing unbounded vertex-disjoint paths would allow a much more significant
reduction, up to almost 40\% in one case, and typically more for the
(C$k$,F$k$) classes, $k \geq 2$, than in the corresponding (C$k$,F1) classes,
although the actual sizes of the covers still remain lower in the latter case.

\begin{table}
\begin{center}
\begin{tabular}{c | c|r @{\ \ } c r @{\ \ } c|r @{\ \ } c r @{\ \ } c| c c}
& &\multicolumn{4}{c|}{Set-Disjoint} & \multicolumn{6}{c}{Path-Disjoint} \\ \hline
& Approx & & & & & \multicolumn{2}{c}{Vertex-Disjoint} & \multicolumn{2}{c|}{Arc-Disjoint} & \multicolumn{2}{c}{Unbounded} \\ 
Instance & $|C|$ &\multicolumn{2}{c}{\greedy} & \multicolumn{2}{c|}{\opt} & \multicolumn{2}{c}{\greedy} & \multicolumn{2}{c|}{\greedy} &  \multicolumn{2}{c} {\dsj} \\ \hline
% Actual averages:
% 3211 1212 545 90 90 18
% 1540   27  27 20 20  2
%  938  134 134 30 30  3
%  911  323  92 35 35  8
%  471   23  23 13 13  3
%  531   74  74 66 66 10
%  222    9   9  8  8  6
%  225   17  17 15 15  5
%  129   14  14 10 10  6
%  115   18  18 16 16 12
R5500 & 3200 & 37.7 & (62.3) & 17.0 & (55.0) &  2.8 & (83.5) &  2.8 & ( 0.0) &  0.6 & (80.0)\\
R3000 & 1500 &  1.8 & (98.2) &  1.8 & ( 0.0) &  1.3 & (25.9) &  1.3 & ( 0.0) &  0.1 & (90.0)\\
R1400 &  900 & 14.3 & (85.7) & 14.3 & ( 0.0) &  3.2 & (77.6) &  3.2 & ( 0.0) &  0.3 & (90.0)\\
R1000 &  900 & 35.5 & (64.5) & 10.1 & (71.5) &  3.8 & (62.0) &  3.8 & ( 0.0) &  0.9 & (77.1)\\
 R700 &  500 &  4.9 & (95.1) &  4.9 & ( 0.0) &  2.8 & (43.5) &  2.8 & ( 0.0) &  0.6 & (76.9)\\
 R500 &  500 & 13.9 & (86.1) & 13.9 & ( 0.0) & 12.4 & (10.8) & 12.4 & ( 0.0) &  1.9 & (84.8)\\
 R300 &  200 &  4.1 & (95.9) &  4.1 & ( 0.0) &  3.6 & (11.1) &  3.6 & ( 0.0) &  2.7 & (25.0)\\
 R200 &  200 &  7.6 & (92.4) &  7.6 & ( 0.0) &  6.7 & (11.8) &  6.7 & ( 0.0) &  2.2 & (66.7)\\
R100a &  100 & 10.9 & (89.1) & 10.9 & ( 0.0) &  7.8 & (28.6) &  7.8 & ( 0.0) &  4.7 & (40.0)\\
R100b &  100 & 10.9 & (89.1) & 10.9 & ( 0.0) &  7.8 & (28.6) &  7.8 & ( 0.0) &  4.7 & (40.0)\\ \hline
\end{tabular}
\caption{Average percentages of customer vertices in algorithmically
generated covers for our 10 ISP-based instances.
Entries in parentheses are the percentage reduction in cover size
from the results in the preceding column.
}\label{Rcoverreduction}
\end{center}
\vspace{-.25in}
\end{table}

The results for our ISP-based instances are presented in Table \ref{Rcoverreduction},
where we only include the (C1,F1) versions of those instances where we did not
know the types of routers and so could not deduce what the actual sets $F$ and $C$
should be.
The resulting value $|C|/|V| = 1$ is reasonably close to the corresponding
values for the instances where
we {\em can} estimate the actual sets $F$ and $C$; for these the ratio ranges from
about 0.6 to 0.9.
Because the instances where we {\em do} have this information are
proprietary, the column labeled by ``$|C|$'' only lists the value
of $C$ rounded to the nearest 100.
The average cover sizes reported, however, are given as percentages of
the true value of $|C|$.
Note that in this table we do include a parenthetical improvement column
for {\opt} over {\gr4}, since here there {\em are} improvements for two of
the instances.

As we can see, the effectiveness of our disjoint path monitoring scheme
is even more substantial for these instances than for our synthetic ones.
The best point of comparison is to the (C1,F1) variants of the latter, where the
smallest ratio of set-disjoint cover size to $|C|$ was roughly 16\%.
Here the {\em largest} percentage is 17\%, and four of the ten are less
than 10\%, with the percentage for R3000 being just 1.8\%.
Moreover, going to the path-disjoint case now offers major improvements,
in comparison to the at-most 7.3\% improvement for our synthetic instances.
Here the range is from 10.8\% to 83.5\%, and now all but one of our instances
have covers with size less than 8\% of $|C|$.

Allowing arc-disjoint paths offers no further improvement.
However, if one could consider arbitrarily long vertex-disjoint paths, even more
substantial reductions are possible.
Now half the instances have covers with size less than 1\% of $|C|$.
This should not be a surprise, however.
As we already observed in Section \ref{propertysect},
typically 95\% or more of the vertices in our ISP-based
instances are in a single large 2-connected component.
By the definition of 2-connectedness, {\em any} two vertices
in such a component will constitute a cover.
Additional vertices are only needed to handle the few
additional small components hanging off the large one.

\section{Further Results and Directions}\label{section:further}
The problems described here can be generalized in a variety of
directions, for two of which we already have some preliminary results.

\begin{enumerate}

\item Variants of SDFL and PDFL in which we omit the requirement that
every customer vertex also be a potential facility location (i.e., that $C \subseteq F$).
In our monitoring application, it might well be the case that some of
the customer vertices are housed in locations that do not have enough
free physical space to house our monitoring equipment, or are in leased
space into which we do not have permission to add additional equipment.

\item Variants in which for each vertex a given nonnegative cost must be paid if that
vertex is used as a facility, and our goal is to minimize total cost.
Note that the second variant includes the first as a special case, since
we can simply set the cost of each member of $F$ to be 1 and the cost of all
other vertices to be $|F|+1$.

\end{enumerate}

Most of our algorithms and lower bounds carry over more-or-less directly
to both types of generalization, with the two exceptions being
the linear-time algorithms of Section \ref{treetheo} and Section \ref{sec:pdlb}
for computing the optimum covers in the special case of trees
and our general lower bound for the path-disjoint case.
Linear-time algorithms still exist for both of these tasks, but the simple combinatorial
characterizations that previously sufficed must now be replaced by dynamic programs
of some complexity.
Our other algorithms and lower bounds carry over with almost no change for the
first type of generalization.
For the second type, the integer programs for computing the optimum and for computing
the Hitting Set lower bound must change their objective functions so that the
variables $x_f$ are multiplied by the corresponding costs.
And the various cover-constructing heuristics must modify their greedy choices
by taking costs into account, for instance by choosing the $f \in F$ which yields
the highest ratio (increase in coverage)/(cost of $f$).

One question that arises for the first type of variant that was not an issue
in our original version of the problems is the question of feasibility.
In the original version, where $C \subseteq F$, there was always a feasible
cover, mainly the set $C$ itself, with each customer covering itself.
When not all customers are potential facility locations, it may be that {\em no}
cover exists.
Thus the first question one must ask is whether such a cover exists.
This will be true if and only if the full
set $F$ is a valid cover, a property that can easily be checked
in $O(n^2m)$ tine by generating
the corresponding SCP problem and verifying that there is a valid triple
for each customer.

We leave the empirical testing of our algorithms for these generalizations of
SDFL and PDFL for further research.
At present the variety of potential instance classes is too large for any
one or two choices to provide meaningful insights, although future research
into potential applications may identify particular instance classes whose
practical relevance justifies such study.

\medskip
\noindent
{\bf {\Large Acknowledgment}}

\smallskip
\noindent
David S. Johnson, the lead author of this paper, passed away on 
March 8, 2016.
David's co-authors dedicate this paper to his memory.
The authors thank David Applegate for helpful discussions and in particular
for simplifying the statement and proof of Theorem \ref{treetheo}.
The authors also thank Rodrigo Toso for implementing the version of 
\genetic\ used in the experiments.

%\input{conclusions.tex}
% @@dsj WARNING: FIX ``and and'' in acknowledgment.
 
%[@@ Conclusions?]

\bibliographystyle{plainnat}
%\bibliography{sc,vmscope,genmon,complex,gamon}
\bibliography{journal}

\end{document}